\providecommand{\U}[1]{\protect\rule{.1in}{.1in}}
\newtheorem{lemma}{Lemma}
\newtheorem{proposition}{Proposition}
\newtheorem{remark}{Remark}
\newenvironment{proof}[1][Proof]{\noindent\textbf{#1.} }{\ \rule{0.5em}{0.5em}}
\begin{document}
\preprint{ }
\title[Resource theory of asymmetric distinguishability]{Resource theory of asymmetric distinguishability}
\author{Xin Wang}
\email{wangxinfelix@gmail.com}
\affiliation{Joint Center for Quantum Information and Computer Science, University of
Maryland, College Park, Maryland 20742, USA}
\affiliation{Institute for Quantum Computing, Baidu Research, Beijing 100193, China}

\author{Mark M. Wilde}
\email{mwilde@lsu.edu}
\affiliation{Hearne Institute for Theoretical Physics, Department of Physics and Astronomy,
Center for Computation and Technology, Louisiana State University, Baton
Rouge, Louisiana 70803, USA}

\keywords{}
\pacs{}

\begin{abstract}
This paper systematically develops the resource theory of asymmetric
distinguishability, as initiated roughly a decade ago [K.~Matsumoto,
arXiv:1010.1030 (2010)]. The key constituents of this resource theory are
quantum boxes, consisting of a pair of quantum states, which can be
manipulated for free by means of an arbitrary quantum channel. We introduce
bits of asymmetric distinguishability as the basic currency in this resource
theory, and we prove that it is a reversible resource theory in the asymptotic
limit, with the quantum relative entropy being the fundamental rate of
resource interconversion. The distillable distinguishability is the optimal
rate at which a quantum box consisting of independent and identically
distributed (i.i.d.)~states can be converted to bits of asymmetric
distinguishability, and the distinguishability cost is the optimal rate for
the reverse transformation. Both of these quantities are equal to the quantum
relative entropy. The exact one-shot distillable distinguishability is equal
to the min-relative entropy, and the exact one-shot distinguishability cost is
equal to the max-relative entropy. Generalizing these results, the approximate
one-shot distillable distinguishability is equal to the smooth min-relative
entropy, and the approximate one-shot distinguishability cost is equal to the
smooth max-relative entropy. As a notable application of the former results,
we prove that the optimal rate of asymptotic conversion from a pair of
i.i.d.~quantum states to another pair of i.i.d.~quantum states is fully
characterized by the ratio of their quantum relative entropies.

\end{abstract}
\volumeyear{ }
\volumenumber{ }
\issuenumber{ }
\eid{ }
\date{\today}
\startpage{1}
\endpage{10}
\maketitle

%\tableofcontents

%\textbf{Significance statement}---Distinguishability plays a central role in all sciences. This paper adopts the perspective that distinguishability is a resource that can be quantified and interconverted by means of basic units called bits of distinguishability. Also, our paper quantifies distinguishability in the general setting of quantum mechanics. One key finding of our work is that relative entropy and its variants find fundamental operational meaning as the ultimate rates at which distinguishability can be distilled or diluted from quantum states. Our results give deeper insight into the role of relative entropy in general quantum resource theories, and they also unify prominent notions from physics, mathematical statistics, and information theory into a general resource-theoretic framework.

\section{Introduction}

Distinguishability plays a central role in all sciences. That is, the ability
to distinguish one possibility from another is what allows us to discover new
scientific laws and make predictions of future possibilities. In the process
of scientific discovery, we form a hypothesis based on conjecture, which is to
be tested against a conventional or null hypothesis by repeated trials or
experiments. With sufficient statistical evidence, one can determine which
hypothesis should be rejected in favor of the other. If the null hypothesis is
accepted, one can form alternative hypotheses to test against the null
hypothesis in future experiments.

What is essential in this approach is the ability to perform repeated trials.
Repetition allows for increasing the distinguishability between the two
hypotheses. A natural question in this context is to determine how many trials
are required to reach a given conclusion. If the two different hypotheses are
relatively distinguishable, then fewer trials are required to decide between
the possibilities. In this sense, distinguishability can be understood as a
\textit{resource}, because it limits the amount of effort that we need to
invest in order to make decisions.

One of the fundamental settings in which distinguishability can be studied in
a mathematically rigorous manner is statistical hypothesis testing. The basic
setup is that one draws a sample $x$\ from one of two probability
distributions $p\equiv\{p(x)\}_{x\in\mathcal{X}}$ or $q\equiv\{q(x)\}_{x\in
\mathcal{X}}$, with common alphabet $\mathcal{X}$, with the goal being to
decide from which distribution the sample $x$ has been drawn. Let $p$ be the
null hypothesis and $q$ the alternative. A Type~I error occurs if one decides
$q$ when the distribution being sampled from is in fact $p$, and a Type~II
error occurs if one decides $p$ when the distribution being sampled from is in
fact $q$. The goal of asymmetric hypothesis testing is to minimize the
probability of a Type~II\ error, subject to an upper bound constraint on the
probability of committing a Type~I error.

In the scientific spirit of repeated experiments, we can modify the above
scenario to allow for independent and identically distributed (i.i.d.)~samples
from either the distribution $p$ or $q$. One of the fundamental results of
asymptotic hypothesis testing is that, with a sufficiently large number of
samples, it becomes possible to meet any upper bound constraint on the
Type~I\ error probability while having the Type~II\ error probability decaying
exponentially fast with the number of samples, with the optimal error exponent
being given by the relative entropy \cite{SteinLemma,chernoff1956}:%
\begin{equation}
D(p\Vert q)=\sum_{x\in\mathcal{X}}p(x)\log_{2}[p(x)/q(x)].
\end{equation}
That is, there exists a sequence of schemes that can achieve this error
exponent for the Type~II\ error probability while making the Type~I\ error
probability arbitrarily small in the limit of a large number of samples. At the
same time, the strong converse property holds:\ any sequence of schemes that
has a fixed constraint on the Type~I error probability is such that its
Type~II\ error probability cannot decay any faster than the exponent $D(p\Vert
q)$. This gives a fundamental operational meaning to the relative entropy and
represents one core link between hypothesis testing and information theory
\cite{B74}, the latter being the fundamental mathematical theory of
communication \cite{bell1948shannon}.

Another perspective on the above process of decision making in hypothesis
testing, the \textit{resource-theoretic perspective} \cite{Mats10,M11}\ not
commonly adopted in the literature on the topic, is that it is a process by
which we \textit{distill} distinguishability from the original distributions
into a more standard form. That is, we can think of the distributions $p$ and
$q$ being presented as a black box or ordered pair $(p,q)$. Given a sample
$x\in\mathcal{X}$, we can perform a common transformation $\mathcal{T}%
:\mathcal{X}\rightarrow\{0,1\}$ that outputs a single bit, \textquotedblleft%
$0$\textquotedblright\ to decide $p$ and \textquotedblleft$1$%
\textquotedblright\ to decide $q$. The common transformation $\mathcal{T}$ can
even be stochastic. In this way, one transforms the initial box to a final box
as%
\begin{equation}
(p,q)\ \ \underrightarrow{\mathcal{T}}\ \ (p_{f},q_{f}),
\end{equation}
where $p_{f}\equiv\{p_{f}(y)\}_{y\in\left\{  0,1\right\}  }$ and $q_{f}%
\equiv\{q_{f}(y)\}_{y\in\left\{  0,1\right\}  }$ are binary distributions.
Then the probability of a Type~I error is $p_{f}(1)$, and the probability of a
Type~II\ error is $q_{f}(0)$. Since the goal is to extract or distill as much
distinguishability as possible, we would like for $q_{f}(0)$ to be as small as
possible given a constraint $\varepsilon\in\left[  0,1\right]  $ on $p_{f}(1)$
(i.e., $p_{f}(1)\leq\varepsilon$).

Once we have adopted this resource-theoretic approach to distinguishability,
it is natural to consider two other questions, the first of which is the
question of the \textit{reverse process} \cite{Mats10,M11}. That is, we would
like to start from initial binary distributions $p_{i}\equiv\{p_{i}%
(y)\}_{y\in\left\{  0,1\right\}  }$ and $q_{i}\equiv\{q_{i}(y)\}_{y\in\left\{
0,1\right\}  }$\ having as little distinguishability as possible, and act on
their samples with a common transformation $\mathcal{R}:\{0,1\}\rightarrow
\mathcal{X}$ in order to produce the distributions $p\equiv\{p(x)\}_{x\in
\mathcal{X}}$ and $q\equiv\{q(x)\}_{x\in\mathcal{X}}$, while allowing for a
slight error when reproducing $p$. That is, we would like to perform the
\textit{dilution} transformation%
\begin{equation}
(p_{i},q_{i})\ \ \underrightarrow{\mathcal{R}}\ \ (\tilde{p},q),
\label{eq:reverse-process}%
\end{equation}
where $\tilde{p}\equiv\{\tilde{p}(x)\}_{x\in\mathcal{X}}$ is a distribution
satisfying $d(p,\tilde{p})\leq\varepsilon$, for some suitable metric $d$ of
statistical distinguishability. In this way, we characterize the
distinguishability of $p$ and $q$ in terms of the least distinguishable
distributions $p_{i}$ and $q_{i}$ that can be diluted to prepare or simulate
$p$ and $q$, respectively. This dilution question is motivated by related
questions in the theory of quantum entanglement~\cite{BDSW96}.

The second, more general question is regarding the existence of a common
transformation $\mathcal{T}:\mathcal{X}\rightarrow\mathcal{Z}$ that converts
initial distributions $p$ and $q$ into final distributions $r\equiv
\{r(z)\}_{z\in\mathcal{Z}}$ and $t\equiv\{t(z)\}_{z\in\mathcal{Z}}$:%
\begin{equation}
(p,q)\ \ \underrightarrow{\mathcal{T}}\ \ (\tilde{r},t),
\label{eq:gen-trans-question}%
\end{equation}
where $\tilde{r}\equiv\{\tilde{r}(z)\}_{z\in\mathcal{Z}}$ is a distribution
satisfying $d(r,\tilde{r})\leq\varepsilon$. One can then ask about the rate or
efficiency at which it is possible to convert a pair of i.i.d.~distributions
to another pair of i.i.d.~distributions.

This resource-theoretic approach to distinguishability offers a unique and
powerful perspective on statistical hypothesis testing and distinguishability,
similar to the perspective brought about by the seminal work on the resource
theory of quantum entanglement \cite{BDSW96}, which has in turn inspired a
flurry of activity on resource theories in quantum information and beyond
\cite{CG18}. Although the reverse process in \eqref{eq:reverse-process} may
seem nonsensical at first glance (why would one want to dilute fresh water to
salt water? \cite{ieee2002bennett}), it plays a fundamental role in
characterizing distinguishability as a resource, as well as for addressing the
general question posed in \eqref{eq:gen-trans-question}. It is also natural
from a thermodynamic or physical perspective to consider reversibility and
cyclicity of processes. Another application for the reverse process is in
understanding the minimal resources required for simulation in various quantum
resource theories \cite{CG18}.

\section{Main results}

The main goal of this paper is to develop systematically the
resource-theoretic perspective on distinguishability, which was initiated in
\cite{Mats10,M11}. More precisely, the theory developed here is a
\textit{resource theory of asymmetric distinguishability}, given that
approximation is allowed for the first distribution in all of the
distillation, dilution, and general transformation tasks mentioned above. The
theory that we develop applies in the more general setting of \textit{quantum}
distinguishability, as it did in \cite{Mats10,M11}, in particular when the
distributions $p$ and $q$ are replaced by quantum states $\rho$ and $\sigma$,
respectively, and the common transformations allowed on a quantum box
$(\rho,\sigma)$ are quantum channels.

Some key findings of our work are as follows:

\begin{enumerate}
\item We introduce the fundamental unit or currency of this resource theory,
dubbed \textquotedblleft bits of asymmetric
distinguishability.\textquotedblright\ Then the distinguishability
distillation and dilution tasks amount to distilling bits of asymmetric
distinguishability from a box $(\rho,\sigma)$ and diluting bits of asymmetric
distinguishability to a box $(\rho,\sigma)$, respectively.

\item We formally define the exact one-shot distinguishability distillation
and dilution tasks, and we prove that the optimal number of bits of asymmetric
distinguishability that can be distilled from a box $(\rho,\sigma)$ is equal
to the min-relative entropy \cite{D09}\ (see
\eqref{eq:exact-distillable-distinguishability}), while the optimal number of
bits of asymmetric distinguishability that can be diluted to a box
$(\rho,\sigma)$ is equal to the max-relative entropy \cite{D09}\ (see
\eqref{eq:dist-cost}), giving both of these quantities fundamental operational
interpretations in the resource theory of asymmetric distinguishability.
%The
%operational interpretation of the Petz--R\'{e}nyi relative entropy of order
%zero in the resource theory of asymmetric distinguishability suggests that it
%should indeed be known as the \textquotedblleft min-relative
%entropy,\textquotedblright\ as it was originally dubbed in \cite{D09}.

\item We define the approximate one-shot distinguishability distillation and
dilution tasks, and we prove that the optimal number of bits of asymmetric
distinguishability that can be distilled from a box~$(\rho,\sigma)$ is equal
to the smooth min-relative entropy \cite{BD10,BD11,WR12}\ (see
\eqref{eq:approx-distill}), while the optimal number of bits of asymmetric
distinguishability that can be diluted to a box $(\rho,\sigma)$ is equal to
the smooth max-relative entropy \cite{D09} (see \eqref{eq:eps-approx-cost}),
giving both of these quantities fundamental operational interpretations in the
resource theory of asymmetric distinguishability.

\item We prove that the optimization problems corresponding to one-shot
distinguishability distillation and dilution, as well as the optimization
corresponding to the quantum generalization of the transformation problem
considered in \eqref{eq:gen-trans-question}, are characterized by
semi-definite programs (see Appendices~\ref{app:SDPs-smooth-min-max} and \ref{app:approx-box-tr-SDP}). Thus, all of these quantities can be computed efficiently.

\item We finally consider the asymptotic version of the resource theory and
prove that it is reversible in this setting, with the optimal rate of
distillation or dilution equal to the quantum relative entropy. The
implication of this result is that the rate or efficiency at which a pair of
i.i.d.~quantum states can be converted to another pair of i.i.d.~quantum
states is fully characterized by the ratio of their quantum relative entropies
(see~\eqref{eq:box-trans-main-result}). 
\end{enumerate}

In what follows, we provide more details of the resource theory of asymmetric
distinguishability and a full exposition of the main results stated above. We
relegate details of mathematical proofs to several appendices, and we note
here that some of the technical lemmas in the appendices may be of independent interest.

As far as we are aware, the first proposal for a resource theory of
distinguishability was given in \cite{Mats10,M11}, which we have highlighted
above. It appears that this aspect of the work \cite{Mats10,M11} has gone
largely unnoticed since its posting to the arXiv, given that there have been
several subsequent proposals or calls to formalize a resource theory of
distinguishability \cite{M09,BK15,BK17} that apparently were not aware of~\cite{Mats10,M11}.

\section{Resource theory of asymmetric distinguishability}

We begin by establishing the basics of the resource theory of asymmetric
distinguishability. The basics include the objects being manipulated, called
\textquotedblleft boxes,\textquotedblright\ the fundamental units of resource,
\textquotedblleft bits of asymmetric distinguishability,\textquotedblright%
\ and the free operations allowed, which are simply arbitrary quantum physical operations.

The basic object to manipulate in the resource theory of asymmetric
distinguishability is the following \textquotedblleft box\textquotedblright or
ordered pair:%
\begin{equation}
( \rho,\sigma) , \label{eq:basic-box}%
\end{equation}
where $\rho$ and $\sigma$ are quantum states acting on the same Hilbert space.
The interpretation of the box $( \rho,\sigma) $ is that it corresponds to two
different experiments or scenarios. In the first, the state $\rho$ is
prepared, and in the second, the state $\sigma$ is prepared. The box is handed
to another party, who is not aware of which experiment is being conducted
(i.e., which state has been prepared).

One basic manipulation in this resource theory is to transform this box into
another box by means of any quantum physical operation $\mathcal{N}$, as
allowed by quantum mechanics. Such physical operations are mathematically
described by completely positive, trace-preserving (CPTP) maps and are known
as quantum channels. By acting on the box $(\rho,\sigma)$ with the common
quantum channel $\mathcal{N}$, one obtains the transformed box $(\mathcal{N}%
(\rho),\mathcal{N}(\sigma))$. Observe that it is not necessary to know which
experiment is being conducted in order to perform this transformation; one can
perform it regardless of whether $\rho$ or $\sigma$ was prepared. For this
reason, all quantum channels are allowed for free in this resource theory, so
that the transformation
\begin{equation}
(\rho,\sigma)\quad \underrightarrow{\mathcal{N}%
}\quad (\mathcal{N}(\rho),\mathcal{N}(\sigma))
\end{equation}
is allowed for free.

If the channel being performed to transform the box in \eqref{eq:basic-box} is
an isometric channel $\mathcal{U}(\omega)=U\omega U^{\dag}$ (where $U$ is an
isometry satisfying $U^{\dag}U=I$ and $\omega$ is an arbitrary state),
resulting in the box%
\begin{equation}
(\mathcal{U}(\rho),\mathcal{U}(\sigma)),
\end{equation}
then it is possible to invert this transformation\ and return to the original
box in \eqref{eq:basic-box}. A quantum channel that inverts the action of
$\mathcal{U}$ is given by
\begin{equation}
\theta\rightarrow U^{\dag}\theta U+\operatorname{Tr}[(I-UU^{\dag})\theta
]\tau,\label{eq:invert-isometry}%
\end{equation}
where $\theta$ is an arbitrary state and $\tau$ is some state.

Another kind of
invertible transformation is the appending channel $\mathcal{A}_{\tau}%
(\omega)=\omega\otimes\tau$, which appends the state~$\tau$ and has the
following effect on the box:%
\begin{equation}
(\mathcal{A}_{\tau}(\rho),\mathcal{A}_{\tau}(\sigma))=(\rho\otimes\tau
,\sigma\otimes\tau).\label{eq:appending-channel}%
\end{equation}
One can recover the original box $(\rho,\sigma)$ from
\eqref{eq:appending-channel}\ by discarding the second system (described
mathematically by partial trace). Thus, isometric channels and appending
channels are perfectly reversible operations in this resource theory.

The fundamental goal of this resource theory is to determine how and whether
it is possible to transform an initial box $(\rho,\sigma)$ to another box
$(\tau,\omega)$ for states $\tau$ and$~\omega$, by means of a common quantum
channel $\mathcal{N}$. Mathematically, the question is to determine, for fixed
states $\rho$, $\sigma$, $\tau$, and $\omega$, whether there exists a
completely positive and trace-preserving map $\mathcal{N}$ such that
$\mathcal{N}(\rho)=\tau$ and $\mathcal{N}(\sigma)=\omega$. As it turns out,
various instantiations of this question have been studied considerably in
prior work
\cite{B53,AU80,CJW04,MOA11,Buscemi2012,HJRW12,BDS14,BHNOW15,Ren16,BD16,Buscemi2016,GJBDM18,B17,BG17}%
, and a variety of results are known regarding it. In this paper, we offer a
fresh resource-theoretic perspective on this matter.

Motivated by practical concerns, one important variation of the aforementioned
box transformation problem is to determine whether it is possible to
accomplish the transformation \textit{approximately} as
\begin{equation}
(\rho,\sigma
)\quad \underrightarrow{\mathcal{N}}\quad (\tau_{\varepsilon},\omega)
\end{equation}
 with some
tolerance $\varepsilon\in\left[  0,1\right]  $ allowed, such that the state
$\tau_{\varepsilon}$ is $\varepsilon$-close to the desired $\tau$. The precise
way in which we allow some tolerance is motivated exclusively by operational
concerns. In a single run of the first experiment in which $\rho$ is prepared,
the transformation $\mathcal{N}(\rho)=\tau_{\varepsilon}$ occurs. Then a third
party would like to assess how accurate the conversion is. Such an individual
can do so by performing a quantum measurement $\{\Lambda_{x}\}_{x}$ with
outcomes $x$ (satisfying $\Lambda_{x}\geq0$ for all $x$ and $\sum_{x}%
\Lambda_{x}=I$). The probability of obtaining a particular outcome
$\Lambda_{x}$ is given by the Born rule $\operatorname{Tr}[\Lambda_{x}%
\tau_{\varepsilon}]$. What we demand is that the deviation between the actual
probability $\operatorname{Tr}[\Lambda_{x}\tau_{\varepsilon}]$ and the ideal
probability $\operatorname{Tr}[\Lambda_{x}\tau]$ be no larger than the
tolerance $\varepsilon$. Since this should be the case for any possible
measurement outcome, what we demand mathematically is that
\begin{equation}
\sup_{0\leq\Lambda\leq I}\left\vert \operatorname{Tr}[\Lambda\tau
_{\varepsilon}]-\operatorname{Tr}[\Lambda\tau]\right\vert \leq\varepsilon
.\label{eq:trace-distance-error}%
\end{equation}
It is well known that%
\begin{equation}
\sup_{0\leq\Lambda\leq I}\left\vert \operatorname{Tr}[\Lambda\tau
_{\varepsilon}]-\operatorname{Tr}[\Lambda\tau]\right\vert =\frac{1}%
{2}\left\Vert \tau_{\varepsilon}-\tau\right\Vert _{1},
\end{equation}
indicating that our notion of approximation is most naturally quantified by
the normalized trace distance $\frac{1}{2}\left\Vert \tau_{\varepsilon}%
-\tau\right\Vert _{1}$.

Thus, the mathematical formulation of the approximate box transformation
problem is as follows:%
\begin{multline}
\varepsilon((\rho,\sigma)\rightarrow(\tau,\omega
)):=\label{eq:approx-box-trans}\\
\inf_{\mathcal{N}\in\text{CPTP}}\left\{  \varepsilon\in\left[  0,1\right]
:\mathcal{N}(\rho)\approx_{\varepsilon}\tau,\ \mathcal{N}(\sigma
)=\omega\right\}  ,
\end{multline}
where the notation $\zeta\approx_{\varepsilon}\xi$ for states $\zeta$ and
$\xi$ is a shorthand for $\frac{1}{2}\left\Vert \zeta-\xi\right\Vert _{1}%
\leq\varepsilon$; i.e.,%
\begin{equation}
\zeta\approx_{\varepsilon}\xi\qquad\Longleftrightarrow\qquad\frac{1}%
{2}\left\Vert \zeta-\xi\right\Vert _{1}\leq\varepsilon.
\end{equation}
The fact that we allow for approximate conversion for the first state but not
the second is related to the fact that the resource theory presented here is a
resource theory of \textit{asymmetric} distinguishability. In
Appendix~\ref{app:approx-box-tr-SDP}, we show that \eqref{eq:approx-box-trans}
is equivalent to a semi-definite program (SDP), implying that it is
efficiently computable with respect to the dimensions of the states involved.
In the case that $\varepsilon((\rho,\sigma)\rightarrow(\tau,\omega))=0$, this
means that it is possible to perform the desired transformation $(\rho
,\sigma)\rightarrow(\tau,\omega)$ exactly, reproducing the previous result
from \cite{GJBDM18}.

We can also consider the asymptotic version of the box transformation problem,
in which the box consists not just of a single copy of the states $\rho$ and
$\sigma$ but many copies of them (i.e., the box $(\rho^{\otimes n}%
,\sigma^{\otimes n})$ instead of the original $(\rho,\sigma)$). By considering
the asymptotic setting with approximation error, we can modify the original
box transformation question as follows: what is the optimal rate $R$ at which
the transformation
\begin{equation}
(\rho^{\otimes n},\sigma^{\otimes n})\rightarrow(\widetilde{\tau^{\otimes nR}%
},\omega^{\otimes nR})\label{eq:approx-box-transform-fundamental}%
\end{equation}
is possible, for large $n$ and arbitrarily small approximation error? In this
setting, the SDP\ characterization of $\varepsilon((\rho^{\otimes n}%
,\sigma^{\otimes n})\rightarrow(\tau^{\otimes nR},\omega^{\otimes nR}))$ is
not particularly useful, due to the fact that the computational complexity of
the optimization problem grows exponentially with increasing $n$, and so we
resort to other, information-theoretic methods to address it.

\subsection{Bits of asymmetric distinguishability}

\label{sec:bits-AD}One way of addressing the various formulations of the box
transformation problem is to break the transformation down into two steps, in
which we first \textit{distill} a standard box and then \textit{dilute} this
standard box to the desired one. It turns out that the most natural way to do
so is to consider the following basic unit of currency or fiducial box:%
\begin{equation}
(|0\rangle\langle0|,\pi),\label{eq:bit-AD}%
\end{equation}
where
\begin{equation}
\pi:=\frac{1}{2}\left(  |0\rangle\langle0|+|1\rangle\langle1|\right) 
\end{equation}
is the maximally mixed qubit state. We also refer to the object in
\eqref{eq:bit-AD} as \textquotedblleft one bit of asymmetric
distinguishability.\textquotedblright

As before, we should think of the box in \eqref{eq:bit-AD} as being in
correspondence with two different experiments. In the first experiment, the
first state $\rho=|0\rangle\langle0|$ (\textquotedblleft null
hypothesis\textquotedblright) is prepared, and in the second experiment, the
second state $\sigma=\pi$ (\textquotedblleft alternative
hypothesis\textquotedblright) is prepared. A distinguisher presented with this
box, and unaware of which experiment is being conducted, can try to determine
which state $\rho$ or $\sigma$ has been prepared. Suppose that the
distinguisher performs a measurement of the observable $\sigma_{Z}%
:=|0\rangle\langle0|-|1\rangle\langle1|$ and assigns the outcome $+1$ to the
decision \textquotedblleft$\rho$ was prepared\textquotedblright\ and $-1$ to
the decision \textquotedblleft$\sigma$ was prepared.\textquotedblright\ Then
in the case that the state $\rho$ was prepared, he can determine this with
zero chance of error; on the other hand, if the state $\sigma$ was prepared,
then he can determine this with probability equal to $1/2$. In other terms,
with this strategy, he has zero chance of making a Type~I error
(misidentifying $\rho$) and he has a 50\% chance of making a Type~II error
(misidentifying $\sigma$).

The above strategy of basing the decision rule on the outcome of a $\sigma
_{Z}$ measurement is not the only strategy that the distinguisher can perform.
By performing a quantum channel $\mathcal{N}$ that accepts a qubit as input
and outputs another quantum system, the distinguisher can convert the box in
\eqref{eq:bit-AD}\ to the following box:%
\begin{equation}
(\mathcal{N}(|0\rangle\langle0|),\mathcal{N}(\pi)).
\end{equation}
After doing so, the distinguisher can base his decision rule on the outcome of
a general quantum measurement. However, if the goal is to have zero chance of
making a Type~I error, then it is intuitive and can be proven that no strategy
can perform better than the $\sigma_{Z}$ measurement strategy given in the
previous paragraph. Thus, arbitrary channels acting on the box in
\eqref{eq:bit-AD} do not increase distinguishability.

One bit of asymmetric distinguishability is not a particularly strong
resource. Indeed, with only one bit of asymmetric distinguishability, there is
still a large chance of making a Type~II error. However, the following box,
consisting of $m$ bits of asymmetric distinguishability, improves the
situation:%
\begin{equation}
(|0\rangle\langle0|^{\otimes m},\pi^{\otimes m}).\label{eq:m-bits-AD}%
\end{equation}
For such a box, there is a much smaller chance of making a Type~II error.
Indeed, by performing $m$ independent measurements of the observable
$\sigma_{Z}$ on each qubit and assigning the outcome \textquotedblleft%
$(+1,\ldots,+1)$\textquotedblright\ to the decision \textquotedblleft%
$|0\rangle\langle0|^{\otimes m}$ was prepared\textquotedblright\ and the
outcome \textquotedblleft not $(+1,\ldots,+1)$\textquotedblright\ to the
decision \textquotedblleft$\pi^{\otimes m}$ was prepared,\textquotedblright%
\ the distinguisher still has zero chance of making a Type~I\ error, but now
has a one out of $2^{m}$ chance of making a Type~II error. So with each extra
bit of asymmetric distinguishability, the chance of making a Type~II error
decreases by a factor of two. This is the value of having more bits of
asymmetric distinguishability.

Note that the following transformation is forbidden when $n>m$:%
\begin{equation}
(|0\rangle\langle0|^{\otimes m},\pi^{\otimes m})\not \rightarrow
(|0\rangle\langle0|^{\otimes n},\pi^{\otimes n}).\label{eq:outlawed-trans}%
\end{equation}
That is, one cannot increase bits of distinguishability by the action of a
quantum channel; i.e., there is no quantum channel $\mathcal{N}$\ that
performs the map $\mathcal{N}(|0\rangle\langle0|^{\otimes m})=|0\rangle
\langle0|^{\otimes n}$ and $\mathcal{N}(\pi^{\otimes m})=\pi^{\otimes n}$ for
$n>m$. Quantum channels have a linear action on their inputs, and this
linearity forbids such transformations, as shown in
Appendix~\ref{app:impossibility-result}.

A major goal of any resource theory is to quantify the amount of resource. For
the simple boxes presented above, any R\'{e}nyi relative entropy suffices as a
good quantifier of the number of bits of asymmetric distinguishability
contained in them. Two prominent examples of measures were put forward roughly
a decade ago as measures of distinguishability and studied therein as quantum
information-theoretic quantities of interest \cite{D09}. They are known as the
min- and max-relative entropies, defined respectively as follows for states
$\rho$ and$~\sigma$:%
\begin{align}
D_{\min}(\rho\Vert\sigma)  &  :=-\log_{2}\operatorname{Tr}[\Pi_{\rho}%
\sigma],\label{eq:order-zero-Renyi-rel-ent}\\
D_{\max}(\rho\Vert\sigma)  &  :=\inf\left\{  \lambda\geq0:\rho\leq2^{\lambda
}\sigma\right\}  , \label{eq:max-rel-ent}%
\end{align}
where $\Pi_{\rho}$ denotes the projection onto the support of $\rho$. If
$\rho$ is orthogonal to $\sigma$, then $D_{\min}(\rho\Vert\sigma)=\infty$, and
if $\operatorname{supp}(\rho)\not \subseteq \operatorname{supp}(\sigma)$, then
there is no finite $\lambda\geq0$ such that $\rho\leq2^{\lambda}\sigma$,
implying that $D_{\max}(\rho\Vert\sigma)=\infty$. Evaluating these measures
for the box given in \eqref{eq:m-bits-AD}, one finds that%
\begin{align}
D_{\min}(|0\rangle\langle0|^{\otimes m}\Vert\pi^{\otimes m})  &  =mD_{\min
}(|0\rangle\langle0|\Vert\pi)=m,\\
D_{\max}(|0\rangle\langle0|^{\otimes m}\Vert\pi^{\otimes m})  &  =mD_{\min
}(|0\rangle\langle0|\Vert\pi)=m,
\end{align}
consistent with the notion that the box in \eqref{eq:m-bits-AD} contains $m$
bits of asymmetric distinguishability.

By performing the following quantum channel:%
\begin{equation}
\omega\rightarrow\operatorname{Tr}[|0\rangle\langle0|^{\otimes m}%
\omega]|0\rangle\langle0|+\operatorname{Tr}[(I^{\otimes m}-|0\rangle
\langle0|^{\otimes m})\omega]|1\rangle\langle1|,
\end{equation}
one can convert the box in \eqref{eq:m-bits-AD}\ to the following box:%
\begin{equation}
(|0\rangle\langle0|,2^{-m}|0\rangle\langle0|+\left(  1-2^{-m}\right)
|1\rangle\langle1|).\label{eq:compressed-box}%
\end{equation}
Furthermore, by performing the quantum channel%
\begin{equation}
\theta\rightarrow\langle0|\theta|0\rangle|0\rangle\langle0|^{\otimes
m}+\langle1|\theta|1\rangle\frac{I^{\otimes m}-|0\rangle\langle0|^{\otimes m}%
}{2^{m}-1},\label{eq:transform-to-m-bits}%
\end{equation}
one can convert the box in \eqref{eq:compressed-box} back to the box in
\eqref{eq:m-bits-AD}.\ For this reason, these boxes have an equivalent number
of bits of asymmetric distinguishability, being equivalent by free operations.
It also means that we can take the box in \eqref{eq:compressed-box} to be the
basic form of $m$ bits of asymmetric distinguishability. Once we have done
that, it is then sensible to allow $m$ in \eqref{eq:compressed-box} to be any
non-negative real number, so that the box in \eqref{eq:compressed-box} has $m$
bits of asymmetric distinguishability, with $m$ a non-negative real number.
For this case, we still find that%
\begin{equation}
D_{\min}(|0\rangle\langle0|\Vert\sigma)=D_{\max}(|0\rangle\langle0|\Vert
\sigma)=m,
\end{equation}
with $\sigma=2^{-m}|0\rangle\langle0|+\left(  1-2^{-m}\right)  |1\rangle
\langle1|$.

\textit{Going forward from here, we take the box in \eqref{eq:compressed-box}
to be the basic form of }$m$\textit{ bits of asymmetric distinguishability,
for }$m$\textit{ any non-negative real number.}

\subsection{Exact distillation and dilution tasks}

In any resource theory, the basic questions concern distillation and dilution
tasks, and whether and in what senses the resource theory might be reversible
\cite{BDSW96,CG18}. In a distillation task, the goal is to process a general
resource with free operations in order to distill as much of the basic
resource as possible, while in the dilution task, the goal is to perform the
opposite: process as little of the basic resource as possible, using free
operations, in order to generate or dilute from it a more general resource. A
prominent goal is to determine the ultimate rates at which these resource
interconversions are possible and from there one can determine whether the
resource theory is reversible.

In the resource theory of asymmetric distinguishability, the goal of
\textit{exact distinguishability distillation }is to process a general box
$(\rho,\sigma)$ with an arbitrary quantum channel in order to distill as many
bits of asymmetric distinguishability as possible. Mathematically, we can
phrase this task as the following optimization problem:
\begingroup\allowdisplaybreaks[0]
\begin{multline}
D_{d}^{0}(\rho,\sigma):=\\
\log_{2}\sup_{\mathcal{P}\in\text{CPTP}}\left\{  M:\mathcal{P}(\rho
)=|0\rangle\langle0|,\ \mathcal{P}(\sigma)=\pi_{M}\right\}  ,
\end{multline}
\endgroup where the choice of $D_{d}$ in $D_{d}^{0}(\rho,\sigma)$ stands for
\textit{distillable distinguishability},\ the \textquotedblleft$0$%
\textquotedblright\ in $D_{d}^{0}(\rho,\sigma)$ indicates that we do not allow
any error, CPTP\ denotes the set of CPTP\ maps (quantum channels), and%
\begin{equation}
\pi_{M}:=\frac{1}{M}|0\rangle\langle0|+\left(  1-\frac{1}{M}\right)
|1\rangle\langle1|.\
\end{equation}
As we show in Appendix~\ref{app:exact-distill-distin}, the following equality
holds%
\begin{equation}
D_{d}^{0}(\rho,\sigma)=D_{\min}(\rho\Vert\sigma
),\label{eq:exact-distillable-distinguishability}%
\end{equation}
where $D_{\min}(\rho\Vert\sigma)$ is the min-relative entropy \cite{D09}, as
defined in \eqref{eq:order-zero-Renyi-rel-ent}. The equality in
\eqref{eq:exact-distillable-distinguishability} thus assigns to $D_{\min}%
(\rho\Vert\sigma)$ a fundamental operational meaning as the exact distillable
distinguishability in the resource theory of asymmetric distinguishability. A
strongly related operational meaning for $D_{\min}(\rho\Vert\sigma)$ in
quantum hypothesis testing was already given in \cite{D09}. 

In the case that $\rho$ is orthogonal to $\sigma$, then this means that the
box $(\rho,\sigma)$ can be converted to the box $(|0\rangle\langle
0|,|1\rangle\langle1|)$, by means of the quantum channel%
\begin{equation}
\omega\rightarrow\operatorname{Tr}[\Pi_{\rho}\omega]|0\rangle\langle
0|+\operatorname{Tr}[\left(  I-\Pi_{\rho}\right)  \omega]|1\rangle\langle1|.
\end{equation}
From the latter box, one can obtain as many bits of asymmetric
distinguishability as desired. Indeed by performing the channel%
\begin{equation}
\mathcal{T}^{m}(\omega)=\langle0|\omega|0\rangle\ |0\rangle\langle
0|+\langle1|\omega|1\rangle\pi_{2^{m}},\label{eq:transform-ch-inf-bits}%
\end{equation}
where $\pi_{2^{m}}:=2^{-m}|0\rangle\langle0|+\left(  1-2^{-m}\right)
|1\rangle\langle1|$, one can obtain $m$ bits of asymmetric distinguishability
from the box $(|0\rangle\langle0|,|1\rangle\langle1|)$. Since this is possible
for any $m\geq0$, it follows that the box $(|0\rangle\langle0|,|1\rangle
\langle1|)$ has an infinite number of bits of asymmetric distinguishability,
consistent with the fact that $D_{\min}(\rho\Vert\sigma)=\infty$ when $\rho$
is orthogonal to $\sigma$.

The goal of \textit{exact distinguishability dilution} is the
opposite:\ process as few bits of asymmetric distinguishability as possible,
using free operations, in order to generate the box $(\rho,\sigma)$.
Mathematically, we can phrase this task as the following optimization problem:%
\begin{multline}
D_{c}^{0}(\rho,\sigma):=\\
\log_{2}\inf_{\mathcal{P}\in\text{CPTP}}\left\{  M:\mathcal{P}(|0\rangle
\langle0|)=\rho,\ \mathcal{P}(\pi_{M})=\sigma\right\}  ,
\end{multline}
where the choice of $D_{c}$ in $D_{c}^{0}(\rho,\sigma)$ stands for
\textit{distinguishability cost}\ and\ the \textquotedblleft$0$%
\textquotedblright\ in $D_{c}^{0}(\rho,\sigma)$ again indicates that we do not
allow any error. As we show in Appendix~\ref{app:exact-dist-cost}, the
following equality holds%
\begin{equation}
D_{c}^{0}(\rho,\sigma)=D_{\max}(\rho\Vert\sigma),\label{eq:dist-cost}%
\end{equation}
where $D_{\max}(\rho\Vert\sigma)$ is the max-relative entropy \cite{D09}, as
defined in \eqref{eq:max-rel-ent}. The equality in \eqref{eq:dist-cost}\ thus
assigns to the max-relative entropy $D_{\max}(\rho\Vert\sigma)$\ a fundamental
operational meaning as the exact distinguishability cost of the box$~(\rho
,\sigma)$.

In the case that the support of $\rho$ is not contained in the support of
$\sigma$, then there is no finite value of $M$ nor any quantum channel
$\mathcal{P}$ that performs the transformations $\mathcal{P}(|0\rangle
\langle0|)=\rho$ and $\mathcal{P}(\pi_{M})=\sigma$. However, in the limit
$M\rightarrow\infty$, the box $(|0\rangle\langle0|,\pi_M)$ becomes the box
$(|0\rangle\langle0|,|1\rangle\langle1|)$, which is interpreted as containing
an infinite number of bits of asymmetric distinguishability, as discussed
after \eqref{eq:transform-ch-inf-bits}. In this case, we can pick the channel
$\mathcal{P}$ as $\mathcal{P}(\omega)=\langle0|\omega|0\rangle\rho
+\langle1|\omega|1\rangle\sigma$, and then the transformation $\mathcal{P}%
(|0\rangle\langle0|)=\rho$ and $\mathcal{P}(|1\rangle\langle1|)=\sigma$ is
easily achieved. Thus, in this sense, if the support of $\rho$ is not
contained in the support of $\sigma$, then the distinguishability cost
$D_{c}^{0}(\rho,\sigma)=\infty$, consistent with the fact that $D_{\max}%
(\rho\Vert\sigma)=\infty$ in this case.

An important case to consider in any resource theory is the case of
independent and identically distributed (i.i.d.) resources. For our case, this
means that we should analyze the box $(\rho^{\otimes n},\sigma^{\otimes n})$
for arbitrary $n\geq1$. Due to the additivity of $D_{\min}(\rho\Vert\sigma)$
and $D_{\max}(\rho\Vert\sigma)$, it follows that%
\begin{align}
D_{d}^{0}(\rho^{\otimes n},\sigma^{\otimes n}) &  =nD_{\min}(\rho\Vert
\sigma),\\
D_{c}^{0}(\rho^{\otimes n},\sigma^{\otimes n}) &  =nD_{\max}(\rho\Vert\sigma),
\end{align}
so that the number of bits of asymmetric distinguishability distilled and
required in each respective task scales precisely linearly with $n$.

Due to the fact that we generally have $D_{\min}(\rho\Vert\sigma)\neq D_{\max
}(\rho\Vert\sigma)$ for states $\rho$ and $\sigma$, it follows that the
resource theory of asymmetric distinguishability is not reversible if we
demand exact conversions from one box to another. In fact, the irreversibility
in the exact case can be as extreme as desired. By picking $\rho
=|0\rangle\langle0|$ and $\sigma=|\psi\rangle\langle\psi|$ for $|\psi
\rangle=\sqrt{1-\delta}|0\rangle+\sqrt{\delta}|1\rangle$ and $\delta\in\left(
0,1\right)  $, we have that $D_{\min}(\rho\Vert\sigma)=-\log_{2}(1-\delta)$
while $D_{\max}(\rho\Vert\sigma)=\infty$ for all $\delta\in\left(  0,1\right)
$, so that the exact distillable distinguishability can be arbitrarily close
to zero while the exact distinguishability cost is always infinite in this case.

\subsection{Approximate distillation and dilution tasks}

In realistic experimental scenarios, it is typically not possible to perform
transformations exactly, thus motivating the need to consider approximate
transformations and approximations of the ideal resources. For the resource
theory of asymmetric distinguishability, we define an $\varepsilon
$-approximate bit of asymmetric distinguishability as%
\begin{equation}
(\widetilde{0}_{\varepsilon},\pi),\label{eq:approx-bit}%
\end{equation}
where $\varepsilon\in\left[  0,1\right]  $ and
\begin{equation}
\widetilde{0}_{\varepsilon}:=\left(  1-\varepsilon\right)  |0\rangle
\langle0|+\varepsilon|1\rangle\langle1|,
\end{equation}
so that $\widetilde{0}_{\varepsilon}\approx_{\varepsilon}|0\rangle\langle0|$.
The motivation for this choice is operational as before (see the discussion
before \eqref{eq:approx-box-trans}). Also, since the maximally mixed state
$\pi$ is diagonal in any basis, it suffices to consider
\eqref{eq:approx-bit}\ as the basic definition of an $\varepsilon$-approximate
bit of asymmetric distinguishability, because one could simply perform the
diagonalizing unitary for a general qubit state $\tau$ to bring a general box
$(  \tau,\pi)  $ into the form of~\eqref{eq:approx-bit}.

Generalizing  \eqref{eq:compressed-box} and \eqref{eq:approx-bit}, the
following box represents $m$ approximate bits of asymmetric
distinguishability:%
\begin{equation}
(\widetilde{0}_{\varepsilon},2^{-m}|0\rangle\langle0|+\left(  1-2^{-m}\right)
|1\rangle\langle1|).
\end{equation}
If $m$ is an integer, then this box is equivalent by the transformation in
\eqref{eq:transform-to-m-bits}\ to the following one:%
\begin{equation}
(\widetilde{0}_{\varepsilon}^{m},\pi^{\otimes m}),
\end{equation}
where
\begin{equation}
\widetilde{0}_{\varepsilon}^{m}:=\left(  1-\varepsilon\right)  |0\rangle
\langle0|^{\otimes m}+\varepsilon\frac{I^{\otimes m}-|0\rangle\langle
0|^{\otimes m}}{2^{m}-1},
\end{equation}
so that $\widetilde{0}_{\varepsilon}^{m}\approx_{\varepsilon}|0\rangle
\langle0|^{\otimes m}$.

With such a notion in place, we can now generalize exact distillation of
asymmetric distinguishability to its approximate version. The goal of
$\varepsilon$\textit{-approximate distinguishability distillation} is to
distill as many $\varepsilon$-approximate bits of asymmetric
distinguishability as possible from a given box $(\rho,\sigma)$.
Mathematically, it corresponds to the following optimization for
$\varepsilon\in\left[  0,1\right]  $:%
\begin{multline}
D_{d}^{\varepsilon}(\rho,\sigma):=\\
\log_{2}\sup_{\mathcal{P}\in\text{CPTP}}\{M:\mathcal{P}(\rho)\approx
_{\varepsilon}|0\rangle\langle0|,\ \mathcal{P}(\sigma)=\pi_{M}\}.
\end{multline}
As we show in Appendix~\ref{app:one-shot-distill-distin}, the following
equality holds%
\begin{equation}
D_{d}^{\varepsilon}(\rho,\sigma)=D_{\min}^{\varepsilon}(\rho\Vert
\sigma),\label{eq:approx-distill}%
\end{equation}
where $D_{\min}^{\varepsilon}(\rho\Vert\sigma)$ is the smooth min-relative
entropy \cite{BD10,BD11,WR12}, defined as%
\begin{equation}
D_{\min}^{\varepsilon}(\rho\Vert\sigma):=-\log_{2}\inf_{0\leq\Lambda\leq
I}\left\{  \operatorname{Tr}[\Lambda\sigma]:\operatorname{Tr}[\Lambda\rho
]\geq1-\varepsilon\right\}  .
\end{equation}
Thus, the equality in \eqref{eq:approx-distill}\ assigns to the smooth
min-relative entropy an operational meaning as the $\varepsilon$-approximate
distillable distinguishability of the box $(\rho,\sigma)$. This operational
interpretation is directly linked to the role of $D_{\min}^{\varepsilon}%
(\rho\Vert\sigma)$ in quantum hypothesis
testing~\cite{HP91,ON00,Hay03,Hay04,WR12,Hay17}. Note that $D_{\min
}^{\varepsilon}(\rho\Vert\sigma)$ is also known as \textquotedblleft
hypothesis testing relative entropy\textquotedblright\ in the literature, which is
terminology introduced in \cite{WR12}. This quantity can be computed efficiently by means of a semi-definite program~\cite{DKFRR12}, the proof of which we recall in Appendix~\ref{app:SDPs-smooth-min-max}.

Note that by combining \eqref{eq:exact-distillable-distinguishability},
\eqref{eq:approx-distill}, and the fact that $\lim_{\varepsilon\rightarrow
0}D_{d}^{\varepsilon}(\rho,\sigma)=D_{d}^{0}(\rho,\sigma)$, we conclude the
following limit:
\begin{equation}
\lim_{\varepsilon\rightarrow0}D_{\min}^{\varepsilon}(\rho\Vert\sigma)=D_{\min
}(\rho\Vert\sigma).\label{eq:limit-smooth-dmin-to-dmin}%
\end{equation}
We provide an alternative proof in Appendix~\ref{app:rel-ents-DP}.

We can also generalize the distinguishability dilution task to the approximate
case. In this case, we define the $\varepsilon$\textit{-approximate
distinguishability cost} of the box $(\rho,\sigma)$ to be the least number of
ideal bits of asymmetric distinguishability that are needed to generate the
box $(\rho_{\varepsilon},\sigma)$, where $\rho_{\varepsilon}\approx
_{\varepsilon}\rho$. This notion of approximate distinguishability cost is
fully operational and consistent with the more general problem in
\eqref{eq:approx-box-trans}. The precise definition of the $\varepsilon
$-approximate distinguishability cost of the box $(\rho,\sigma)$ is as
follows:%
\begin{multline}
D_{c}^{\varepsilon}(\rho,\sigma):=\\
\log_{2}\inf_{\mathcal{P}\in\text{CPTP}}\{M:\mathcal{P}(|0\rangle
\langle0|)\approx_{\varepsilon}\rho,\ \mathcal{P}(\pi_{M})=\sigma\}.
\end{multline}
As we show in Appendix~\ref{app:one-shot-dist-cost}, the following equality
holds%
\begin{equation}
D_{c}^{\varepsilon}(\rho,\sigma)=D_{\max}^{\varepsilon}(\rho\Vert
\sigma),\label{eq:eps-approx-cost}%
\end{equation}
where $D_{\max}^{\varepsilon}(\rho\Vert\sigma)$ is the smooth max-relative
entropy \cite{D09}, defined as%
\begin{equation}
D_{\max}^{\varepsilon}(\rho\Vert\sigma):=\inf_{\widetilde{\rho}:\frac{1}%
{2}\left\Vert \widetilde{\rho}-\rho\right\Vert _{1}\leq\varepsilon}D_{\max
}(\widetilde{\rho}\Vert\sigma).\label{eq:smooth-max-rel-ent}%
\end{equation}
Thus, the equality in \eqref{eq:eps-approx-cost}\ assigns to the smooth
max-relative entropy a fundamental operational meaning as the $\varepsilon
$-approximate distinguishability cost of the box$~(\rho,\sigma)$. The smooth max-relative entropy can also be efficiently calculated by means of a semi-definite program, the proof of which we give in Appendix~\ref{app:SDPs-smooth-min-max}.

Note that by combining \eqref{eq:dist-cost}, \eqref{eq:eps-approx-cost}, and
the fact that $\lim_{\varepsilon\rightarrow0}D_{c}^{\varepsilon}(\rho
,\sigma)=D_{c}^{0}(\rho,\sigma)$, we conclude the following limit:%
\begin{equation}
\lim_{\varepsilon\rightarrow0}D_{\max}^{\varepsilon}(\rho\Vert\sigma)=D_{\max
}(\rho\Vert\sigma).\label{eq:limit-smooth-dmax-to-dmax}%
\end{equation}
We provide an alternative proof in Appendix~\ref{app:rel-ents-DP}.

An application of the operational approach to distinguishability taken here is
the following bound relating$~D_{\min}^{\varepsilon}$ and $D_{\max
}^{\varepsilon}$:%
\begin{equation}
D_{\min}^{\varepsilon_{1}}(\rho\Vert\sigma)\leq D_{\max}^{\varepsilon_{2}%
}(\rho\Vert\sigma)+\log_{2}\!\left(  \frac{1}{1-\varepsilon_{1}-\varepsilon
_{2}}\right)  , \label{eq:dh-dmax-relation-operational}%
\end{equation}
where $\varepsilon_{1},\varepsilon_{2}\geq0$, and $\varepsilon_{1}%
+\varepsilon_{2}<1$. The bound in \eqref{eq:dh-dmax-relation-operational} is
most closely related to the upper bound in \cite[Theorem~11]{DMHB13}, but we
employ a different notion of smoothing for the smooth max-relative entropy. It
also generalizes the bound from \cite[Eq.~(47)]{DKFRR12} (by appropriately
working through the different conventions here and in \cite{DKFRR12}) and is
in the same spirit as \cite[Proposition~5.5]{tomamichel2012framework} and
\cite[Eq.~(22)]{tomamichel2013hierarchy}.

The main idea for arriving at the bound in
\eqref{eq:dh-dmax-relation-operational} follows from resource-theoretic
reasoning. Any approximate distillation protocol performed on the box
$(|0\rangle\langle0|,\pi_{M})$ that leads to the box $(\widetilde
{0}_{\varepsilon},\pi_{K})$, for $\varepsilon\in\lbrack0,1)$, is required to
obey the bound%
\begin{equation}
\log_{2}K\leq\log_{2}M+\log_{2}(1/[1-\varepsilon
]),\label{eq:fundamental-limit-Dh-Dmax}%
\end{equation}
which follows as a consequence of the fundamental limitation in
\eqref{eq:approx-distill}. One way to realize the transformation
$(|0\rangle\langle0|,\pi_{M})\rightarrow(\widetilde{0}_{\varepsilon},\pi_{K})$
is to proceed in two steps:\ first perform an optimal dilution protocol
$(|0\rangle\langle0|,\pi_{M})\rightarrow(\rho_{\varepsilon_{2}},\sigma)$ such
that $\log_{2}M=D_{\max}^{\varepsilon_{2}}(\rho\Vert\sigma)$ and then perform
an optimal distillation protocol $(\rho,\sigma)\rightarrow(\widetilde
{0}_{\varepsilon_{1}},\pi_{K})$ such that $\log_{2}K=D_{\min}^{\varepsilon
_{1}}(\rho\Vert\sigma)$. By employing the triangle inequality, the error of
the overall transformation is no larger than$~\varepsilon_{1}+\varepsilon_{2}%
$. Since the fundamental limitation in \eqref{eq:fundamental-limit-Dh-Dmax}
applies to any protocol, the bound in
\eqref{eq:dh-dmax-relation-operational}\ follows. We give a detailed proof in
Appendix~\ref{app:bnd-dmax-dheps}.

\subsection{Asymptotic distillable distinguishability and distinguishability
cost}

We can now reconsider the i.i.d.~case of a box $(\rho^{\otimes n}%
,\sigma^{\otimes n})$ in the context of approximate distillation and dilution.
Recall that the quantum relative entropy $D(\rho\Vert\sigma)$ is defined as
\cite{U62}%
\begin{equation}
D(\rho\Vert\sigma):=\operatorname{Tr}[\rho\left(  \log_{2}\rho-\log_{2}%
\sigma\right)  ],\label{eq:q-rel-ent}%
\end{equation}
if $\operatorname{supp}(\rho)\subseteq\operatorname{supp}(\sigma)$ and
$D(\rho\Vert\sigma)=\infty$ otherwise. By defining the asymptotic distillable
distinguishability and asymptotic distinguishability cost of the box
$(\rho,\sigma)$ as follows:%
\begin{align}
D_{d}(\rho,\sigma) &  :=\lim_{\varepsilon\rightarrow0}\lim_{n\rightarrow
\infty}\frac{1}{n}D_{d}^{\varepsilon}(\rho^{\otimes n},\sigma^{\otimes n}),\\
D_{c}(\rho,\sigma) &  :=\lim_{\varepsilon\rightarrow0}\lim_{n\rightarrow
\infty}\frac{1}{n}D_{c}^{\varepsilon}(\rho^{\otimes n},\sigma^{\otimes n}),
\end{align}
respectively, we conclude from the quantum Stein's lemma \cite{HP91,ON00}\ and
the asymptotic equipartition property for the smooth max-relative entropy
\cite{TCR09} that%
\begin{equation}
D_{d}(\rho,\sigma)=D_{c}(\rho,\sigma)=D(\rho\Vert\sigma
),\label{eq:distill-cost-equal-rel-ent}%
\end{equation}
thus demonstrating the fundamental operational interpretation of the quantum
relative entropy in the resource theory of asymmetric distinguishability. It
is worthwhile to note that we can conclude the stronger statement%
\begin{align}
D_{d}^{\varepsilon}(\rho^{\otimes n},\sigma^{\otimes n}) &  =nD(\rho
\Vert\sigma)+O(\sqrt{n}), \label{eq:approx-distill-expand}\\
  D_{c}^{\varepsilon}(\rho^{\otimes n},\sigma^{\otimes n}) & =nD(\rho
\Vert\sigma)+O(\sqrt{n}),
\label{eq:approx-cost-expand}
\end{align}
from \cite{tomamichel2012framework,tomamichel2013hierarchy,li14} (see
Appendix~\ref{app:asymp-dist-cost}). Thus, the equality of approximate distillable distinguishability and approximate distinguishability cost in the i.i.d.~case holds in the leading order term, with a difference in sublinear in $n$ terms. As discussed in Appendix~\ref{app:infidelity}, the second-order term in \eqref{eq:approx-distill-expand} can be identified exactly by appealing to \cite{li14,tomamichel2013hierarchy}. The second-order term in \eqref{eq:approx-cost-expand} can be identified also by appealing to \cite{tomamichel2013hierarchy}, but there is a need in this case to change the quantification of error in the resource theory of asymmetric distinguishability from normalized trace distance to infidelity. 

As a consequence of the fundamental equality in
\eqref{eq:distill-cost-equal-rel-ent}, we conclude that the resource theory of
asymmetric distinguishability is reversible in the asymptotic setting. That
is, for large $n$, by starting with the box $(\rho^{\otimes n},\sigma^{\otimes
n})$ one can distill it approximately to $nD(\rho\Vert\sigma)$ bits of
asymmetric distinguishability, and then one can dilute these $nD(\rho
\Vert\sigma)$ bits of asymmetric distinguishability back to the box
$(\rho^{\otimes n},\sigma^{\otimes n})$ approximately.

\subsection{Asymptotic box transformations}

We can also solve the asymptotic box transformation problem stated around
\eqref{eq:approx-box-transform-fundamental}. Before doing so, let us formalize
the problem. Let $n,m\in\mathbb{Z}^{+}$ and $\varepsilon\in\left[  0,1\right]
$. An $(n,m,\varepsilon)$ box transformation protocol for the boxes
$(\rho,\sigma)$ and $(\tau,\omega)$ consists of a channel $\mathcal{N}^{(n)}$
such that
\begin{align}
\mathcal{N}^{(n)}(\rho^{\otimes n}) & \approx_{\varepsilon}%
\tau^{\otimes m},\\
\mathcal{N}^{(n)}(\sigma^{\otimes n}) & =\omega^{\otimes
m}.
\end{align}
A rate $R$ is \textit{achievable} if for all $\varepsilon\in(0,1]$,
$\delta>0$, and sufficiently large $n$, there exists an $(n,n[R-\delta
],\varepsilon)$ box transformation protocol. The optimal box transformation
rate $R((\rho,\sigma)\rightarrow(\tau,\omega))$ is then equal to the supremum
of all achievable rates.

On the other hand, a rate $R$ is a \textit{strong converse rate} if for all
$\varepsilon\in\lbrack0,1)$, $\delta>0$, and sufficiently large $n$, there
does not exist an $(n,n[R+\delta],\varepsilon)$ box transformation protocol.
The strong converse box transformation rate $\widetilde{R}((\rho
,\sigma)\rightarrow(\tau,\omega))$ is then equal to the infimum of all strong
converse rates.

Note that the following inequality is a consequence of the definitions:%
\begin{equation}
R((\rho,\sigma)\rightarrow(\tau,\omega))\leq\widetilde{R}((\rho,\sigma
)\rightarrow(\tau,\omega)).
\end{equation}

The final result of our paper is the following fundamental equality for the
resource theory of asymmetric distinguishability:%
\begin{equation}
R((\rho,\sigma)\rightarrow(\tau,\omega))=\widetilde{R}((\rho,\sigma
)\rightarrow(\tau,\omega))=\frac{D(\rho\Vert\sigma)}{D(\tau\Vert\omega
)},\label{eq:box-trans-main-result}%
\end{equation}
indicating that the quantum relative entropy plays a central role as the
optimal conversion rate between boxes. 

We should clarify \eqref{eq:box-trans-main-result} a bit further. It holds
whenever $\operatorname{supp}(\rho)\subseteq\operatorname{supp}(\sigma)$ and
$\operatorname{supp}(\tau)\subseteq\operatorname{supp}(\omega)$. If
$\operatorname{supp}(\rho)\subseteq\operatorname{supp}(\sigma)$ but
$\operatorname{supp}(\tau)\not \subseteq \operatorname{supp}(\omega)$, then
$\frac{D(\rho\Vert\sigma)}{D(\tau\Vert\omega)} = 0$ and it is not possible to
perform the transformation at a non-negligible rate. If $\operatorname{supp}%
(\rho)\not \subseteq \operatorname{supp}(\sigma)$ but $\operatorname{supp}%
(\tau)\subseteq\operatorname{supp}(\omega)$, then $\frac{D(\rho\Vert\sigma
)}{D(\tau\Vert\omega)} = \infty$ and it is possible to produce as many copies
of $\tau$ and $\omega$ as desired.

The proof of this result consists of two parts:\ achievability and optimality.
For the achievability part, i.e., the bound
\begin{equation}
R((\rho,\sigma)\rightarrow(\tau,\omega)) \geq 
\frac{D(\rho\Vert\sigma)}{D(\tau\Vert\omega
)},
\end{equation}
we first distill bits of asymmetric
distinguishability from $(\rho^{\otimes n},\sigma^{\otimes n})$ at the rate
$\approx D(\rho\Vert\sigma)$. After doing so, we then dilute these $\approx
nD(\rho\Vert\sigma)$ bits of asymmetric distinguishability to the box
$(\tau^{\otimes m},\omega^{\otimes m})$, such that $m\approx n\left[
D(\rho\Vert\sigma)/D(\tau\Vert\omega)\right]  $, establishing that
$R((\rho,\sigma)\rightarrow(\tau,\omega))\geq\frac{D(\rho\Vert\sigma)}%
{D(\tau\Vert\omega)}$. For the optimality part, i.e., the strong converse bound
\begin{equation}
\widetilde{R}((\rho,\sigma)\rightarrow(\tau,\omega)) \leq 
\frac{D(\rho\Vert\sigma)}{D(\tau\Vert\omega
)},
\end{equation}
we suppose that there exists a
sequence of $(n,m,\varepsilon)$ box transformation protocols and then employ a
pseudo-continuity inequality for sandwiched R\'{e}nyi relative entropy
(Lemma~\ref{lem:app-continuity-sandwiched-Renyi}) and its data processing
inequality to conclude that $\widetilde{R}((\rho,\sigma)\rightarrow
(\tau,\omega))\leq\frac{D(\rho\Vert\sigma)}{D(\tau\Vert\omega)}$.
Alternatively, we can employ a pseudo-continuity inequality for the
Petz--R\'{e}nyi relative entropy (Lemma~\ref{lem:petz-renyi-continuity}) and
its data processing inequality. See Appendix~\ref{app:asymp-box-tr} for details. We note here that the bounds in Propositions~\ref{prop:strong-converse-exp} and \ref{prop:strong-converse-exp-Petz-Renyi} are \textit{exponential strong converse bounds}, demonstrating that the error in the transformation converges to one exponentially fast if the rate of conversion is strictly larger than $\frac{D(\rho\Vert\sigma)}{D(\tau\Vert\omega)}$.

\section{Conclusion}

In this paper, we have developed the resource theory of asymmetric
distinguishability. The main constituents consist of boxes as the objects of
manipulation, all quantum channels as the free operations, and bits of
asymmetric distinguishability as the fundamental currency of interconversion.
The resource theory is reversible in the asymptotic case, and the quantum
relative entropy emerges as the fundamental rate at which boxes can be
converted. Our one-shot results can be compactly summarized as follows:

\begin{enumerate}
\item \textit{The min-relative entropy is equal to the exact one-shot
distillable distinguishability.}

\item \textit{The max-relative entropy is equal to the exact one-shot
distinguishability cost.}

\item \textit{The smooth min-relative entropy is equal to the approximate
one-shot distillable distinguishability.}

\item \textit{The smooth max-relative entropy is equal to the approximate
one-shot distinguishability cost.}
\end{enumerate}

Thus, each of these one-shot entropies are fundamentally operational
quantities. Finally, the ratio of quantum relative entropies of two pairs of
quantum states is equal to the optimal rate of asymptotic box transformations
between them.

Going forward from here, there are many interesting directions to pursue. The
resource theory of asymmetric distinguishability for quantum channels has recently been developed in \cite{WW19}. The main constituents consist of a
channel box $(\mathcal{N},\mathcal{M})$, for quantum channels $\mathcal{N}$
and $\mathcal{M}$, as the basic objects of manipulation, superchannels
\cite{CDP08}\ as the free operations, and bits of asymmetric
distinguishability as the fundamental currency. Some basic results are that the one-shot distillable
distinguishability of a channel box is equal to the smooth channel 
min-relative entropy \cite{Cooney2016}, and the one-shot distinguishability
cost is equal to the smooth channel max-relative entropy \cite{GFWRSCW18,LW19}%
. The theory reduces to the theory for quantum states in the case that the
channels that are environment-seizable, as defined in \cite{BHKW18}.

It remains open to determine optimal error exponents and strong converse
exponents for the distinguishability dilution task, as well as for the more
general box transformation problem. These quantities have been established for
distinguishability distillation (i.e., hypothesis testing)
\cite{N06,Hay07,ANSV08,HMO08,MO13}, and so there is a strong possibility that
these operational quantities could be determined for the dual task. Some of
the bounds in Appendix~\ref{sec:app-dmaxeps-bnds} could be useful for this
purpose. The same questions remain open for second-order asymptotics.

In Appendix~\ref{app:infidelity}, we explore a variation of the resource
theory of asymmetric distinguishability in which the infidelity is employed as
a measure of approximation, rather than the normalized trace distance. There
are similar interesting questions regarding this variation, in particular,
whether error exponents and strong converse exponents for distinguishability
dilution could be proven to be optimal.

One could also consider the case in which the boxes consist of not just two
states but multiple states, connecting with the theory of quantum state
discrimination \cite{BC09,BK15SD}. The boxes could even consist of a continuum
of states or channels, connecting with quantum estimation theory
\cite{H69,Hel76}\ and the resource theoretic approach put forward in
\cite{Mat05}. The boxes could also consist of a state and a set of states,
with the set of free operations restricted, which allows for connecting with
general resource theories \cite{CG18,LBT19}. Extending this, the boxes could
consist of a channel and set of channels, with restricted free operations,
allowing to connect with general resource theories of quantum channels
\cite{LY19,LW19}.

A particularly interesting direction would be to consider reversibility of the
resource theory of asymmetric distinguishability beyond the first order and
investigate resource resonance effects. For this direction, the recent results
of \cite{KH17,Chubb2018beyondthermodynamic,KCT19,CTK19}\ are quite relevant. Related to this, one could investigate more fine-grained questions related to asymptotic reversibility along the lines of \cite{KH13}, where we expect similar findings  to hold in the resource theory of asymmetric distinguishability.

\textit{Note}: After completing our paper, we learned about the independent and related work of \cite{ST19}.

\begin{acknowledgments}
We are grateful to Francesco Buscemi, Nilanjana Datta, Sumeet Khatri, and
Marco Tomamichel for discussions. This work was ultimately inspired by the
talk of Robin Blume-Kohout at the 2017 APS\ March Meeting in New Orleans,
Louisiana \cite{BK17}. XW acknowledges support from the Department of Defense,
and MMW\ acknowledges support from the National Science Foundation under Grant No.~1907615.
\end{acknowledgments}

\bibliographystyle{alpha}
\bibliography{Ref}

%\pagebreak\clearpage

\vspace{.5in}

\appendix

In the following appendices, we provide detailed proofs of all claims in the
main text. As a resource, we have included derivations of some of the dual semi-definite programs listed below as an ancillary file available for download with the arXiv posting of this paper. We begin by providing some background facts in
Appendix~\ref{sec:app-background}, some of which can be found in
\cite{W17book}.

\section{Background\label{sec:app-background}}

\subsection{Normalized trace distance}

A quantum state is described mathematically by a positive semi-definite
operator with trace equal to one. The normalized trace distance between two
quantum states $\rho$ and $\sigma$ is given by $\frac{1}{2}\left\Vert
\rho-\sigma\right\Vert _{1}$, where the trace norm of an operator $A$ is
defined as $\left\Vert A\right\Vert _{1}=\operatorname{Tr}[\sqrt{A^{\dag}A}]$.
The following variational characterization of the normalized trace distance is
well known \cite{H69}:%
\begin{equation}
\frac{1}{2}\left\Vert \rho-\sigma\right\Vert _{1}=\sup_{\Lambda\geq 0}\left\{
\operatorname{Tr}[\Lambda(\rho-\sigma)]:\Lambda\leq I\right\}  ,
\label{eq:tr-dist-sup}%
\end{equation}
endowing the normalized trace distance with its operational meaning as the largest
probability difference that a single POVM\ element can assign to two quantum
states. The right-hand side\ of \eqref{eq:tr-dist-sup} is a semi-definite
program as written, with the following dual:%
\begin{equation}
\inf_{Y\geq 0}\left\{  \operatorname{Tr}[Y]:Y\geq\rho-\sigma\right\}
=\frac{1}{2}\left\Vert \rho-\sigma\right\Vert _{1}, \label{eq:TD-dual}%
\end{equation}
where the equality holds from strong duality.

\subsection{Choi isomorphism}

The Choi isomorphism is a standard way of characterizing quantum channels that is
suitable for optimizing over them in semi-definite programs. For a quantum
channel $\mathcal{N}_{A\rightarrow B}$, its Choi operator is given by%
\begin{equation}
J_{RB}^{\mathcal{N}}:=\mathcal{N}_{A\rightarrow B}(\Gamma_{RA}),
\end{equation}
where $\Gamma_{RA}=|\Gamma\rangle\langle\Gamma|_{RA}$ and%
\begin{equation}
|\Gamma\rangle_{RA}:=\sum_{i}|i\rangle_{R}|i\rangle_{A},
\end{equation}
with $\{|i\rangle_{R}\}_{i}$ and $\{|i\rangle_{A}\}_i$ orthonormal bases. The
Choi operator is positive semi-definite $J_{RB}^{\mathcal{N}}\geq0$,
corresponding to $\mathcal{N}_{A\rightarrow B}$ being completely positive, and
satisfies $\operatorname{Tr}_{B}[J_{RB}^{\mathcal{N}}]=I_{R}$, the latter
corresponding to $\mathcal{N}_{A\rightarrow B}$ being trace preserving.

On the other hand, given an operator $J_{RB}^{\mathcal{M}}$ satisfying
$J_{RB}^{\mathcal{M}}\geq0$ and $\operatorname{Tr}_{B}[J_{RB}^{\mathcal{M}%
}]=I_{R}$, one realizes via postselected teleportation \cite{B05}\ the
following quantum channel:%
\begin{align}
\mathcal{M}_{A\rightarrow B}(\rho_{A})  &  =\langle\Gamma|_{SR}\left(
\rho_{S}\otimes J_{RB}^{\mathcal{M}}\right)  |\Gamma\rangle_{SR}\\
&  =\operatorname{Tr}_{R}[T_{R}(\rho_{R})J_{RB}^{\mathcal{M}}],
\end{align}
where systems $S$, $R$, and $A$ are isomorphic and the last line employs the
facts that $\left(  M_{S}\otimes I_{R}\right)  |\Gamma\rangle_{SR}=\left(
I_{S}\otimes T_{R}(M_{R})\right)  |\Gamma\rangle_{SR}$ for $T_{R}$ the
transpose map, defined as%
\begin{equation}
T_{R}(\rho_{R})=\sum_{i,j}|i\rangle\langle j|_{R}\rho_{R}|i\rangle\langle
j|_{R},
\end{equation}
and $\langle\Gamma|_{SR}\left(  I_{S}\otimes X_{RB}\right)  |\Gamma
\rangle_{SR}=\operatorname{Tr}_{R}[X_{RB}]$. We often abbreviate the transpose
map simply as%
\begin{equation}
\rho_{R}^{T}=T_{R}(\rho_{R}).
\end{equation}
Since the constraints $J_{RB}^{\mathcal{M}}\geq0$ and $\operatorname{Tr}%
_{B}[J_{RB}^{\mathcal{M}}]=I_{R}$ are semi-definite, this is a useful way of
incorporating optimizations over quantum channels into semi-definite programs.

\subsection{Relative entropies and data processing}

\label{app:rel-ents-DP}The Petz--R\'{e}nyi relative entropy is defined for
states $\rho$ and $\sigma$ as \cite{P86}%
\begin{align}
D_{\alpha}(\rho\Vert\sigma)  &  :=\frac{1}{\alpha-1}\log_{2}\operatorname{Tr}%
[\rho^{\alpha}\sigma^{1-\alpha}]\\
&  =\frac{2}{\alpha-1}\log_{2}\left\Vert \rho^{\alpha/2}\sigma^{(1-\alpha
)/2}\right\Vert _{2},
\end{align}
if $\alpha\in(0,1)$ or $\alpha\in(1,\infty)$ and $\operatorname{supp}%
(\rho)\subseteq\operatorname{supp}(\sigma)$. If $\alpha\in(1,\infty)$ and
$\operatorname{supp}(\rho)\not \subseteq \operatorname{supp}(\sigma)$, then
$D_{\alpha}(\rho\Vert\sigma)=\infty$ \cite{TCR09}. The Petz--R\'{e}nyi
relative entropy obeys the following data processing inequality
\cite{P86,TCR09,T15} for a quantum channel $\mathcal{N}$ and $\alpha
\in(0,1)\cup(1,2]$:%
\begin{equation}
D_{\alpha}(\rho\Vert\sigma)\geq D_{\alpha}(\mathcal{N}(\rho)\Vert
\mathcal{N}(\sigma)). \label{eq:DP-Petz-Renyi}%
\end{equation}
The following limits hold
\begin{align}
\lim_{\alpha\rightarrow1}D_{\alpha}(\rho\Vert\sigma)  &  =D(\rho\Vert
\sigma),\label{eq:conv-PR-q-rel-ent}\\
\lim_{\alpha\rightarrow0}D_{\alpha}(\rho\Vert\sigma)  &  =D_{\min}(\rho
\Vert\sigma),
\end{align}
where $D(\rho\Vert\sigma)$ is the quantum relative entropy defined
in~\eqref{eq:q-rel-ent} and $D_{\min}(\rho\Vert\sigma)$ is defined in
\eqref{eq:order-zero-Renyi-rel-ent}. The Petz--R\'{e}nyi relative entropies
are ordered in the following sense \cite{TCR09,T15}:%
\begin{equation}
D_{\alpha}(\rho\Vert\sigma)\geq D_{\beta}(\rho\Vert\sigma),
\label{eq:app-Petz-Renyi-ordered}%
\end{equation}
for $\alpha\geq\beta>0$.

The sandwiched R\'enyi relative entropy is defined for states $\rho$ and
$\sigma$ as \cite{muller2013quantum,WWY14}%
\begin{align}
\widetilde{D}_{\alpha}(\rho\Vert\sigma)  &  :=\frac{1}{\alpha-1}\log
_{2}\operatorname{Tr}[(\sigma^{(1-\alpha)/2\alpha}\rho\sigma^{(1-\alpha
)/2\alpha})^{\alpha}]\nonumber\\
&  =\frac{\alpha}{\alpha-1}\log_{2}\left\Vert \sigma^{(1-\alpha)/2\alpha}%
\rho\sigma^{(1-\alpha)/2\alpha}\right\Vert _{\alpha}\nonumber\\
&  =\frac{2\alpha}{\alpha-1}\log_{2}\left\Vert \sigma^{(1-\alpha)/2\alpha}%
\rho^{1/2}\right\Vert _{2\alpha},
\end{align}
if $\alpha\in(0,1)$ or $\alpha\in(1,\infty)$ and $\operatorname{supp}%
(\rho)\subseteq\operatorname{supp}(\sigma)$. If $\alpha\in(1,\infty)$ and
$\operatorname{supp}(\rho)\not \subseteq \operatorname{supp}(\sigma)$, then
$\widetilde{D}_{\alpha}(\rho\Vert\sigma)=\infty$. The sandwiched R\'enyi
relative entropy obeys the following data processing inequality \cite{FL13}%
\ for a quantum channel $\mathcal{N}$ and $\alpha\in\lbrack1/2,1)\cup
(1,\infty)$:%
\begin{equation}
\widetilde{D}_{\alpha}(\rho\Vert\sigma)\geq\widetilde{D}_{\alpha}%
(\mathcal{N}(\rho)\Vert\mathcal{N}(\sigma)).
\label{eq:app-qdp-sandwiched-Renyi}%
\end{equation}
(See~\cite{W17f} for an alternative proof of
\eqref{eq:app-qdp-sandwiched-Renyi}.) The following limits hold%
\begin{align}
\lim_{\alpha\rightarrow1}\widetilde{D}_{\alpha}(\rho\Vert\sigma)  &
=D(\rho\Vert\sigma),\label{eq:app-sandwiched-Renyi-a-1-rel-ent}\\
\lim_{\alpha\rightarrow\infty}\widetilde{D}_{\alpha}(\rho\Vert\sigma)  &
=D_{\max}(\rho\Vert\sigma),\label{eq:app-dmax-limit}\\
\lim_{\alpha\rightarrow1/2}\widetilde{D}_{\alpha}(\rho\Vert\sigma)  &  =-\log
F(\rho,\sigma),
\end{align}
where%
\begin{equation}
F(\rho,\sigma):=\left\Vert \sqrt{\rho}\sqrt{\sigma}\right\Vert _{1}^{2}%
\end{equation}
is the quantum fidelity \cite{U76}. The sandwiched R\'enyi relative entropies
are ordered in the following sense \cite{muller2013quantum}:%
\begin{equation}
\widetilde{D}_{\alpha}(\rho\Vert\sigma)\geq\widetilde{D}_{\beta}(\rho
\Vert\sigma), \label{eq:app-sandwiched-Renyi-ordered}%
\end{equation}
for $\alpha\geq\beta>0$.

Note that the following inequality holds
\begin{equation}
D_{\min}(\rho\Vert\sigma) \leq\widetilde{D}_{1/2}(\rho\Vert\sigma),
\end{equation}
as a consequence of the equality \cite{FC95}
\begin{equation}
F(\rho,\sigma) = \left(  \inf_{\{\Lambda_{x}\}_{x}} \sum_{x} \sqrt
{\operatorname{Tr}[\Lambda_{x} \rho]\operatorname{Tr}[\Lambda_{x}\sigma
]}\right)  ^{2},
\end{equation}
where the optimization is with respect to POVMs $\{\Lambda_{x}\}_{x}$, and by
choosing this POVM suboptimally as $\left\{  \Pi_{\rho}, I - \Pi_{\rho
}\right\}  $.

The min-relative entropy obeys the data processing inequality for states
$\rho$ and $\sigma$ and a quantum channel $\mathcal{N}$:%
\begin{equation}
D_{\min}(\rho\Vert\sigma)\geq D_{\min}(\mathcal{N}(\rho)\Vert\mathcal{N}%
(\sigma)). \label{eq:app-DP-PR-order-0}%
\end{equation}
This inequality was proved in \cite{D09} by utilizing its relation to the
Petz--R\'{e}nyi relative entropies. For an alternative proof, first note that
the inequality in \eqref{eq:app-DP-PR-order-0} is equivalent to%
\begin{equation}
\operatorname{Tr}[\Pi_{\rho}\sigma]\leq\operatorname{Tr}[\Pi_{\mathcal{N}%
(\rho)}\mathcal{N}(\sigma)].
\end{equation}
To see the latter, let $U$ be an isometric extension of the channel
$\mathcal{N}$, so that
\begin{equation}
\mathcal{N}_{A\rightarrow B}(\omega_{A})=\operatorname{Tr}_{E}[U_{A\rightarrow
BE}\omega_{A}(U_{A\rightarrow BE})^{\dag}].
\end{equation}
Then we find that%
\begin{align}
\operatorname{Tr}[\Pi_{\rho}\sigma]  &  =\operatorname{Tr}[U\Pi_{\rho}U^{\dag
}U\sigma U^{\dag}]\\
&  =\operatorname{Tr}[\Pi_{U\rho U^{\dag}}U\sigma U^{\dag}]\\
&  \leq\operatorname{Tr}[\left(  \Pi_{\mathcal{N}(\rho)}\otimes I_{E}\right)
U\sigma U^{\dag}]\\
&  =\operatorname{Tr}[\Pi_{\mathcal{N}(\rho)}\mathcal{N}(\sigma)].
\end{align}
The first equality follows because $U\Pi_{\rho}U^{\dag}=\Pi_{U\rho U^{\dag}}$.
The inequality follows because the support of $U\rho U^{\dag}$ is contained in
the support of $\mathcal{N}(\rho)\otimes I_{E}$ \cite[Appendix~B]{Renner2005}.

The smooth min-relative entropy obeys the data processing inequality as well,
in fact for any trace non-increasing positive map $\mathcal{N}$ and for all
$\varepsilon\in\left(  0,1\right)  $:%
\begin{equation}
D_{\min}^{\varepsilon}(\rho\Vert\sigma)\geq D_{\min}^{\varepsilon}%
(\mathcal{N}(\rho)\Vert\mathcal{N}(\sigma)). \label{eq:hypo-DP}%
\end{equation}
This follows from the definition. Let $\Lambda$ be an arbitrary operator such
that $\operatorname{Tr}[\Lambda\mathcal{N}(\rho)]\geq1-\varepsilon$ and
$0\leq\Lambda\leq I$. Then it follows that $\operatorname{Tr}[\mathcal{N}%
^{\dag}(\Lambda)\rho]=\operatorname{Tr}[\Lambda\mathcal{N}(\rho)]\geq
1-\varepsilon$ and $0\leq\mathcal{N}^{\dag}(\Lambda)\leq\mathcal{N}^{\dag
}(I)\leq I$, the latter inequalities following because $\mathcal{N}^{\dag}$ is
a positive map if $\mathcal{N}$ is and $\mathcal{N}^{\dag}$ is subunital if
$\mathcal{N}$ is trace non-increasing. So then $\mathcal{N}^{\dag}(\Lambda)$
is a candidate for $D_{\min}^{\varepsilon}(\rho\Vert\sigma)$ and thus
$D_{\min}^{\varepsilon}(\rho\Vert\sigma)\geq-\log\operatorname{Tr}%
[\mathcal{N}^{\dag}(\Lambda)\sigma]=-\log\operatorname{Tr}[\Lambda
\mathcal{N}(\sigma)]$. Since the argument holds for an arbitrary $\Lambda$
satisfying $\operatorname{Tr}[\Lambda\mathcal{N}(\rho)]\geq1-\varepsilon$ and
$0\leq\Lambda\leq I$, we conclude \eqref{eq:hypo-DP}.

The max-relative entropy also obeys the data processing inequality for an
arbitrary positive map $\mathcal{N}$:%
\begin{equation}
D_{\max}(\rho\Vert\sigma)\geq D_{\max}(\mathcal{N}(\rho)\Vert\mathcal{N}%
(\sigma)). \label{eq:dmax-DP}%
\end{equation}
To see this, let $\lambda$ be such that $\rho\leq2^{\lambda}\sigma$. Then from
the fact that $\mathcal{N}$ is positive, it follows that $\mathcal{N}%
(\rho)\leq2^{\lambda}\mathcal{N}(\sigma)$. It then follows that%
\begin{align}
\lambda &  \geq\inf\left\{  \mu:\mathcal{N}(\rho)\leq2^{\mu}\mathcal{N}%
(\sigma)\right\} \\
&  =D_{\max}(\mathcal{N}(\rho)\Vert\mathcal{N}(\sigma)).
\end{align}
Since this is true for arbitrary $\lambda$ satisfying $\rho\leq2^{\lambda
}\sigma$, we conclude \eqref{eq:dmax-DP}.

The smooth max-relative entropy obeys the data processing inequality for a
positive, trace-preserving map $\mathcal{N}$ and for all $\varepsilon
\in\left(  0,1\right)  $:%
\begin{equation}
D_{\max}^{\varepsilon}(\rho\Vert\sigma)\geq D_{\max}^{\varepsilon}%
(\mathcal{N}(\rho)\Vert\mathcal{N}(\sigma)). \label{eq:dp-smooth-d-max}%
\end{equation}
To see this, let $\widetilde{\rho}$ be an arbitrary state such that%
\begin{equation}
\frac{1}{2}\left\Vert \widetilde{\rho}-\rho\right\Vert _{1}\leq\varepsilon.
\label{eq:eps-close-d-max}%
\end{equation}
Then from the data processing inequality for normalized trace distance under
positive trace-preserving maps, it follows that%
\begin{equation}
\frac{1}{2}\left\Vert \mathcal{N}(\widetilde{\rho})-\mathcal{N}(\rho
)\right\Vert _{1}\leq\varepsilon.
\end{equation}
So it follows that%
\begin{align}
D_{\max}(\widetilde{\rho}\Vert\sigma)  &  \geq D_{\max}(\mathcal{N}%
(\widetilde{\rho})\Vert\mathcal{N}(\sigma))\\
&  \geq D_{\max}^{\varepsilon}(\mathcal{N}(\rho)\Vert\mathcal{N}(\sigma)).
\end{align}
Since the inequality holds for an arbitrary state $\widetilde{\rho}$
satisfying \eqref{eq:eps-close-d-max}, we conclude \eqref{eq:dp-smooth-d-max}.

Since all of the above quantities obey the data processing inequality for
quantum channels, we conclude that they are invariant under the action of an
isometric channel $\mathcal{U}(\cdot)=U(\cdot)U^{\dag}$:%
\begin{align}
D_{\min}(\rho\Vert\sigma)  &  =D_{\min}(\mathcal{U}(\rho)\Vert\mathcal{U}%
(\sigma)),\\
D_{\min}^{\varepsilon}(\rho\Vert\sigma)  &  =D_{\min}^{\varepsilon
}(\mathcal{U}(\rho)\Vert\mathcal{U}(\sigma)),\\
D_{\max}(\rho\Vert\sigma)  &  =D_{\max}(\mathcal{U}(\rho)\Vert\mathcal{U}%
(\sigma)),\\
D_{\max}^{\varepsilon}(\rho\Vert\sigma)  &  =D_{\max}^{\varepsilon
}(\mathcal{U}(\rho)\Vert\mathcal{U}(\sigma)),
\end{align}
which follows because $\mathcal{U}$ is a channel and the channel in
\eqref{eq:invert-isometry} perfectly reverses the action of $\mathcal{U}$.

As stated in \eqref{eq:limit-smooth-dmin-to-dmin}, the following limit holds%
\begin{equation}
\lim_{\varepsilon\rightarrow0}D_{\min}^{\varepsilon}(\rho\Vert\sigma)=D_{\min
}(\rho\Vert\sigma). \label{eq:lim-dmineps-to-dmin}%
\end{equation}
In the main text, we provided an operational proof of this limit. An
alternative proof goes as follows. Consider that the following inequality
holds for all $\varepsilon\in(0,1)$:%
\begin{equation}
D_{\min}^{\varepsilon}(\rho\Vert\sigma)\geq D_{\min}(\rho\Vert\sigma),
\end{equation}
because the measurement operator $\Pi_{\rho}$ (projection onto support of
$\rho$) satisfies $\operatorname{Tr}[\Pi_{\rho}\rho]\geq1-\varepsilon$ for all
$\varepsilon\in(0,1)$. So we conclude that%
\begin{equation}
\liminf_{\varepsilon\rightarrow0}D_{\min}^{\varepsilon}(\rho\Vert\sigma)\geq
D_{\min}(\rho\Vert\sigma). \label{eq:liminf-dmineps-to-dmin}%
\end{equation}
Alternatively, suppose that $\Lambda$ is a measurement operator satisfying
$\operatorname{Tr}[\Lambda\rho]=1-\varepsilon$ (note that when optimizing
$D_{\min}^{\varepsilon}$, it suffices to optimize over measurement operators
satisfying the constraint $\operatorname{Tr}[\Lambda\rho]\geq1-\varepsilon$
with equality \cite{KW17}). Then applying the data processing inequality for
$D_{\alpha}(\rho\Vert\sigma)$ under the measurement $\left\{  \Lambda
,I-\Lambda\right\}  $, which holds for $\alpha\in(0,1)$, we find that%
\begin{multline}
D_{\alpha}(\rho\Vert\sigma)\geq\\
\frac{1}{\alpha-1}\log_{2}\left[  \left(  1-\varepsilon\right)  ^{\alpha
}\operatorname{Tr}[\Lambda\sigma]^{1-\alpha}+\varepsilon^{\alpha}\left(
1-\operatorname{Tr}[\Lambda\sigma]\right)  ^{1-\alpha}\right]  .
\end{multline}
Since this bound holds for all measurement operators $\Lambda$ satisfying
$\operatorname{Tr}[\Lambda\rho]=1-\varepsilon$, we conclude the following
bound for all $\alpha\in(0,1)$:%
\begin{multline}
D_{\alpha}(\rho\Vert\sigma)\geq\\
\frac{1}{\alpha-1}\log_{2}\left[
\begin{array}
[c]{c}%
\left(  1-\varepsilon\right)  ^{\alpha}\left(  2^{-D_{\min}^{\varepsilon}%
(\rho\Vert\sigma)}\right)  ^{1-\alpha}\\
+\ \varepsilon^{\alpha}\left(  1-2^{-D_{\min}^{\varepsilon}(\rho\Vert\sigma
)}\right)  ^{1-\alpha}%
\end{array}
\right]  .
\end{multline}
Now taking the limit of the right-hand side as $\varepsilon\rightarrow0$, we
find that the following bound holds for all $\alpha\in(0,1)$:%
\begin{equation}
D_{\alpha}(\rho\Vert\sigma)\geq\limsup_{\varepsilon\rightarrow0}D_{\min
}^{\varepsilon}(\rho\Vert\sigma).
\end{equation}
Since the bound holds for all $\alpha\in(0,1)$, we can take the limit on the
left-hand side to arrive at%
\begin{equation}
\lim_{\alpha\rightarrow0}D_{\alpha}(\rho\Vert\sigma)=D_{\min}(\rho\Vert
\sigma)\geq\limsup_{\varepsilon\rightarrow0}D_{\min}^{\varepsilon}(\rho
\Vert\sigma). \label{eq:limsup-dmineps-to-dmin}%
\end{equation}
Now putting together \eqref{eq:liminf-dmineps-to-dmin} and
\eqref{eq:limsup-dmineps-to-dmin}, we conclude~\eqref{eq:lim-dmineps-to-dmin}.

As stated in \eqref{eq:limit-smooth-dmax-to-dmax}, the following limit holds%
\begin{equation}
\lim_{\varepsilon\rightarrow0}D_{\max}^{\varepsilon}(\rho\Vert\sigma)=D_{\max
}(\rho\Vert\sigma). \label{eq:lim-dmaxeps-to-dmax}%
\end{equation}
In the main text, we provided an operational proof of this limit. An
alternative proof goes as follows. Consider that the following bound holds for
all $\varepsilon\in(0,1)$:%
\begin{equation}
D_{\max}^{\varepsilon}(\rho\Vert\sigma)\leq D_{\max}(\rho\Vert\sigma),
\end{equation}
which follows as a simple consequence of the fact that we can always set
$\widetilde{\rho}=\rho$. Then the following limit holds%
\begin{equation}
\limsup_{\varepsilon\rightarrow0}D_{\max}^{\varepsilon}(\rho\Vert\sigma)\leq
D_{\max}(\rho\Vert\sigma). \label{eq:limsup-dmaxeps-to-dmax}%
\end{equation}
To see the other inequality, let $\widetilde{\rho}$ be a state satisfying
$\frac{1}{2}\left\Vert \widetilde{\rho}-\rho\right\Vert _{1}\leq\varepsilon$.
Then this means that $\left\Vert \widetilde{\rho}-\rho\right\Vert _{\infty
}\leq2\varepsilon$. Consider that%
\begin{align}
&  D_{\max}(\widetilde{\rho}\Vert\sigma)\nonumber\\
&  =\log_{2}\!\left\Vert \sigma^{-1/2}\widetilde{\rho}\sigma^{-1/2}\right\Vert
_{\infty}\nonumber\\
&  \geq\log_{2}\!\left(  \left\Vert \sigma^{-1/2}\rho\sigma^{-1/2}\right\Vert
_{\infty}-\left\Vert \sigma^{-1/2}\left(  \widetilde{\rho}-\rho\right)
\sigma^{-1/2}\right\Vert _{\infty}\right) \nonumber\\
&  \geq\log_{2}\!\left(  \left\Vert \sigma^{-1/2}\rho\sigma^{-1/2}\right\Vert
_{\infty}-\left\Vert \sigma^{-1/2}\right\Vert _{\infty}^{2}\left\Vert
\widetilde{\rho}-\rho\right\Vert _{\infty}\right) \nonumber\\
&  \geq\log_{2}\!\left(  \left\Vert \sigma^{-1/2}\rho\sigma^{-1/2}\right\Vert
_{\infty}-2\varepsilon\left\Vert \sigma^{-1}\right\Vert _{\infty}\right)  .
\end{align}
Since this bound holds for all $\widetilde{\rho}$ satisfying $\frac{1}%
{2}\left\Vert \widetilde{\rho}-\rho\right\Vert _{1}\leq\varepsilon$, we
conclude that%
\begin{equation}
D_{\max}^{\varepsilon}(\rho\Vert\sigma)\geq\log_{2}\left(  \left\Vert
\sigma^{-1/2}\rho\sigma^{-1/2}\right\Vert _{\infty}-2\varepsilon\left\Vert
\sigma^{-1}\right\Vert _{\infty}\right)  .
\end{equation}
Then taking the limit $\varepsilon\rightarrow0$, we find that%
\begin{align}
\liminf_{\varepsilon\rightarrow0}D_{\max}^{\varepsilon}(\rho\Vert\sigma)  &
\geq\log_{2}\left\Vert \sigma^{-1/2}\rho\sigma^{-1/2}\right\Vert _{\infty
}\nonumber\\
&  =D_{\max}(\rho\Vert\sigma). \label{eq:liminf-dmaxeps-to-dmax}%
\end{align}
Putting together \eqref{eq:limsup-dmaxeps-to-dmax} and
\eqref{eq:liminf-dmaxeps-to-dmax}, we conclude \eqref{eq:lim-dmaxeps-to-dmax}.

\section{SDPs for smooth min- and max-relative entropies}

\label{app:SDPs-smooth-min-max}

Here we show that the smooth min- and max-relative entropies are characterized
by semi-definite programs. We also give the dual programs for convenience.  

Consider that%
\begin{equation}
D_{\min}^{\varepsilon}(\rho\Vert\sigma)=-\log_{2}\inf_{\Lambda\geq 0}\left\{
\begin{array}
[c]{c}%
\operatorname{Tr}[\Lambda\sigma]:
\Lambda\leq I,\\
\quad\operatorname{Tr}[\Lambda\rho]\geq1-\varepsilon
\end{array}
\right\}  ,
\end{equation}
which is an SDP as written. The dual SDP is given by
\begin{equation}
-\log_{2}\sup_{\mu,X\geq 0}\left\{
\begin{array}
[c]{c}%
\mu\left(  1-\varepsilon\right)  -\operatorname{Tr}[X]:\\
\mu\rho\leq\sigma+X
\end{array}
\right\}  ,
\end{equation}
and is equal to $D_{\min}^{\varepsilon}(\rho\Vert\sigma)$ by strong duality.
See \cite{DKFRR12} in this context.

By employing the definition of the smooth max-relative entropy in
\eqref{eq:smooth-max-rel-ent} and the dual characterization of the normalized
trace distance in \eqref{eq:TD-dual}, we find that%
\begin{equation}
D_{\max}^{\varepsilon}(\rho\Vert\sigma)=\log\inf\left\{
\begin{array}
[c]{c}%
\lambda:\\
\widetilde{\rho}\leq\lambda\sigma,\ \operatorname{Tr}[Y]\leq\varepsilon,\\
\operatorname{Tr}[\widetilde{\rho}]=1,\ Y\geq\rho-\widetilde{\rho},\\
\widetilde{\rho},\ Y\geq0
\end{array}
\right\}  .
\end{equation}
The dual SDP\ is given by%
\begin{equation}
\log\sup\left\{
\begin{array}
[c]{c}%
\operatorname{Tr}[Q\rho]+\mu-\varepsilon t:\\
\operatorname{Tr}[X\sigma]\leq1,\ Q\leq tI,\\
Q+\mu I\leq X,\\
X,Q,t\geq0,\ \mu\in\mathbb{R}%
\end{array}
\right\}  ,
\end{equation}
and is equal to $D_{\max}^{\varepsilon}(\rho\Vert\sigma)$ by strong duality.

\section{Approximate box transformation is an SDP}

\label{app:approx-box-tr-SDP}

We prove that the approximate box transformation
problem can be computed by a semi-definite program. First, recall that the
problem is characterized by%
\begin{multline}
\varepsilon((\rho,\sigma)\rightarrow(\tau,\omega)):=\\
\inf_{\mathcal{N}\in\text{CPTP}}\left\{  \varepsilon\in\left[  0,1\right]
:\mathcal{N}(\rho)\approx_{\varepsilon}\tau,\ \mathcal{N}(\sigma
)=\omega\right\}  ,
\end{multline}
for states $\rho$, $\sigma$, $\tau$, and $\omega$. By employing the dual form
of the trace distance from \eqref{eq:TD-dual}, we find that%
\begin{multline}
\varepsilon((\rho,\sigma)\rightarrow(\tau,\omega))=\\
\inf_{Y_{B},J_{RB}^{\mathcal{N}}}\left\{
\begin{array}
[c]{c}%
\operatorname{Tr}[Y_{B}]:\\
Y_{B}\geq\tau_{B}-\operatorname{Tr}_{R}[\rho_{R}^{T}J_{RB}^{\mathcal{N}}],\\
\operatorname{Tr}_{R}[\sigma_{R}^{T}J_{RB}^{\mathcal{N}}]=\omega_{B},\\
\operatorname{Tr}_{B}[J_{RB}^{\mathcal{N}}]=I_{R},\ \ Y_{B},J_{RB}%
^{\mathcal{N}}\geq0
\end{array}
\right\}  .
\end{multline}
The dual program is given by%
\begin{multline}
\varepsilon((\rho,\sigma)\rightarrow(\tau,\omega))=\\
\sup_{X_{B},W_{B},Z_{R}}\left\{
\begin{array}
[c]{c}%
\operatorname{Tr}[\tau_{B}X_{B}]+\operatorname{Tr}[\omega_{B}W_{B}%
]+\operatorname{Tr}[Z_{R}]:\\
X_{B}\leq I_{B},\\
\rho_{R}^{T}\otimes X_{B}+\sigma_{R}^{T}\otimes W_{B}+Z_{R}\otimes I_{B}%
\leq0,\\
X_{B}\geq0,\ W_{B},Z_{R}\in\text{Herm}%
\end{array}
\right\}  ,
\end{multline}
with the equality holding from strong duality.

\section{Impossibility of distinguishability increasing transformations}

\label{app:impossibility-result}It is impossible for a quantum channel
$\mathcal{N}$\ to increase the distinguishability of a box $(\rho,\sigma)$.
That is, it impossible for the transformation $(\rho,\sigma)\ \underrightarrow
{\mathcal{N}}\ (\mathcal{N}(\rho),\mathcal{N}(\sigma))$ to be such that the
distinguishability of $(\mathcal{N}(\rho),\mathcal{N}(\sigma))$ is strictly
larger than the distinguishability of $(\rho,\sigma)$. This follows as a
direct consequence of the data processing inequality for quantum relative
entropy \cite{Lin75}:%
\begin{equation}
D(\rho\Vert\sigma)\geq D(\mathcal{N}(\rho)\Vert\mathcal{N}(\sigma
)),\label{eq:DP-QRE}%
\end{equation}
when using quantum relative entropy as a quantifier of distinguishability.

For the specific transformation in \eqref{eq:outlawed-trans}, we find that%
\begin{align}
m  &  =D(|0\rangle\langle0|^{\otimes m}\Vert\pi^{\otimes m}),\\
n  &  =D(|0\rangle\langle0|^{\otimes n}\Vert\pi^{\otimes n}),
\end{align}
so that if the transformation in \eqref{eq:outlawed-trans} existed, it would
violate \eqref{eq:DP-QRE}, due to the assumption $n>m$.

The fact that the transformation in \eqref{eq:outlawed-trans} does not exist
can also be seen as a consequence of the linearity of quantum channels. Let us
first suppose that the boxes $(|0\rangle\langle0|^{\otimes m},\pi^{\otimes
m})$ and $(|0\rangle\langle0|^{\otimes n},\pi^{\otimes n})$ have been
reversibly transformed to their standard form as%
\begin{align}
&  (  |0\rangle\langle0|,\pi_{2^{m}})  ,\\
&  (  |0\rangle\langle0|,\pi_{2^{n}})  ,
\end{align}
respectively, where we recall that $\pi_{2^{m}}=2^{-m}|0\rangle\langle
0|+\left(  1-2^{-m}\right)  |1\rangle\langle1|$. Then the original question is
equivalent to the question of whether there exists a channel $\mathcal{N}$
that takes the first box to the second for $n>m$. Such a channel would then
perform the transformations:%
\begin{equation}
\mathcal{N}(|0\rangle\langle0|)=|0\rangle\langle0|,
\end{equation}
\vspace{-0.3in}
\begin{multline}
\mathcal{N}(2^{-m}|0\rangle\langle0|+\left(  1-2^{-m}\right)  |1\rangle
\langle1|)\\
=2^{-n}|0\rangle\langle0|+\left(  1-2^{-n}\right)  |1\rangle\langle1|.
\end{multline}
By linearity of the channel, consider that we can conclude the action of the
channel on the orthogonal state $|1\rangle\langle1|$:%
\begin{align}
&  \mathcal{N}(|1\rangle\langle1|)\nonumber\\
&  =\mathcal{N}(\left(  1-2^{-m}\right)  ^{-1}\left(  \pi_{2^{m}}%
-2^{-m}|0\rangle\langle0|\right)  )\\
&  =\left(  1-2^{-m}\right)  ^{-1}\left[  \mathcal{N}(\pi_{2^{m}}%
)-2^{-m}\mathcal{N}(|0\rangle\langle0|)\right]  \\
&  =\left(  1-2^{-m}\right)  ^{-1}\pi_{2^{n}}-\left(  1-2^{-m}\right)
^{-1}2^{-m}|0\rangle\langle0|\\
&  =\left(  1-2^{-m}\right)  ^{-1}\left(  2^{-n}|0\rangle\langle0|+\left(
1-2^{-n}\right)  |1\rangle\langle1|\right)  \nonumber\\
&  \qquad-\left(  1-2^{-m}\right)  ^{-1}2^{-m}|0\rangle\langle0|\\
&  =\frac{2^{-n}-2^{-m}}{  1-2^{-m}  }|0\rangle\langle
0|+\frac{1-2^{-n}}{1-2^{-m}}|1\rangle\langle1|.
\end{align}
If $n>m$, then we have that $\frac{2^{-n}-2^{-m}}{\left(  1-2^{-m}\right)  }%
<0$, so that $\mathcal{N}(|1\rangle\langle1|)$ is not a quantum state. Thus,
there cannot exist a quantum channel performing the transformation in
\eqref{eq:outlawed-trans} whenever $n>m$.

By the same reasoning, we have that $(|0\rangle\langle0|,\pi_M)\not \rightarrow
(|0\rangle\langle0|,\pi_N)$ whenever $N>M$.

\section{Entropic characterizations of exact distinguishability distillation
and dilution}

\subsection{Exact distillable distinguishability}

\label{app:exact-distill-distin}We prove the equality in
\eqref{eq:exact-distillable-distinguishability}:%
\begin{equation}
D_{d}^{0}(\rho,\sigma)=D_{\min}(\rho\Vert\sigma).
\end{equation}
Recall that%
\begin{multline}
D_{d}^{0}(\rho,\sigma):=\label{eq:app-D0-distill}\\
\log_{2}\sup_{\mathcal{P}\in\text{CPTP}}\left\{  M:\mathcal{P}(\rho
)=|0\rangle\langle0|,\ \mathcal{P}(\sigma)=\pi_{M}\right\}  ,
\end{multline}
First suppose that $\operatorname{Tr}[\Pi_{\rho}\sigma]\neq0$. Consider that
the measurement channel%
\begin{equation}
\mathcal{M}(\omega)=\operatorname{Tr}[\Pi_{\rho}\omega]|0\rangle
\langle0|+\operatorname{Tr}[\left(  I-\Pi_{\rho}\right)  \omega]|1\rangle
\langle1|
\end{equation}
achieves%
\begin{align}
\mathcal{M}(\rho) &  =|0\rangle\langle0|,\\
\mathcal{M}(\sigma) &  =\operatorname{Tr}[\Pi_{\rho}\sigma]|0\rangle
\langle0|+\operatorname{Tr}[\left(  I-\Pi_{\rho}\right)  \sigma]|1\rangle
\langle1|\\
&  =\pi_{M=1/\operatorname{Tr}[\Pi_{\rho}\sigma]},
\end{align}
so that%
\begin{align}
D_{d}^{0}(\rho,\sigma) &  \geq\log_{2}\left(  1/\operatorname{Tr}[\Pi_{\rho
}\sigma]\right)  \label{eq:app-D0-distill-lower-D0-1}\\
&  =-\log_{2}\operatorname{Tr}[\Pi_{\rho}\sigma]\\
&  =D_{\min}(\rho\Vert\sigma).\label{eq:app-D0-distill-lower-D0-3}%
\end{align}

Now let $\mathcal{P}$ be a particular quantum channel such that $\mathcal{P}%
(\rho)=|0\rangle\langle0|$ and $\mathcal{P}(\sigma)=\pi_{M}$. Then by the
data-processing inequality for $D_{\min}$ as recalled in
\eqref{eq:app-DP-PR-order-0}, we find that%
\begin{align}
D_{\min}(\rho\Vert\sigma) &  \geq D_{\min}(\mathcal{P}(\rho)\Vert
\mathcal{P}(\sigma))\\
&  =D_{\min}(|0\rangle\langle0|\Vert\pi_{M})\\
&  =\log_{2}M.
\end{align}
Since the inequality $D_{\min}(\rho\Vert\sigma)\geq\log_{2}M$ holds for all
channels $\mathcal{P}$ satisfying the constraints in
\eqref{eq:app-D0-distill}, we conclude that%
\begin{equation}
D_{\min}(\rho\Vert\sigma)\geq D_{d}^{0}(\rho,\sigma
).\label{eq:app-D0-distill-upper-D0}%
\end{equation}
Combining
\eqref{eq:app-D0-distill-lower-D0-1}--\eqref{eq:app-D0-distill-lower-D0-3} and
\eqref{eq:app-D0-distill-upper-D0}, we conclude the equality in
\eqref{eq:exact-distillable-distinguishability}, i.e., $D_{\min}(\rho
\Vert\sigma)=D_{d}^{0}(\rho,\sigma)$.

In the case that $\operatorname{Tr}[\Pi_{\rho}\sigma]=0$, then this means that
the measurement channel above is such that $\mathcal{M}(\rho)=|0\rangle
\langle0|$ and $\mathcal{M}(\sigma)=|1\rangle\langle1|$. In this case, as
stated in the main text, the interpretation is that the box $(\rho,\sigma)$
contains an infinite number of bits of asymmetric distinguishability, so that
$D_{d}^{0}(\rho,\sigma)=\infty$. This is consistent with $D_{\min}(\rho
\Vert\sigma)=\infty$ in this case.

\subsection{Exact distinguishability cost}

\label{app:exact-dist-cost}We now prove the equality in \eqref{eq:dist-cost}:%
\begin{equation}
D_{c}^{0}(\rho,\sigma)=D_{\max}(\rho\Vert\sigma).
\end{equation}
First recall that%
\begin{multline}
D_{c}^{0}(\rho,\sigma):=\label{eq:app-dist-cost}\\
\log_{2}\inf_{\mathcal{P}\in\text{CPTP}}\left\{  M:\mathcal{P}(|0\rangle
\langle0|)=\rho,\ \mathcal{P}(\pi_{M})=\sigma\right\}  .
\end{multline}
Let us first suppose that $\operatorname{supp}(\rho)\subseteq
\operatorname{supp}(\sigma)$ and $D_{\max}(\rho\Vert\sigma)=0$. By definition,
this means that the condition $\rho\leq\sigma$ holds, which in turn implies
that $\sigma-\rho\geq0$. Given the characterization of the normalized trace
distance in \eqref{eq:TD-dual}, this means that we can set $Y=\sigma-\rho$.
Since $\operatorname{Tr}[Y]=0$, we conclude that $\frac{1}{2}\left\Vert
\sigma-\rho\right\Vert _{1}=0$. Since $\left\Vert \cdot\right\Vert _{1}$ is a
norm, this means that $\rho=\sigma$. So in this trivial case, it follows that
we can take $\mathcal{P}$ in \eqref{eq:app-dist-cost} to be the replacer
channel $\operatorname{Tr}[\cdot]\rho$ and it follows that we can achieve the
dilution task with zero bits of asymmetric distinguishability. So then
$D_{c}^{0}(\rho,\sigma)=0$ if $\operatorname{supp}(\rho)\subseteq
\operatorname{supp}(\sigma)$ and $D_{\max}(\rho\Vert\sigma)=0$.

Now suppose that $\operatorname{supp}(\rho)\subseteq\operatorname{supp}%
(\sigma)$ and $D_{\max}(\rho\Vert\sigma)>0$. Let $\lambda>0$ be such that
$2^{\lambda}\sigma\geq\rho$. This then means that $2^{\lambda}\sigma-\rho
\geq0$, so that $\omega:=\frac{2^{\lambda}\sigma-\rho}{2^{\lambda}-1}$ is a
quantum state. Furthermore, we have that%
\begin{equation}
\sigma=2^{-\lambda}\rho+\left(  1-2^{-\lambda}\right)  \omega.
\end{equation}
Then by means of the following channel%
\begin{equation}
\mathcal{P}(\tau)=\langle0|\tau|0\rangle\rho+\langle1|\tau|1\rangle\omega,
\end{equation}
we have that%
\begin{align}
\mathcal{P}(|0\rangle\langle0|) &  =\rho,\label{eq:dilute-protocol-one-shot-1}%
\\
\mathcal{P}(\pi_{2^{\lambda}}) &  =2^{-\lambda}\rho+\left(  1-2^{-\lambda
}\right)  \omega=\sigma,\label{eq:dilute-protocol-one-shot-2}%
\end{align}
so that this protocol accomplishes the distinguishability dilution task. This
means that%
\begin{equation}
D_{c}^{0}(\rho,\sigma)\leq\lambda.
\end{equation}
Now taking the infimum over all $\lambda$ satisfying $2^{\lambda}\sigma
\geq\rho$, we conclude that%
\begin{equation}
D_{c}^{0}(\rho,\sigma)\leq D_{\max}(\rho\Vert\sigma
).\label{eq:app-dist-cost-lower-Dmax}%
\end{equation}

Now consider an arbitrary channel $\mathcal{P}$\ that accomplishes the
transformation $(|0\rangle\langle0|,\pi)\rightarrow(\rho,\sigma)$. By the data
processing inequality for the max-relative entropy as recalled in
\eqref{eq:dmax-DP}, we have that%
\begin{align}
\log_{2}M &  =D_{\max}(|0\rangle\langle0|\Vert\pi_{M}%
)\label{eq:app-dist-cost-upper-Dmax-1}\\
&  \geq D_{\max}(\mathcal{P}(|0\rangle\langle0|)\Vert\mathcal{P}(\pi_{M}))\\
&  =D_{\max}(\rho\Vert\sigma).\label{eq:app-dist-cost-upper-Dmax-3}%
\end{align}
Taking an infimum over all such protocols, we conclude that%
\begin{equation}
D_{c}^{0}(\rho,\sigma)\geq D_{\max}(\rho\Vert\sigma
).\label{eq:app-dist-cost-upper-Dmax}%
\end{equation}
Putting together \eqref{eq:app-dist-cost-lower-Dmax} and
\eqref{eq:app-dist-cost-upper-Dmax}, we conclude the equality in
\eqref{eq:dist-cost}, i.e., $D_{c}^{0}(\rho,\sigma)=D_{\max}(\rho\Vert\sigma)$.

In the case that $\operatorname{supp}(\rho)\not \subseteq \operatorname{supp}%
(\sigma)$, we have that $\operatorname{Tr}[\Pi_{\sigma}\rho]<1$ and by
definition $D_{\max}(\rho\Vert\sigma)=\infty$. This is consistent with the
fact that, in such a case, there is no finite $\lambda\geq0$ such that
$2^{\lambda}\sigma-\rho\geq0$. For if there were, then we would have that%
\begin{align}
2^{\lambda}-1  &  =\operatorname{Tr}[\left(  2^{\lambda}\sigma-\rho\right)
]\\
&  =\operatorname{Tr}[\left\{  2^{\lambda}\sigma\geq\rho\right\}  \left(
2^{\lambda}\sigma-\rho\right)  ]\\
&  \geq\operatorname{Tr}[\Pi_{\sigma}\left(  2^{\lambda}\sigma-\rho\right)
]\\
&  =\operatorname{Tr}[2^{\lambda}\sigma]-\operatorname{Tr}[\Pi_{\sigma}\rho]\\
&  =2^{\lambda}-\operatorname{Tr}[\Pi_{\sigma}\rho],
\end{align}
where the inequality follows from $\operatorname{Tr}[\left\{  A\geq0\right\}
A]\geq\operatorname{Tr}[\Pi A]$ for any Hermitian operator $A$, projector
$\Pi$, and $\left\{  A\geq0\right\}  $ denoting the projection onto the positive
eigenspace of~$A$. The above implies that%
\begin{equation}
\operatorname{Tr}[\Pi_{\sigma}\rho]\geq1,
\end{equation}
contradicting the fact that $\operatorname{Tr}[\Pi_{\sigma}\rho]<1$ when
$\operatorname{supp}(\rho)\not \subseteq \operatorname{supp}(\sigma)$.

As explained in the main text, when $\operatorname{supp}(\rho)\not \subseteq
\operatorname{supp}(\sigma)$, there is no finite value of $M$ nor any quantum
channel $\mathcal{P}$\ such that $\mathcal{P}(|0\rangle\langle0|)=\rho$ and
$\mathcal{P}(\pi_{M})=\sigma$. If there were, then by the general fact that,
for a quantum channel $\mathcal{N}$ and states $\tau$ and $\omega$,
$\operatorname{supp}(\mathcal{N}(\tau))\subseteq\operatorname{supp}%
(\mathcal{N}(\omega))$ if $\operatorname{supp}(\tau)\subseteq
\operatorname{supp}(\omega)$ \cite[Appendix~B]{Renner2005} and the fact that
$\operatorname{supp}(|0\rangle\langle0|)\subseteq\operatorname{supp}(\pi_{M})$
for all $M<\infty$, the existence of such a channel $\mathcal{P}$ would
contradict the assumption that $\operatorname{supp}(\rho)\not \subseteq
\operatorname{supp}(\sigma)$. The interpretation then is as stated in the main
text:\ that $D_{c}^{0}(\rho,\sigma)=\infty$ when $\operatorname{supp}%
(\rho)\not \subseteq \operatorname{supp}(\sigma)$, which is consistent with
the fact that $D_{\max}(\rho\Vert\sigma)=\infty$ in such a case.

\section{Entropic characterizations of approximate distinguishability
distillation and dilution}

\subsection{Approximate distillable distinguishability}

\label{app:one-shot-distill-distin}We prove the equality in
\eqref{eq:approx-distill}:%
\begin{equation}
D_{d}^{\varepsilon}(\rho,\sigma)=D_{\min}^{\varepsilon}(\rho\Vert\sigma).
\end{equation}
First recall that%
\begin{multline}
D_{d}^{\varepsilon}(\rho,\sigma):=\\
\log_{2}\sup_{\mathcal{P}\in\text{CPTP}}\{M:\mathcal{P}(\rho)\approx
_{\varepsilon}|0\rangle\langle0|,\ \mathcal{P}(\sigma)=\pi_{M}\}.
\end{multline}
Let $\Lambda$ be an arbitrary measurement operator satisfying $0\leq
\Lambda\leq I$ and $\operatorname{Tr}[\Lambda\rho]\geq1-\varepsilon$. Then we
can take the channel $\mathcal{P}$ to be as follows:%
\begin{equation}
\mathcal{P}(\omega)=\operatorname{Tr}[\Lambda\omega]|0\rangle\langle
0|+\operatorname{Tr}[\left(  I-\Lambda\right)  \omega]|1\rangle\langle1|,
\end{equation}
and we find that%
\begin{align}
&  \frac{1}{2}\left\Vert \mathcal{P}(\rho)-|0\rangle\langle0|\right\Vert
_{1}\nonumber\\
&  =\frac{1}{2}\left\Vert \operatorname{Tr}[\Lambda\rho]|0\rangle
\langle0|+\operatorname{Tr}[\left(  I-\Lambda\right)  \rho]|1\rangle
\langle1|-|0\rangle\langle0|\right\Vert _{1}\\
&  =\frac{1}{2}\left\Vert -\operatorname{Tr}[\left(  I-\Lambda\right)
\rho]|0\rangle\langle0|+\operatorname{Tr}[\left(  I-\Lambda\right)
\rho]|1\rangle\langle1|\right\Vert _{1}\\
&  =\left(  \operatorname{Tr}[\left(  I-\Lambda\right)  \rho]\right)  \frac
{1}{2}\left\Vert |1\rangle\langle1|-|0\rangle\langle0|\right\Vert _{1}\\
&  =\operatorname{Tr}[\left(  I-\Lambda\right)  \rho]\leq\varepsilon.
\end{align}
Furthermore, we have that%
\begin{align}
\mathcal{P}(\sigma) &  =\operatorname{Tr}[\Lambda\sigma]|0\rangle
\langle0|+\operatorname{Tr}[\left(  I-\Lambda\right)  \sigma]|1\rangle
\langle1|\\
&  =\pi_{M=1/\operatorname{Tr}[\Lambda\sigma]}.
\end{align}
So this means that%
\begin{align}
D_{d}^{\varepsilon}(\rho,\sigma) &  \geq\log_{2}\!\left(  1/\operatorname{Tr}%
[\Lambda\sigma]\right)  \\
&  =-\log_{2}\operatorname{Tr}[\Lambda\sigma].
\end{align}
Now maximizing the right-hand side with respect to all $\Lambda$ satisfying
$0\leq\Lambda\leq I$ and $\operatorname{Tr}[\Lambda\rho]\geq1-\varepsilon$, we
conclude that%
\begin{equation}
D_{d}^{\varepsilon}(\rho,\sigma)\geq D_{\min}^{\varepsilon}(\rho\Vert
\sigma).\label{eq:app-dist-upper-DHeps}%
\end{equation}

To see the other inequality, let $\mathcal{P}$ be an arbitrary channel
satisfying $\mathcal{P}(\rho)\approx_{\varepsilon}|0\rangle\langle0|$
and$\ \mathcal{P}(\sigma)=\pi_{M}$. Then by the data processing inequality for
$D_{\min}^{\varepsilon}$, we have that%
\begin{align}
D_{\min}^{\varepsilon}(\rho\Vert\sigma)  &  \geq D_{\min}^{\varepsilon
}(\mathcal{P}(\rho)\Vert\mathcal{P}(\sigma))\\
&  =D_{\min}^{\varepsilon}(\mathcal{P}(\rho)\Vert\pi_{M})\\
&  \geq\log_{2}M. \label{eq:dh-eps-to-distill-logM}%
\end{align}
The last inequality above is a consequence of the following reasoning:\ Let
$\Delta(\cdot)=|0\rangle\langle0|\left(  \cdot\right)  |0\rangle
\langle0|+|1\rangle\langle1|\left(  \cdot\right)  |1\rangle\langle1|$ denote
the completely dephasing channel. Since $\mathcal{P}(\rho)\approx
_{\varepsilon}|0\rangle\langle0|$, we find from applying the data processing
inequality for normalized trace distance that%
\begin{align}
\varepsilon &  \geq\frac{1}{2}\left\Vert \mathcal{P}(\rho)-|0\rangle
\langle0|\right\Vert _{1}\nonumber\\
&  \geq\frac{1}{2}\left\Vert (\Delta\circ\mathcal{P})(\rho)-\Delta
(|0\rangle\langle0|)\right\Vert _{1}\nonumber\\
&  =\frac{1}{2}\left\Vert (\Delta\circ\mathcal{P})(\rho)-|0\rangle
\langle0|\right\Vert _{1}\nonumber\\
&  =\frac{1}{2}\left\Vert \langle0|\mathcal{P}(\rho)|0\rangle|0\rangle
\langle0|+\langle1|\mathcal{P}(\rho)|1\rangle|1\rangle\langle1|-|0\rangle
\langle0|\right\Vert _{1}\nonumber\\
&  =1-\langle0|\mathcal{P}(\rho)|0\rangle,
\end{align}
which implies that $\langle0|\mathcal{P}(\rho)|0\rangle\geq1-\varepsilon$.
Thus, we can take $\Lambda=|0\rangle\langle0|$ in the definition of $D_{\min
}^{\varepsilon}(\mathcal{P}(\rho)\Vert\pi_{M})$, and we have that
$\operatorname{Tr}[\Lambda\mathcal{P}(\rho)]\geq1-\varepsilon$ while
$\operatorname{Tr}[\Lambda\pi_{M}]=1/M$. Since $D_{\min}^{\varepsilon
}(\mathcal{P}(\rho)\Vert\pi_{M})$ involves an optimization over all
measurement operators $\Lambda$\ satisfying $\operatorname{Tr}[\Lambda
\mathcal{P}(\rho)]\geq1-\varepsilon$, we conclude the inequality in \eqref{eq:dh-eps-to-distill-logM}.

Since the inequality $D_{\min}^{\varepsilon}(\rho\Vert\sigma)\geq\log_{2}M$
holds for all possible distinguishability distillation protocols, we conclude
that%
\begin{equation}
D_{\min}^{\varepsilon}(\rho\Vert\sigma)\geq D_{d}^{\varepsilon}(\rho
,\sigma).\label{eq:app-dist-lower-DHeps}%
\end{equation}
By combining the inequalities in \eqref{eq:app-dist-upper-DHeps} and
\eqref{eq:app-dist-lower-DHeps}, we conclude the equality in
\eqref{eq:approx-distill}, i.e., $D_{\min}^{\varepsilon}(\rho\Vert
\sigma)=D_{d}^{\varepsilon}(\rho,\sigma)$.

It is worthwhile to mention a somewhat singular case. In the case that
$\operatorname{supp}(\rho)\not \subseteq \operatorname{supp}(\sigma)$, we have
that $\operatorname{Tr}[\Pi_{\sigma}\rho]<1$, which means that
$\operatorname{Tr}[\left(  I-\Pi_{\sigma}\right)  \rho]>0$. If we also have
that $\operatorname{Tr}[\left(  I-\Pi_{\sigma}\right)  \rho]\geq1-\varepsilon
$, then we can take the channel $\mathcal{P}$ to be as follows:%
\begin{equation}
\mathcal{P}(\omega)=\operatorname{Tr}[\left(  I-\Pi_{\sigma}\right)
\omega]|0\rangle\langle0|+\operatorname{Tr}[\Pi_{\sigma}\omega]|1\rangle
\langle1|.
\end{equation}
In such a case, we have that $\mathcal{P}(\rho)\approx_{\varepsilon}%
|0\rangle\langle0|$, while $\mathcal{P}(\sigma)=|1\rangle\langle
1|=\lim_{M\rightarrow\infty}\pi_{M}$, implying that $D_{\min}^{\varepsilon
}(\rho\Vert\sigma)=D_{d}^{\varepsilon}(\rho,\sigma)=\infty$ in this case.

\subsection{Approximate distinguishability cost}

\label{app:one-shot-dist-cost}Here we prove the equality in
\eqref{eq:eps-approx-cost}:%
\begin{equation}
D_{c}^{\varepsilon}(\rho,\sigma)=D_{\max}^{\varepsilon}(\rho\Vert\sigma).
\end{equation}
Recall that%
\begin{multline}
D_{c}^{\varepsilon}(\rho,\sigma):=\\
\log_{2}\inf_{\mathcal{P}\in\text{CPTP}}\{M:\mathcal{P}(|0\rangle
\langle0|)\approx_{\varepsilon}\rho,\ \mathcal{P}(\pi_{M})=\sigma\}.
\end{multline}
Let $\widetilde{\rho}$ be a state such that $\frac{1}{2}\left\Vert
\rho-\widetilde{\rho}\right\Vert _{1}\leq\varepsilon$. Then by executing the
protocol in
\eqref{eq:dilute-protocol-one-shot-1}--\eqref{eq:dilute-protocol-one-shot-2},
but replacing $\rho$ with $\widetilde{\rho}$, we find that%
\begin{equation}
D_{c}^{\varepsilon}(\rho,\sigma)\leq D_{\max}(\widetilde{\rho}\Vert\sigma).
\end{equation}
Since this is possible for any state $\widetilde{\rho}$ satisfying $\frac
{1}{2}\left\Vert \rho-\widetilde{\rho}\right\Vert _{1}\leq\varepsilon$, we
conclude that%
\begin{equation}
D_{c}^{\varepsilon}(\rho,\sigma)\leq D_{\max}^{\varepsilon}(\rho\Vert
\sigma).\label{eq:d-cost-lower-smooth-d-max}%
\end{equation}

To see the other inequality, consider an arbitrary channel $\mathcal{P}%
$\ performing the transformation $\mathcal{P}(|0\rangle\langle0|)\approx
_{\varepsilon}\rho$ and $\mathcal{P}(\pi_{M})=\sigma$. Then from the data
processing inequality for the max-relative entropy, as recalled in
\eqref{eq:dmax-DP}, and its definition, we conclude that%
\begin{align}
\log_{2}M &  =D_{\max}(|0\rangle\langle0|\Vert\pi_{M})\\
&  \geq D_{\max}(\mathcal{P}(|0\rangle\langle0|)\Vert\mathcal{P}(\pi_{M}))\\
&  =D_{\max}(\mathcal{P}(|0\rangle\langle0|)\Vert\sigma)\\
&  \geq D_{\max}^{\varepsilon}(\rho\Vert\sigma).
\end{align}
Since the inequality holds for an arbitrary channel $\mathcal{P}$\ performing
the transformation $\mathcal{P}(|0\rangle\langle0|)\approx_{\varepsilon}\rho$
and $\mathcal{P}(\pi_{M})=\sigma$, we conclude that%
\begin{equation}
D_{c}^{\varepsilon}(\rho,\sigma)\geq D_{\max}^{\varepsilon}(\rho\Vert
\sigma).\label{eq:d-cost-upper-smooth-d-max}%
\end{equation}
By combining the inequalities in \eqref{eq:d-cost-lower-smooth-d-max} and
\eqref{eq:d-cost-upper-smooth-d-max}, we conclude the equality in
\eqref{eq:eps-approx-cost}, i.e., $D_{c}^{\varepsilon}(\rho,\sigma)=D_{\max
}^{\varepsilon}(\rho\Vert\sigma)$.

\section{Operational proof for inequality relating smooth min- and
max-relative entropies}

\label{app:bnd-dmax-dheps}Here we prove the inequality in
\eqref{eq:dh-dmax-relation-operational}:%
\begin{equation}
D_{\min}^{\varepsilon_{1}}(\rho\Vert\sigma)\leq D_{\max}^{\varepsilon_{2}%
}(\rho\Vert\sigma)+\log_{2}\!\left(  \frac{1}{1-\varepsilon_{1}-\varepsilon
_{2}}\right)  , \label{eq:app-dmax-dheps-bound}%
\end{equation}
for $\varepsilon_{1},\varepsilon_{2}\geq0$ and $\varepsilon_{1}+\varepsilon
_{2}<1$.

First, consider that an arbitrary protocol performing the transformation
$(|0\rangle\langle0|,\pi_M)\rightarrow(\widetilde{0}_{\varepsilon},\pi_{K})$ is
required to obey the following inequality%
\begin{align}
\log_{2}K &  \leq D_{\min}^{\varepsilon}(|0\rangle\langle0|\Vert\pi_{M})\\
&  =\log_{2}M+\log_{2}\left(  1/\left[  1-\varepsilon\right]  \right)
.\label{eq:dheps-eq-m-bits-of-AD}%
\end{align}
To see the equality in \eqref{eq:dheps-eq-m-bits-of-AD}, consider that
$\Lambda=\left(  1-\varepsilon\right)  |0\rangle\langle0|$ is a measurement
operator achieving $\operatorname{Tr}[\Lambda|0\rangle\langle0|]\geq
1-\varepsilon$, while $\operatorname{Tr}[\Lambda\pi_{M}]=\left(
1-\varepsilon\right)  /M$, implying that%
\begin{equation}
D_{\min}^{\varepsilon}(|0\rangle\langle0|\Vert\pi_{M})\geq\log_{2}M+\log
_{2}\left(  1/\left[  1-\varepsilon\right]  \right)  .
\end{equation}
To see the other inequality, suppose that $\operatorname{Tr}[\Lambda
|0\rangle\langle0|]\geq1-\varepsilon$. Then we have that%
\begin{align}
&  \operatorname{Tr}[\Lambda\pi_{M}]\nonumber\\
&  =\frac{1}{M}\operatorname{Tr}[\Lambda|0\rangle\langle0|]+\left(  1-\frac
{1}{M}\right)  \operatorname{Tr}[\Lambda|1\rangle\langle1|]\\
&  \geq\frac{1}{M}\operatorname{Tr}[\Lambda|0\rangle\langle0|]\\
&  \geq\frac{1-\varepsilon}{M}.
\end{align}
Since this is a uniform bound holding for all measurement operators $\Lambda$
satisfying $\operatorname{Tr}[\Lambda|0\rangle\langle0|]\geq1-\varepsilon$, we
conclude that%
\begin{align}
D_{\min}^{\varepsilon}(|0\rangle\langle0|\Vert\pi_{M}) &  \leq-\log
_{2}\!\left(  \frac{1-\varepsilon}{M}\right)  \\
&  =\log_{2}M+\log_{2}(1/\left[  1-\varepsilon\right]  ),
\end{align}
completing the proof of the equality in \eqref{eq:dheps-eq-m-bits-of-AD}.

Given that the bound $\log_{2}K\leq\log_{2}M+\log_{2}\left(  1/\left[
1-\varepsilon\right]  \right)  $ holds for an arbitrary channel performing the
transformation $(|0\rangle\langle0|,\pi_M)\rightarrow(\widetilde{0}%
_{\varepsilon},\pi_{K})$, we can consider a particular way of completing it in
two steps. Fix $\varepsilon_{1},\varepsilon_{2}\geq0$ such that $\varepsilon
_{1}+\varepsilon_{2}<1$. In the first step, we perform the dilution
transformation $(|0\rangle\langle0|,\pi_M)\rightarrow(\rho_{\varepsilon_{2}%
},\sigma)$ optimally and in the second, we perform the distillation
transformation $(\rho,\sigma)\rightarrow(\widetilde{0}_{\varepsilon_{1}}%
,\pi_{K})$ optimally. For the dilution part, we have that $\log_{2}M=D_{\max
}^{\varepsilon_{2}}(\rho\Vert\sigma)$ and there exists a channel
$\mathcal{P}_{1}$ such that $\mathcal{P}_{1}(|0\rangle\langle0|)=\rho
_{\varepsilon_{2}}\approx_{\varepsilon_{2}}\rho$ and $\mathcal{P}_{1}(\pi
_{M})=\sigma$. For the distillation part, we have that $\log_{2}K=D_{\min
}^{\varepsilon_{1}}(\rho\Vert\sigma)$ and there exists a channel
$\mathcal{P}_{2}$ such that $\mathcal{P}_{2}(\rho)=\widetilde{0}%
_{\varepsilon_{1}}\approx_{\varepsilon_{1}}|0\rangle\langle0|$ and
$\mathcal{P}_{2}(\sigma)=\pi_{K}$. By composing the two channels, we have that%
\begin{equation}
(\mathcal{P}_{2}\circ\mathcal{P}_{1})(\pi_{M})=\pi_{K},
\end{equation}
while%
\begin{align}
&  \frac{1}{2}\left\Vert (\mathcal{P}_{2}\circ\mathcal{P}_{1})(|0\rangle
\langle0|)-|0\rangle\langle0|\right\Vert _{1}\nonumber\\
&  \leq\frac{1}{2}\left\Vert (\mathcal{P}_{2}\circ\mathcal{P}_{1}%
)(|0\rangle\langle0|)-\mathcal{P}_{2}(\rho)\right\Vert _{1}\nonumber\\
&  \qquad+\frac{1}{2}\left\Vert \mathcal{P}_{2}(\rho)-|0\rangle\langle
0|\right\Vert _{1}\\
&  \leq\frac{1}{2}\left\Vert \mathcal{P}_{1}(|0\rangle\langle0|)-\rho
\right\Vert _{1}+\varepsilon_{1}\\
&  \leq\varepsilon_{2}+\varepsilon_{1}.
\end{align}
So this means that we have a protocol $(|0\rangle\langle0|,\pi_M)\rightarrow
(\widetilde{0}_{\varepsilon_{1}+\varepsilon_{2}},\pi_{K})$ with $\log
_{2}M=D_{\max}^{\varepsilon_{2}}(\rho\Vert\sigma)$ and $\log_{2}K=D_{\min
}^{\varepsilon_{1}}(\rho\Vert\sigma)$. By \eqref{eq:dheps-eq-m-bits-of-AD}, we
then conclude the inequality in \eqref{eq:dh-dmax-relation-operational}, as
restated in \eqref{eq:app-dmax-dheps-bound}.

\section{Asymptotic distillable distinguishability and distinguishability
cost}

\label{app:asymp-dist-cost}As a direct consequence of
\eqref{eq:approx-distill} and results from
\cite{tomamichel2013hierarchy,li14}, the following
expansion holds for sufficiently large~$n$:%
\begin{multline}
D_{d}^{\varepsilon}(\rho^{\otimes n},\sigma^{\otimes n}%
)=\label{eq:app-d-dist-2nd-order}\\
nD(\rho\Vert\sigma)+\sqrt{nV(\rho\Vert\sigma)}\Phi^{-1}(\varepsilon)+O(\log
n),
\end{multline}
where $D(\rho\Vert\sigma)$ is the quantum relative entropy. The relative
entropy variance $V(\rho\Vert\sigma)$ \cite{tomamichel2013hierarchy,li14} is
defined as%
\begin{equation}
V(\rho\Vert\sigma):=\operatorname{Tr}[\rho\left(  \log_{2}\rho-\log_{2}%
\sigma-D(\rho\Vert\sigma)\right)  ^{2}],
\end{equation}
if $\operatorname{supp}(\rho)\subseteq\operatorname{supp}(\sigma)$ and is
otherwise undefined. Furthermore, $\Phi^{-1}(\varepsilon)$ is the inverse of
the cumulative normal distribution function, defined as%
\begin{equation}
\Phi^{-1}(\varepsilon)=\sup\left\{  a\in\mathbb{R}\ |\ \Phi(a)\leq
\varepsilon\right\}  ,
\end{equation}
where%
\begin{equation}
\Phi(a)=\frac{1}{\sqrt{2\pi}}\int_{-\infty}^{a}dx\ \exp\!\left(  \frac{-x^{2}%
}{2}\right)  .
\end{equation}

Based on the inequality in \eqref{eq:dh-dmax-relation-operational}, we have
that%
\begin{equation}
D_{\min}^{1-\varepsilon-\delta}(\rho\Vert\sigma)\leq D_{\max}^{\varepsilon}%
(\rho\Vert\sigma)+\log_{2}\!\left(  \frac{1}{\delta}\right)  .\nonumber
\end{equation}
Then by picking $\delta=1/\sqrt{n}$, and applying \eqref{eq:eps-approx-cost},
\eqref{eq:approx-distill}, \eqref{eq:app-d-dist-2nd-order}, and the fact that
$\Phi^{-1}(1-\varepsilon)=-\Phi^{-1}(\varepsilon)$, we find that%
\begin{multline}
D_{c}^{\varepsilon}(\rho^{\otimes n},\sigma^{\otimes n})\geq\\
nD(\rho\Vert\sigma)-\sqrt{nV(\rho\Vert\sigma)}\Phi^{-1}(\varepsilon)+O(\log
n).
\end{multline}

By following the proof of \cite[Eq.~(21)]{tomamichel2013hierarchy}, but
instead using the normalized trace distance as the metric for smooth
max-relative entropy, we find that%
\begin{multline}
D_{\max}^{\varepsilon}(\rho\Vert\sigma)\leq D_{\min}^{1-\varepsilon^{2}}%
(\rho\Vert\sigma)\\
+\log_{2}\left\vert \text{spec}(\sigma)\right\vert +\log_{2}\!\left(  \frac
{1}{1-\varepsilon^{2}}\right)  ,\label{eq:app-dmaxeps-<=-dh1-eps}%
\end{multline}
where $\varepsilon\in(0,1)$ and $\left\vert \text{spec}(\sigma)\right\vert $
is equal to the number of distinct eigenvalues of $\sigma$. We give a detailed
proof of \eqref{eq:app-dmaxeps-<=-dh1-eps} in
Appendix~\ref{app:opposite-bnd-dmax-dheps}. By the operational interpretations
of $D_{\max}^{\varepsilon}$ and $D_{\min}^{1-\varepsilon^{2}}$, the inequality in
\eqref{eq:app-dmaxeps-<=-dh1-eps} can equivalently be written as
\begin{multline}
D_{c}^{\varepsilon}(\rho,\sigma)\leq D_{d}^{1-\varepsilon^{2}}(\rho,\sigma)\\
+\log_{2}\left\vert \text{spec}(\sigma)\right\vert +\log_{2}\!\left(  \frac
{1}{1-\varepsilon^{2}}\right)  .
\end{multline}
Now accounting for the fact that $\left\vert \text{spec}(\sigma^{\otimes
n})\right\vert =O(\log n)$ and applying \eqref{eq:app-d-dist-2nd-order}, we
conclude that%
\begin{multline}
D_{c}^{\varepsilon}(\rho^{\otimes n},\sigma^{\otimes n})\leq\\
nD(\rho\Vert\sigma)-\sqrt{nV(\rho\Vert\sigma)}\Phi^{-1}(\varepsilon
^{2})+O(\log n).
\end{multline}

Thus, we have that%
\begin{equation}
D_{c}^{\varepsilon}(\rho^{\otimes n},\sigma^{\otimes n})=nD(\rho\Vert
\sigma)+O(\sqrt{n}).
\end{equation}

\section{Bound relating smooth max- and min-relative entropies}

\label{app:opposite-bnd-dmax-dheps} Here we prove the following bound:%
\begin{multline}
D_{\max}^{\varepsilon}(\rho\Vert\sigma)\leq D_{\min}^{1-\varepsilon^{2}}%
(\rho\Vert\sigma)\label{eq:app-dmax-dheps-relation-specs}\\
+\log_{2}\left\vert \text{spec}(\sigma)\right\vert +\log_{2}\!\left(  \frac
{1}{1-\varepsilon^{2}}\right)  ,
\end{multline}
where $\left\vert \text{spec}(\sigma)\right\vert $ is equal to the number of
distinct eigenvalues of $\sigma$.

The proof follows the proof of \cite[Eq.~(21)]{tomamichel2013hierarchy}
closely, but instead using the normalized trace distance as the metric for
smooth max-relative entropy and accounting for a minor typo present in the
proof of \cite[Eq.~(21)]{tomamichel2013hierarchy}.

Let the eigendecomposition of $\sigma$ be $\sigma=\sum_{x}\lambda_{x}^{\sigma
}\Pi_{x}^{\sigma}$, where $\Pi_{x}^{\sigma}$ is the projection onto the
eigenspace of $\sigma$ with eigenvalue $\lambda^{\sigma}_{x}$. Let
$\mathcal{E}_{\sigma}(\cdot)=\sum_{x}\Pi_{x}^{\sigma}(\cdot)\Pi_{x}^{\sigma}$
denote the pinching quantum channel. In what follows, we make use of the
pinching inequality \cite{Hay02}:
\begin{equation}
\rho\leq\left\vert \text{spec}(\sigma)\right\vert \mathcal{E}_{\sigma}(\rho).
\end{equation}

Let $\mu$ be the largest value such that $\operatorname{Tr}[Q\mathcal{E}%
_{\sigma}(\rho)]=1-\varepsilon^{2}$, where $Q=\{\mathcal{E}_{\sigma}(\rho
)\leq2^{\mu}\sigma\}$. Due to the fact that $Q$ commutes with $\sigma$, we
have that $\mathcal{E}_{\sigma}(Q)=Q$, which implies that%
\begin{align}
\operatorname{Tr}[Q\mathcal{E}_{\sigma}(\rho)]  &  =\operatorname{Tr}%
[\mathcal{E}_{\sigma}(Q)\rho]\\
&  =\operatorname{Tr}[Q\rho]\\
&  =1-\varepsilon^{2}.
\end{align}
Then we set%
\begin{equation}
\widetilde{\rho}=\frac{Q\rho Q}{\operatorname{Tr}[Q\rho]},
\end{equation}
for which we have that%
\begin{equation}
F(\rho,\widetilde{\rho})\geq1-\varepsilon^{2},
\end{equation}
by applying \cite[Lemma~9.4.1]{W17book}. This in turn implies that%
\begin{equation}
\frac{1}{2}\left\Vert \rho-\widetilde{\rho}\right\Vert _{1}\leq\varepsilon,
\end{equation}
via the inequality $\frac{1}{2}\left\Vert \rho-\widetilde{\rho}\right\Vert
_{1}\leq\sqrt{1-F(\rho,\widetilde{\rho})}$ \cite{FG98}, so that $\widetilde
{\rho}$ is a candidate for the optimization involved in $D_{\max}%
^{\varepsilon}(\rho\Vert\sigma)$. Now consider that%
\begin{align}
\widetilde{\rho}  &  =\frac{Q\rho Q}{\operatorname{Tr}[Q\rho]}\\
&  \leq\frac{Q\rho Q}{1-\varepsilon^{2}}\\
&  \leq\frac{\left\vert \text{spec}(\sigma)\right\vert }{1-\varepsilon^{2}%
}Q\mathcal{E}_{\sigma}(\rho)Q\\
&  \leq\frac{2^{\mu}\left\vert \text{spec}(\sigma)\right\vert }{1-\varepsilon
^{2}}Q\sigma Q\\
&  \leq\frac{2^{\mu}\left\vert \text{spec}(\sigma)\right\vert }{1-\varepsilon
^{2}}\sigma.
\end{align}
So it follows that%
\begin{multline}
D_{\max}^{\varepsilon}(\rho\Vert\sigma)\leq D_{\max}(\widetilde{\rho}%
\Vert\sigma)\label{eq:app-dmax-to-dheps-1}\\
\leq\mu+\log_{2}\left\vert \text{spec}(\sigma)\right\vert +\log_{2}\left(
\frac{1}{1-\varepsilon^{2}}\right)  .
\end{multline}
Now consider that $\operatorname{Tr}[\left(  I-Q\right)  \rho]=\varepsilon
^{2}$ and $I-Q=\{\mathcal{E}_{\sigma}(\rho)>2^{\mu}\sigma\}$, for which we
have that%
\begin{equation}
\operatorname{Tr}[\{\mathcal{E}_{\sigma}(\rho)>2^{\mu}\sigma\}\left(
\mathcal{E}_{\sigma}(\rho)-2^{\mu}\sigma\right)  ]\geq0,
\end{equation}
implying that%
\begin{align}
\operatorname{Tr}[\left(  I-Q\right)  \sigma]  &  =\operatorname{Tr}%
[\{\mathcal{E}_{\sigma}(\rho)>2^{\mu}\sigma\}\sigma]\\
&  \leq2^{-\mu}\operatorname{Tr}[\{\mathcal{E}_{\sigma}(\rho)>2^{\mu}%
\sigma\}\mathcal{E}_{\sigma}(\rho)]\\
&  \leq2^{-\mu}.
\end{align}
Taking a negative logarithm, this gives%
\begin{equation}
-\log\operatorname{Tr}[\left(  I-Q\right)  \sigma]\geq\mu.
\end{equation}
Since $\operatorname{Tr}[\left(  I-Q\right)  \rho]=\varepsilon^{2}$, this
means that $I-Q$ is a candidate for $\Lambda$ in the definition of smooth
min-relative entropy, from which we conclude that%
\begin{align}
\mu &  \leq D_{\min}^{1-\varepsilon^{2}}(\mathcal{E}_{\sigma}(\rho)\Vert\sigma)\\
&  \leq D_{\min}^{1-\varepsilon^{2}}(\rho\Vert\sigma),
\label{eq:app-dmax-to-dheps-final}%
\end{align}
where the latter inequality follows from the data processing inequality in
\eqref{eq:hypo-DP}. Putting together \eqref{eq:app-dmax-to-dheps-1} and
\eqref{eq:app-dmax-to-dheps-final}, we arrive at \eqref{eq:app-dmax-dheps-relation-specs}.

\section{Asymptotic box transformations}

\label{app:asymp-box-tr}We now provide a proof of
Eq.~\eqref{eq:box-trans-main-result}, i.e.,%
\begin{equation}
R((\rho,\sigma)\rightarrow(\tau,\omega))=\widetilde{R}((\rho,\sigma
)\rightarrow(\tau,\omega))=\frac{D(\rho\Vert\sigma)}{D(\tau\Vert\omega)},
\end{equation}
so that the quantum relative entropy gives the optimal conversion rate for
boxes. We prove this result in two steps, called the direct part and strong
converse part.

\subsection{Achievability:\ Direct part}

We begin with the direct part. The goal is to show that for all $\varepsilon
\in(0,1]$, $\delta>0$, and sufficiently large $n$, there exists an
$(n,n[R-\delta],\varepsilon)$\ box transformation protocol%
\begin{equation}
(\rho^{\otimes n},\sigma^{\otimes n})\rightarrow
(\widetilde{\tau^{\otimes
n\left[  R-\delta\right]  }},\omega^{\otimes n\left[  R-\delta\right]  })
\end{equation}
with $R=\frac{D(\rho\Vert\sigma)}{D(\tau\Vert\omega)}$. The approach we take
here is related to an approach from \cite{KW19}.

Fix $\varepsilon\in\left(  0,1\right)  $ and $\delta>0$. Suppose that
$\varepsilon=\varepsilon_{1}+\varepsilon_{2}$, so that $\varepsilon
_{1},\varepsilon_{2}\in(0,1)$ and $\varepsilon_{1}+\varepsilon_{2}<1$. Also,
suppose that $\delta=\delta_{1}+\delta_{2}+\delta_{3}+\delta_{4}$, such that
$\delta_{1},\delta_{2},\delta_{3},\delta_{4}>0$.

Consider that we can perform the transformation $(\rho^{\otimes n}%
,\sigma^{\otimes n})\rightarrow (\widetilde{0}_{\varepsilon_{1}},\pi_{M})$
such that%
\begin{equation}
\log_{2}M=D_{\min}^{\varepsilon_{1}}(\rho^{\otimes n}\Vert\sigma^{\otimes n}).
\end{equation}
Then applying the following inequality from \cite[Proposition~3.2]%
{AMV12}\ (see also \cite[Proposition~3]{QWW17})%
\begin{equation}
D_{\min}^{\varepsilon}(\rho\Vert\sigma)\geq D_{\alpha}(\rho\Vert\sigma
)+\frac{\alpha}{\alpha-1}\log_{2}\!\left(  \frac{1}{\varepsilon}\right)  ,
\end{equation}
we find that%
\begin{equation}
\log_{2}M\geq nD_{\alpha}(\rho\Vert\sigma)+\frac{\alpha}{\alpha-1}\log
_{2}\!\left(  \frac{1}{\varepsilon_{1}}\right)  .
\end{equation}
Set $\alpha\in(0,1)$ such that%
\begin{equation}
\delta_{1}\cdot D(\tau\Vert\omega)\geq D(\rho\Vert\sigma)-D_{\alpha}(\rho
\Vert\sigma),
\end{equation}
which is possible due to \eqref{eq:conv-PR-q-rel-ent} and
\eqref{eq:app-Petz-Renyi-ordered}, and for this choice of $\alpha$, take $n$
large enough so that%
\begin{equation}
\delta_{2}\cdot D(\tau\Vert\omega)\geq\frac{\alpha}{n(1-\alpha)}\log
_{2}\!\left(  \frac{1}{\varepsilon_{1}}\right)
.\label{eq:app-large-enough-n-1}%
\end{equation}
Then we have that%
\begin{equation}
\log_{2}M\geq nD(\rho\Vert\sigma)-nD(\tau\Vert\omega)\left[  \delta_{1}%
+\delta_{2}\right]  .\label{eq:app-low-distill-dilute-tech}%
\end{equation}

Also, consider that we can perform the transformation $(|0\rangle\langle
0|,\pi)\rightarrow(\widetilde{\tau^{\otimes m}},\omega^{\otimes m})$ (with
error $\varepsilon_{2}$), for fixed $M$, by taking $m$ as large as possible so
that the following inequality still holds%
\begin{equation}
\log_{2}M\geq D_{\max}^{\varepsilon_{2}}(\tau^{\otimes m}\Vert\omega^{\otimes
m}).
\end{equation}
If it is not possible to find an $m$ to saturate the inequality, then one can
find states $\tau^{\prime}$ and $\omega^{\prime}$ with just enough
distinguishability such that%
\begin{equation}
\log_{2}M=D_{\max}^{\varepsilon_{2}}(\tau^{\otimes m}\otimes\tau^{\prime}%
\Vert\omega^{\otimes m}\otimes\omega^{\prime}),
\end{equation}
while having a negligible impact on the final parameters of the protocol. The
resulting protocol then produces the states $\approx_{\varepsilon}%
\tau^{\otimes m}\otimes\tau^{\prime}$ and $\omega^{\otimes m}\otimes
\omega^{\prime}$, and the final step is to perform a partial trace over the
extra ancilla system. By applying the following inequality from
Proposition~\ref{prop:smooth-dmax-sandwiched-up-bnd}%
\begin{multline}
D_{\max}^{\varepsilon_{2}}(\rho\Vert\sigma)\leq\widetilde{D}_{\beta}(\rho
\Vert\sigma)+\log_{2}(1/\left[  1-\varepsilon_{2}^{2}\right]  )\\
+\frac{1}{\beta-1}\log_{2}(1/\varepsilon_{2}^{2}),
\end{multline}
proved in Appendix~\ref{sec:app-dmaxeps-bnds}, we find that%
\begin{multline}
\log_{2}M\leq m\widetilde{D}_{\beta}(\tau\Vert\omega)+\widetilde{D}_{\beta
}(\tau^{\prime}\Vert\omega^{\prime})+\log_{2}(1/\left[  1-\varepsilon_{2}%
^{2}\right]  )\\
+\frac{1}{\beta-1}\log_{2}(1/\varepsilon_{2}^{2}).
\end{multline}
Now set $\beta>1$ such that%
\begin{equation}
\delta_{3}nD(\tau\Vert\omega)\geq m\left[  \widetilde{D}_{\beta}(\tau
\Vert\omega)-D(\tau\Vert\omega)\right]  ,
\end{equation}
which is possible due to \eqref{eq:app-sandwiched-Renyi-a-1-rel-ent} and
\eqref{eq:app-sandwiched-Renyi-ordered}, and for this choice of $\beta$, take
$n$ sufficiently large so that%
\begin{multline}
\delta_{4}\cdot D(\tau\Vert\omega)\geq\frac{1}{n}\widetilde{D}_{\beta}%
(\tau^{\prime}\Vert\omega^{\prime})+\frac{1}{n}\log_{2}(1/\left[
1-\varepsilon_{2}^{2}\right]  )\label{eq:app-large-enough-n-2}\\
+\frac{1}{n\left(  \beta-1\right)  }\log_{2}(1/\varepsilon_{2}^{2}).
\end{multline}
(Note that we require $n$ large enough so that both
\eqref{eq:app-large-enough-n-1}\ and \eqref{eq:app-large-enough-n-2}\ hold.)
Then we have that%
\begin{equation}
\log_{2}M\leq mD(\tau\Vert\omega)+nD(\tau\Vert\omega)\left[  \delta_{3}%
+\delta_{4}\right]  .\label{eq:app-up-distill-dilute-tech}%
\end{equation}
Putting together \eqref{eq:app-low-distill-dilute-tech}
and\ \eqref{eq:app-up-distill-dilute-tech}, we find that%
\begin{multline}
nD(\rho\Vert\sigma)-nD(\tau\Vert\omega)\left[  \delta_{1}+\delta_{2}\right]
\\
\leq mD(\tau\Vert\omega)+nD(\tau\Vert\omega)\left[  \delta_{3}+\delta
_{4}\right]  .
\end{multline}
Now dividing both sides by $nD(\tau\Vert\omega)$, we find that%
\begin{align}
\frac{m}{n} &  \geq\frac{D(\rho\Vert\sigma)}{D(\tau\Vert\omega)}-\left[
\delta_{1}+\delta_{2}+\delta_{3}+\delta_{4}\right]  .\\
&  =\frac{D(\rho\Vert\sigma)}{D(\tau\Vert\omega)}-\delta.
\end{align}
The rate of this scheme is equal to $m/n$. The error of the protocol is no
larger then $\varepsilon_{1}+\varepsilon_{2}=\varepsilon$, following from an
application of the triangle inequality.

Thus, we have shown that for all $\varepsilon\in(0,1)$, $\delta>0$, there
exists an $(n,n\left[  R-\delta\right]  ,\varepsilon)$ box transformation
protocol with $R=\frac{D(\rho\Vert\sigma)}{D(\tau\Vert\omega)}$, concluding
the proof of the achievability part.

\subsection{Strong converse via sandwiched R\'{e}nyi relative entropy}

Before proving the strong converse, we establish the following lemma as a
generalization of \cite[Proposition~2.8]{LWD16}. In fact, the proof of the
following lemma is contained in the proof of \cite[Proposition~2.8]{LWD16}.
The following lemma serves as a pseudo-continuity inequality for the
sandwiched R\'enyi relative entropies.

\begin{lemma}
\label{lem:app-continuity-sandwiched-Renyi}Let $\rho_{0}$, $\rho_{1}$, and
$\sigma$ be quantum states such that $\operatorname{supp}(\rho_{0}%
)\subseteq\operatorname{supp}(\sigma)$. Fix $\alpha\in(1/2,1)$ and
$\beta\equiv\beta(\alpha):=\alpha/\left(  2\alpha-1\right)  >1$. Then%
\begin{equation}
\widetilde{D}_{\beta}(\rho_{0}\Vert\sigma)-\widetilde{D}_{\alpha}(\rho
_{1}\Vert\sigma)\geq\frac{\alpha}{1-\alpha}\log_2 F(\rho_{0},\rho_{1}).
\end{equation}

\end{lemma}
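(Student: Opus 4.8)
The plan is to reduce the inequality to a single application of H\"older's inequality for Schatten norms, following the argument already contained in the proof of \cite[Proposition~2.8]{LWD16}. First I would rewrite both sandwiched R\'enyi quantities in the Schatten-norm form recorded in the excerpt, namely $\widetilde{D}_\gamma(\omega\Vert\sigma)=\frac{2\gamma}{\gamma-1}\log_2\Vert\sigma^{(1-\gamma)/2\gamma}\omega^{1/2}\Vert_{2\gamma}$. The crucial algebraic observation is that the choice $\beta=\alpha/(2\alpha-1)$ makes $2\beta$ the H\"older conjugate of $2\alpha$, i.e.\ $\frac{1}{2\alpha}+\frac{1}{2\beta}=1$, while at the same time forcing the exponents of $\sigma$ to be negatives of one another: setting $s:=\frac{1-\alpha}{2\alpha}$ (which lies in $(0,1/2)$ for $\alpha\in(1/2,1)$), one computes $\frac{1-\alpha}{2\alpha}=s$ and $\frac{1-\beta}{2\beta}=-s$. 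One also checks that $\frac{2\alpha}{\alpha-1}=-\frac{2\alpha}{1-\alpha}$ and $\frac{2\beta}{\beta-1}=+\frac{2\alpha}{1-\alpha}$, so that
\[ \widetilde{D}_\beta(\rho_0\Vert\sigma)-\widetilde{D}_\alpha(\rho_1\Vert\sigma)=\frac{2\alpha}{1-\alpha}\,\log_2\!\Big(\Vert\sigma^{-s}\rho_0^{1/2}\Vert_{2\beta}\,\Vert\sigma^{s}\rho_1^{1/2}\Vert_{2\alpha}\Big), \]
with a \emph{positive} prefactor since $\alpha\in(1/2,1)$.

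Next I would apply the Schatten H\"older inequality $\Vert AB\Vert_1\le\Vert A\Vert_{2\beta}\Vert B\Vert_{2\alpha}$ with $A=\rho_0^{1/2}\sigma^{-s}$ and $B=\sigma^{s}\rho_1^{1/2}$. Using the hypothesis $\operatorname{supp}(\rho_0)\subseteq\operatorname{supp}(\sigma)$, so that $\sigma^{-s}$ (read as the power of the generalized inverse) satisfies $\rho_0^{1/2}\sigma^{-s}\sigma^{s}=\rho_0^{1/2}\Pi_\sigma=\rho_0^{1/2}$, together with invariance of Schatten norms under taking adjoints, this gives $\Vert\rho_0^{1/2}\rho_1^{1/2}\Vert_1\le\Vert\sigma^{-s}\rho_0^{1/2}\Vert_{2\beta}\,\Vert\sigma^{s}\rho_1^{1/2}\Vert_{2\alpha}$. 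Recognizing $\Vert\rho_0^{1/2}\rho_1^{1/2}\Vert_1=\sqrt{F(\rho_0,\rho_1)}$ from the definition $F(\rho,\sigma)=\Vert\sqrt{\rho}\sqrt{\sigma}\Vert_1^2$, taking $\log_2$ of both sides, and multiplying by the positive constant $\frac{2\alpha}{1-\alpha}$ yields $\widetilde{D}_\beta(\rho_0\Vert\sigma)-\widetilde{D}_\alpha(\rho_1\Vert\sigma)\ge\frac{2\alpha}{1-\alpha}\cdot\frac12\log_2 F(\rho_0,\rho_1)=\frac{\alpha}{1-\alpha}\log_2 F(\rho_0,\rho_1)$, which is the claim.

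The only genuinely delicate points are bookkeeping: (i) sign tracking, because $\alpha<1$ makes $\frac{2\alpha}{\alpha-1}$ negative, so the term $-\widetilde{D}_\alpha(\rho_1\Vert\sigma)$ enters with a positive coefficient, and one must verify the net constant $\frac{2\alpha}{1-\alpha}$ is positive so that the inequality direction survives taking $\log_2$ and multiplying; and (ii) the well-definedness of $\sigma^{-s}$ and the identity $\rho_0^{1/2}\Pi_\sigma=\rho_0^{1/2}$, which is precisely where the support assumption on $\rho_0$ is used. No support assumption on $\rho_1$ is needed, since $\Vert\sigma^{s}\rho_1^{1/2}\Vert_{2\alpha}$ involves only a nonnegative power of $\sigma$ and is always finite. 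All remaining steps are direct computation, and the argument is essentially the one recorded in \cite[Proposition~2.8]{LWD16}.
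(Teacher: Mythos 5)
Your proposal is correct and is essentially identical to the paper's proof: both rewrite the two sandwiched R\'enyi quantities in Schatten-norm form with the common positive prefactor $\tfrac{2\alpha}{1-\alpha}$, apply the H\"older inequality with the conjugate exponents $2\alpha$ and $2\beta$, cancel the opposite powers of $\sigma$ using $\operatorname{supp}(\rho_{0})\subseteq\operatorname{supp}(\sigma)$, and identify $\Vert\rho_{0}^{1/2}\rho_{1}^{1/2}\Vert_{1}=\sqrt{F(\rho_{0},\rho_{1})}$. The bookkeeping points you flag (sign of the prefactor, well-definedness of the generalized inverse) are handled correctly.
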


\begin{proof}
Consider that%
\begin{align}
&  \widetilde{D}_{\beta}(\rho_{0}\Vert\sigma)-\widetilde{D}_{\alpha}(\rho
_{1}\Vert\sigma)\nonumber\\
&  =\frac{2\beta}{\beta-1}\log_{2}\left\Vert \rho_{0}^{1/2}\sigma^{\left(
1-\beta\right)  /2\beta}\right\Vert _{2\beta}\nonumber\\
&  \qquad-\frac{2\alpha}{\alpha-1}\log_{2}\left\Vert \sigma^{\left(
1-\alpha\right)  /2\alpha}\rho_{1}^{1/2}\right\Vert _{2\alpha}\\
&  =\frac{2\alpha}{1-\alpha}\log_{2}\left\Vert \rho_{0}^{1/2}\sigma^{\left(
1-\beta\right)  /2\beta}\right\Vert _{2\beta}\nonumber\\
&  \qquad+\frac{2\alpha}{1-\alpha}\log_{2}\left\Vert \sigma^{\left(
1-\alpha\right)  /2\alpha}\rho_{1}^{1/2}\right\Vert _{2\alpha}\\
&  =\frac{2\alpha}{1-\alpha}\log_{2}\left[  \left\Vert \rho_{0}^{1/2}%
\sigma^{\left(  1-\beta\right)  /2\beta}\right\Vert _{2\beta}\left\Vert
\sigma^{\left(  1-\alpha\right)  /2\alpha}\rho_{1}^{1/2}\right\Vert _{2\alpha
}\right] \\
&  \geq\frac{2\alpha}{1-\alpha}\log_{2}\left\Vert \rho_{0}^{1/2}%
\sigma^{\left(  1-\beta\right)  /2\beta}\sigma^{\left(  1-\alpha\right)
/2\alpha}\rho_{1}^{1/2}\right\Vert _{1}\\
&  =\frac{2\alpha}{1-\alpha}\log_{2}\left\Vert \rho_{0}^{1/2}\rho_{1}%
^{1/2}\right\Vert _{1}\\
&  =\frac{\alpha}{1-\alpha}\log_{2}F(\rho_{0},\rho_{1}).
\end{align}
The sole inequality follows from the H\"{o}lder inequality.
\end{proof}

\bigskip

The following is an auxiliary lemma that serves as a one-shot converse for any
approximate box transformation $(\rho,\sigma)\ \underrightarrow{\mathcal{N}%
}\ (\tau,\omega)$ where $\omega=\mathcal{N}(\sigma)$:

\begin{lemma}
Let $\rho$, $\sigma$, $\tau$, and $\omega$ be quantum states and $\mathcal{N}$
a quantum channel such that $\mathcal{N}(\sigma)=\omega$. Then for $\alpha
\in(1/2,1)$ and $\beta\equiv\beta(\alpha):=\alpha/\left(  2\alpha-1\right)  $,
we have that%
\begin{equation}
\widetilde{D}_{\beta}(\rho\Vert\sigma)\geq\widetilde{D}_{\alpha}(\tau
\Vert\omega)+\frac{\alpha}{1-\alpha}\log_{2}F(\mathcal{N}(\rho),\tau).
\end{equation}

\end{lemma}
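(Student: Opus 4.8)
The plan is to obtain this lemma for free by chaining two facts already established in the excerpt: the data processing inequality for the sandwiched R\'enyi relative entropy at the order $\beta$, and the pseudo-continuity bound of Lemma~\ref{lem:app-continuity-sandwiched-Renyi}. No new calculation is required; the entire analytic content has been packaged into that lemma, whose H\"older-inequality proof is given just above.

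First I would dispose of the degenerate case. If $\operatorname{supp}(\rho)\not\subseteq\operatorname{supp}(\sigma)$, then $\widetilde{D}_{\beta}(\rho\Vert\sigma)=\infty$ because $\beta\equiv\beta(\alpha)=\alpha/(2\alpha-1)>1$ whenever $\alpha\in(1/2,1)$, so the claimed inequality holds trivially. Hence I may assume $\operatorname{supp}(\rho)\subseteq\operatorname{supp}(\sigma)$, which in turn forces $\operatorname{supp}(\mathcal{N}(\rho))\subseteq\operatorname{supp}(\mathcal{N}(\sigma))=\operatorname{supp}(\omega)$, using that a quantum channel preserves support inclusion (as recalled earlier in the excerpt). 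This support condition is exactly what is needed in order to invoke Lemma~\ref{lem:app-continuity-sandwiched-Renyi} with the reference state $\omega$.

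Next, since $\beta\in(1,\infty)$ lies in the validity range of the data processing inequality \eqref{eq:app-qdp-sandwiched-Renyi}, and since $\mathcal{N}(\sigma)=\omega$ by hypothesis, I would write $\widetilde{D}_{\beta}(\rho\Vert\sigma)\geq\widetilde{D}_{\beta}(\mathcal{N}(\rho)\Vert\mathcal{N}(\sigma))=\widetilde{D}_{\beta}(\mathcal{N}(\rho)\Vert\omega)$. Then I would apply Lemma~\ref{lem:app-continuity-sandwiched-Renyi} with the substitutions $\rho_{0}\mapsto\mathcal{N}(\rho)$, $\rho_{1}\mapsto\tau$, and $\sigma\mapsto\omega$, which is legitimate because $\operatorname{supp}(\mathcal{N}(\rho))\subseteq\operatorname{supp}(\omega)$ and $\beta=\beta(\alpha)$; this yields $\widetilde{D}_{\beta}(\mathcal{N}(\rho)\Vert\omega)\geq\widetilde{D}_{\alpha}(\tau\Vert\omega)+\tfrac{\alpha}{1-\alpha}\log_{2}F(\mathcal{N}(\rho),\tau)$. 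Combining the two inequalities gives the asserted bound.

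There is essentially no obstacle to overcome; the only points requiring attention are bookkeeping ones, namely verifying the support hypothesis of Lemma~\ref{lem:app-continuity-sandwiched-Renyi} (done above via the degenerate-case reduction plus support preservation under channels) and confirming that the order $\beta$ produced from $\alpha\in(1/2,1)$ indeed satisfies $\beta>1$, so that the data processing inequality for $\widetilde{D}_{\beta}$ applies. I would remark that this lemma is precisely the one-shot strong-converse ingredient needed for the box transformation bound: taking $\rho\mapsto\rho^{\otimes n}$, $\sigma\mapsto\sigma^{\otimes n}$, $\tau\mapsto\tau^{\otimes m}$, $\omega\mapsto\omega^{\otimes m}$, dividing by $n$, and letting $\alpha\uparrow1$ (so $\beta\downarrow1$ and both sandwiched R\'enyi entropies tend to the quantum relative entropy by \eqref{eq:app-sandwiched-Renyi-a-1-rel-ent}) will convert it into the rate statement $\widetilde{R}((\rho,\sigma)\rightarrow(\tau,\omega))\leq D(\rho\Vert\sigma)/D(\tau\Vert\omega)$, with the fidelity term supplying the exponential strong-converse behaviour.
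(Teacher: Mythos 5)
Your proof is correct and is essentially the same as the paper's: a data-processing step at order $\beta>1$ followed by an application of Lemma~\ref{lem:app-continuity-sandwiched-Renyi} with $\rho_{0}\mapsto\mathcal{N}(\rho)$, $\rho_{1}\mapsto\tau$, $\sigma\mapsto\omega$. Your explicit handling of the degenerate support case and verification of the support hypothesis of the pseudo-continuity lemma is a small amount of extra care that the paper omits but does not change the argument.
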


\begin{proof}
Consider that%
\begin{align}
\widetilde{D}_{\beta}(\rho\Vert\sigma)  &  \geq\widetilde{D}_{\beta
}(\mathcal{N}(\rho)\Vert\mathcal{N}(\sigma))\\
&  =\widetilde{D}_{\beta}(\mathcal{N}(\rho)\Vert\omega)\\
&  \geq\widetilde{D}_{\alpha}(\tau\Vert\omega)+\frac{\alpha}{1-\alpha}\log
_{2}F(\mathcal{N}(\rho),\tau).
\end{align}
The first inequality follows from the quantum data processing inequality in
\eqref{eq:app-qdp-sandwiched-Renyi}\ and the other from
Lemma~\ref{lem:app-continuity-sandwiched-Renyi}.
\end{proof}

\begin{proposition}
\label{prop:strong-converse-exp}Let $n,m\in\mathbb{Z}^{+}$ and $\varepsilon
\in\lbrack0,1)$. Let $\rho$, $\sigma$, $\tau$, and $\omega$ be quantum states
and $\mathcal{N}^{(n)}$ a quantum channel constituting an $(n,m,\varepsilon
)$\ box transformation protocol (i.e., so that $\mathcal{N}^{(n)}%
(\rho^{\otimes n})\approx_{\varepsilon}\tau^{\otimes m}$ and $\mathcal{N}%
^{(n)}(\sigma^{\otimes n})=\omega^{\otimes m}$).\ Then for $\alpha\in(1/2,1)$
and $\beta\equiv\beta(\alpha):=\alpha/\left(  2\alpha-1\right)  $, we have
that%
\begin{equation}
\frac{\widetilde{D}_{\beta}(\rho\Vert\sigma)}{\widetilde{D}_{\alpha}(\tau
\Vert\omega)}\geq\frac{m}{n}+\frac{2\alpha}{n\left(  1-\alpha\right)
\widetilde{D}_{\alpha}(\tau\Vert\omega)}\log_{2}(1-\varepsilon).
\end{equation}
Alternatively, if we set $R=m/n$, then the above bound can be written as%
\begin{align}
&  -\frac{1}{n}\log_{2}(1-\varepsilon)\nonumber\\
&  \geq\frac{1}{2}\left(  \frac{1-\alpha}{\alpha}\right)  \left(
R\ \widetilde{D}_{\alpha}(\tau\Vert\omega)-\widetilde{D}_{\beta}(\rho
\Vert\sigma)\right) \\
&  =\frac{1}{2}\left(  \frac{\beta-1}{\beta}\right)  \left(  R\ \widetilde
{D}_{\alpha}(\tau\Vert\omega)-\widetilde{D}_{\beta}(\rho\Vert\sigma)\right)  .
\end{align}

\end{proposition}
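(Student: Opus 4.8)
The plan is to lift the one-shot converse just proved (the auxiliary lemma immediately preceding this proposition) to the i.i.d.\ setting, turn the trace-distance accuracy guarantee into a fidelity bound, and rearrange. First I would apply that auxiliary lemma with the substitutions $\rho\to\rho^{\otimes n}$, $\sigma\to\sigma^{\otimes n}$, $\tau\to\tau^{\otimes m}$, $\omega\to\omega^{\otimes m}$, and $\mathcal{N}\to\mathcal{N}^{(n)}$. The defining property $\mathcal{N}^{(n)}(\sigma^{\otimes n})=\omega^{\otimes m}$ of an $(n,m,\varepsilon)$ box transformation protocol is exactly the hypothesis the lemma needs, so it gives
\[
\widetilde{D}_\beta(\rho^{\otimes n}\Vert\sigma^{\otimes n})\geq\widetilde{D}_\alpha(\tau^{\otimes m}\Vert\omega^{\otimes m})+\frac{\alpha}{1-\alpha}\log_2 F\bigl(\mathcal{N}^{(n)}(\rho^{\otimes n}),\tau^{\otimes m}\bigr).
\]

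Next I would control the fidelity term. Since the protocol satisfies $\mathcal{N}^{(n)}(\rho^{\otimes n})\approx_\varepsilon\tau^{\otimes m}$, the Fuchs--van de Graaf inequality in the form $\sqrt{F(\zeta,\xi)}\geq 1-\tfrac12\Vert\zeta-\xi\Vert_1$ (consistent with the convention $F=\Vert\sqrt{\zeta}\sqrt{\xi}\Vert_1^2$ used here; cf.\ \cite{FG98}) yields $F(\mathcal{N}^{(n)}(\rho^{\otimes n}),\tau^{\otimes m})\geq(1-\varepsilon)^2$. Because $\alpha\in(1/2,1)$ the factor $\alpha/(1-\alpha)$ is positive, so monotonicity of $\log_2$ lets me replace the last term above by the lower bound $\tfrac{2\alpha}{1-\alpha}\log_2(1-\varepsilon)$. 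Using additivity of the sandwiched R\'enyi relative entropy under tensor powers, $\widetilde{D}_\beta(\rho^{\otimes n}\Vert\sigma^{\otimes n})=n\,\widetilde{D}_\beta(\rho\Vert\sigma)$ and $\widetilde{D}_\alpha(\tau^{\otimes m}\Vert\omega^{\otimes m})=m\,\widetilde{D}_\alpha(\tau\Vert\omega)$, so the displayed inequality becomes
\[
n\,\widetilde{D}_\beta(\rho\Vert\sigma)\geq m\,\widetilde{D}_\alpha(\tau\Vert\omega)+\frac{2\alpha}{1-\alpha}\log_2(1-\varepsilon).
\]

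Finally I would rearrange. Dividing by $n\,\widetilde{D}_\alpha(\tau\Vert\omega)$ (which is positive in the nondegenerate situation $\tau\neq\omega$) gives the first displayed bound of the proposition. Isolating the error term instead gives $-\tfrac1n\log_2(1-\varepsilon)\geq\tfrac12\cdot\tfrac{1-\alpha}{\alpha}\bigl(R\,\widetilde{D}_\alpha(\tau\Vert\omega)-\widetilde{D}_\beta(\rho\Vert\sigma)\bigr)$ with $R=m/n$, and substituting the elementary identity $\tfrac{1-\alpha}{\alpha}=\tfrac{\beta-1}{\beta}$, which follows directly from $\beta=\alpha/(2\alpha-1)$, produces the second form. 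I do not anticipate a real obstacle here: all the work is already in the auxiliary lemma (and, through it, in Lemma~\ref{lem:app-continuity-sandwiched-Renyi}, proved via H\"older and the three-line bound for sandwiched R\'enyi entropies). The only points needing a little care are (i) invoking the fidelity inequality in the direction that survives multiplication by the positive factor $\alpha/(1-\alpha)$, and (ii) noting that the degenerate cases where $\widetilde{D}_\alpha(\tau\Vert\omega)$ is zero or infinite make the claimed bound either impossible to violate at a positive rate or vacuous, and can be set aside.
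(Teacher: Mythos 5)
Your proposal is correct and follows essentially the same route as the paper's proof: apply the one-shot converse lemma to the $n$-fold and $m$-fold tensor powers, use additivity of the sandwiched R\'enyi relative entropies, lower-bound the fidelity term via the Fuchs--van de Graaf inequality $\sqrt{F}\geq 1-\tfrac{1}{2}\Vert\cdot\Vert_{1}$ so that $F\geq(1-\varepsilon)^{2}$, and divide by $n$ before rearranging with the identity $\tfrac{1-\alpha}{\alpha}=\tfrac{\beta-1}{\beta}$. The only cosmetic difference is your explicit aside about degenerate values of $\widetilde{D}_{\alpha}(\tau\Vert\omega)$, which the paper leaves implicit.
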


\begin{proof}
Consider that%
\begin{align}
&  n\widetilde{D}_{\beta}(\rho\Vert\sigma)\nonumber\\
&  =\widetilde{D}_{\beta}(\rho^{\otimes n}\Vert\sigma^{\otimes n})\\
&  \geq\widetilde{D}_{\alpha}(\tau^{\otimes m}\Vert\omega^{\otimes m}%
)+\frac{\alpha}{1-\alpha}\log_{2}F(\mathcal{N}^{(n)}(\rho^{\otimes n}%
),\tau^{\otimes m})\\
&  =m\widetilde{D}_{\alpha}(\tau\Vert\omega)+\frac{\alpha}{1-\alpha}\log
_{2}F(\mathcal{N}^{(n)}(\rho^{\otimes n}),\tau^{\otimes m})\\
&  =m\widetilde{D}_{\alpha}(\tau\Vert\omega)+\frac{2\alpha}{1-\alpha}\log
_{2}\sqrt{F}(\mathcal{N}^{(n)}(\rho^{\otimes n}),\tau^{\otimes m})\\
&  \geq m\widetilde{D}_{\alpha}(\tau\Vert\omega)+\frac{2\alpha}{1-\alpha}%
\log_{2}(1-\varepsilon),
\end{align}
where to get the last inequality, we used the fact that \cite{FG98}%
\begin{equation}
\frac{1}{2}\left\Vert \rho_{0}-\rho_{1}\right\Vert _{1}\geq1-\sqrt{F}(\rho
_{0},\rho_{1}).
\end{equation}
Dividing by $n$, we find that%
\begin{equation}
\widetilde{D}_{\beta}(\rho\Vert\sigma)\geq\frac{m}{n}\widetilde{D}_{\alpha
}(\tau\Vert\omega)+\frac{2\alpha}{n\left(  1-\alpha\right)  }\log
_{2}(1-\varepsilon),
\end{equation}
which concludes the proof.
\end{proof}

\bigskip

We now give a proof for the strong converse statement in
\eqref{eq:box-trans-main-result}. Our proof is related to the approach from
\cite{KW19}. Fix $\varepsilon\in\lbrack0,1)$ and $\delta>0$. We need to show
that there is an $n$ large enough such that there does not exist an
$(n,n[R+\delta],\varepsilon)$ box transformation protocol, with $R$ set as
follows:%
\begin{equation}
R=\frac{D(\rho\Vert\sigma)}{D(\tau\Vert\omega)}.
\end{equation}

From Proposition~\ref{prop:strong-converse-exp}, the following bound holds for
an arbitrary $(n,m,\varepsilon)$ protocol, $\alpha\in(1/2,1)$, and
$\beta\equiv\beta(\alpha):=\alpha/\left(  2\alpha-1\right)  $:%
\begin{equation}
n\widetilde{D}_{\beta}(\rho\Vert\sigma)+\frac{2\alpha}{1-\alpha}\log
_{2}(1/[1-\varepsilon])\geq m\widetilde{D}_{\alpha}(\tau\Vert\omega).
\label{eq:app-strong-conv-steps-1}%
\end{equation}
Set $\delta_{2}$ such that $0<\delta_{2}<\delta D(\tau\Vert\omega)$. Then set
$\delta_{1}>0$ such that the following equation is satisfied%
\begin{equation}
\frac{D(\rho\Vert\sigma)+\delta_{1}+\delta_{2}}{D(\tau\Vert\omega)-\delta_{1}%
}=\frac{D(\rho\Vert\sigma)}{D(\tau\Vert\omega)}+\delta,
\end{equation}
i.e.,%
\begin{equation}
\delta_{1}=\frac{D(\tau\Vert\omega)\left[  \delta D(\tau\Vert\omega
)-\delta_{2}\right]  }{D(\rho\Vert\sigma)+D(\tau\Vert\omega)\left[
1+\delta\right]  }.
\end{equation}
Set $\alpha\in(1/2,1)$ such that%
\begin{equation}
\delta_{1}>\max\{D(\tau\Vert\omega)-\widetilde{D}_{\alpha}(\tau\Vert
\omega),\widetilde{D}_{\beta}(\rho\Vert\sigma)-D(\rho\Vert\sigma)\},
\end{equation}
which is possible due to \eqref{eq:conv-PR-q-rel-ent},
\eqref{eq:app-Petz-Renyi-ordered},
\eqref{eq:app-sandwiched-Renyi-a-1-rel-ent},
\eqref{eq:app-sandwiched-Renyi-ordered}, and the fact that $\beta
=\alpha/\left(  2\alpha-1\right)  $, and for this choice of $\alpha$, pick $n$
large enough so that%
\begin{equation}
\delta_{2}>\frac{2\alpha}{n(1-\alpha)}\log_{2}(1/[1-\varepsilon]).
\label{eq:app-delta-2ineq-SC}%
\end{equation}
For these choices, we then have that%
\begin{multline}
\widetilde{D}_{\beta}(\rho\Vert\sigma)+\frac{2\alpha}{n(1-\alpha)}\log
_{2}(1/[1-\varepsilon])\\
<D(\rho\Vert\sigma)+\delta_{1}+\delta_{2},
\end{multline}
and we also have that%
\begin{equation}
\frac{m}{n}\widetilde{D}_{\alpha}(\tau\Vert\omega)>\frac{m}{n}\left[
D(\tau\Vert\omega)-\delta_{1}\right]  .
\end{equation}
Putting these inequalities together, we find that%
\begin{equation}
\frac{m}{n}<\frac{D(\rho\Vert\sigma)+\delta_{1}+\delta_{2}}{D(\tau\Vert
\omega)-\delta_{1}}=\frac{D(\rho\Vert\sigma)}{D(\tau\Vert\omega)}+\delta.
\label{eq:app-strong-conv-steps-last}%
\end{equation}
Thus, the rate of the protocol $\frac{m}{n}$ is strictly less than
$\frac{D(\rho\Vert\sigma)}{D(\tau\Vert\omega)}+\delta$, so that an
$(n,n[R+\delta],\varepsilon)$ box transformation protocol cannot exist for the
choice of $n$ taken nor any $n$ larger than that (for the latter statement,
note that \eqref{eq:app-delta-2ineq-SC}\ still holds for larger $n$).

\subsection{Strong converse via Petz--R\'{e}nyi relative entropy}

We now discuss an alternative proof of the strong converse by going through
the Petz--R\'{e}nyi relative entropy. We begin with a pseudo-continuity
inequality for the Petz--R\'{e}nyi relative entropy. The proof of
Lemma~\ref{lem:petz-renyi-continuity} below follows the spirit of the proof of
\cite[Proposition~2.8]{LWD16}, but this time some steps are different.

\begin{lemma}
\label{lem:petz-renyi-continuity}Let $\rho_{0}$, $\rho_{1}$, and $\sigma$ be
quantum states such that $\operatorname{supp}(\rho_{0})\subseteq
\operatorname{supp}(\sigma)$. Fix $\alpha\in(0,1)$ and $\beta\equiv
\beta(\alpha):=2-\alpha\in\left(  1,2\right)  $. Then%
\begin{equation}
D_{\beta}(\rho_{0}\Vert\sigma)-D_{\alpha}(\rho_{1}\Vert\sigma)\geq\frac
{2}{1-\alpha}\log_{2}\!\left[  1-\frac{1}{2}\left\Vert \rho_{0}-\rho
_{1}\right\Vert _{1}\right]  .
\end{equation}

\end{lemma}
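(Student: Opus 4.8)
The plan is to follow the proof of Lemma~\ref{lem:app-continuity-sandwiched-Renyi} almost verbatim, but with the $2\alpha$- and $2\beta$-Schatten norms replaced by the Hilbert--Schmidt ($2$-norm) characterization of the Petz--R\'enyi relative entropy recalled above, namely $D_{\gamma}(\omega\Vert\sigma)=\frac{2}{\gamma-1}\log_{2}\Vert\omega^{\gamma/2}\sigma^{(1-\gamma)/2}\Vert_{2}$. First I would use this together with the choice $\beta=2-\alpha$, so that $\beta-1=1-\alpha$ and $\alpha-1=-(1-\alpha)$, to obtain
\begin{equation}
D_{\beta}(\rho_{0}\Vert\sigma)-D_{\alpha}(\rho_{1}\Vert\sigma)=\frac{2}{1-\alpha}\log_{2}\!\left[\Vert\rho_{0}^{\beta/2}\sigma^{(1-\beta)/2}\Vert_{2}\cdot\Vert\rho_{1}^{\alpha/2}\sigma^{(1-\alpha)/2}\Vert_{2}\right],
\end{equation}
where the hypothesis $\operatorname{supp}(\rho_{0})\subseteq\operatorname{supp}(\sigma)$ guarantees finiteness of the first factor and gives meaning to the negative power $\sigma^{(1-\beta)/2}$ on $\operatorname{supp}(\sigma)$ (since $\beta>1$).

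Next, since $(1-\beta)/2+(1-\alpha)/2=(2-\alpha-\beta)/2=0$, I would apply H\"older's inequality $\Vert XY\Vert_{1}\leq\Vert X\Vert_{2}\Vert Y\Vert_{2}$ with $X=\rho_{0}^{\beta/2}\sigma^{(1-\beta)/2}$ and $Y=\sigma^{(1-\alpha)/2}\rho_{1}^{\alpha/2}$, noting $\Vert Y\Vert_{2}=\Vert\rho_{1}^{\alpha/2}\sigma^{(1-\alpha)/2}\Vert_{2}$, to get
\begin{equation}
\Vert\rho_{0}^{\beta/2}\sigma^{(1-\beta)/2}\Vert_{2}\,\Vert\rho_{1}^{\alpha/2}\sigma^{(1-\alpha)/2}\Vert_{2}\geq\Vert\rho_{0}^{\beta/2}\Pi_{\sigma}\rho_{1}^{\alpha/2}\Vert_{1}\geq\operatorname{Tr}[\rho_{1}^{\alpha/2}\rho_{0}^{\beta/2}],
\end{equation}
where I used $\Vert A\Vert_{1}\geq|\operatorname{Tr}[A]|$, cyclicity of the trace, the identity $\rho_{0}^{\beta/2}\Pi_{\sigma}=\rho_{0}^{\beta/2}$ (valid because $\operatorname{supp}(\rho_{0})\subseteq\operatorname{supp}(\sigma)$), and the fact that $\operatorname{Tr}[\rho_{1}^{\alpha/2}\rho_{0}^{\beta/2}]$ is real and nonnegative, being the trace of a product of positive semi-definite operators.

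To finish, I would invoke the elementary trace inequality $\operatorname{Tr}[P^{s}Q^{1-s}]\geq\frac{1}{2}\big(\operatorname{Tr}[P]+\operatorname{Tr}[Q]-\Vert P-Q\Vert_{1}\big)$, which holds for positive semi-definite $P,Q$ and $s\in[0,1]$ (the easy half of the quantum Chernoff bound; see, e.g., \cite{ANSV08}), applied with $P=\rho_{1}$, $Q=\rho_{0}$, and $s=\alpha/2\in(0,1/2)$. Since $\beta/2=1-\alpha/2$ and $\operatorname{Tr}[\rho_{0}]=\operatorname{Tr}[\rho_{1}]=1$, this yields $\operatorname{Tr}[\rho_{1}^{\alpha/2}\rho_{0}^{\beta/2}]\geq 1-\frac{1}{2}\Vert\rho_{0}-\rho_{1}\Vert_{1}$. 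Combining the three displays and using that $t\mapsto\frac{2}{1-\alpha}\log_{2}t$ is increasing (as $\alpha<1$) gives the claimed bound; in the degenerate case $1-\frac{1}{2}\Vert\rho_{0}-\rho_{1}\Vert_{1}\leq 0$ the right-hand side is $-\infty$ and there is nothing to prove. I expect the only nonroutine ingredient to be recognizing and applying this last trace inequality; every other step is a direct transcription of the argument already used for the sandwiched case, so I anticipate no genuine obstacle.
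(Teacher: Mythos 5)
Your proposal is correct and follows essentially the same route as the paper's proof: the identity $\beta-1=1-\alpha$ to combine the two terms into a product of Schatten $2$-norms, the Cauchy--Schwarz/H\"older step $\Vert X\Vert_{2}\Vert Y\Vert_{2}\geq\Vert XY\Vert_{1}$ with the $\sigma$ powers cancelling on $\operatorname{supp}(\sigma)$, the bound $\Vert A\Vert_{1}\geq|\operatorname{Tr}[A]|$, and the quantum Chernoff trace inequality applied to $\operatorname{Tr}[\rho_{0}^{\beta/2}\rho_{1}^{1-\beta/2}]$. The only cosmetic difference is that you invoke the Chernoff inequality with $s=\alpha/2$ and the roles of $\rho_{0},\rho_{1}$ swapped, which is the same statement by the symmetry $s\leftrightarrow 1-s$.
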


\begin{proof}
Consider that $\alpha-1=1-\beta$, so that%
\begin{align}
&  D_{\beta}(\rho_{0}\Vert\sigma)-D_{\alpha}(\rho_{1}\Vert\sigma)\nonumber\\
&  =\frac{1}{\beta-1}\log_{2}\operatorname{Tr}[\rho_{0}^{\beta}\sigma
^{1-\beta}]-\frac{1}{\alpha-1}\log_{2}\operatorname{Tr}[\rho_{1}^{\alpha
}\sigma^{1-\alpha}]\\
&  =\frac{1}{\beta-1}\log_{2}\operatorname{Tr}[\rho_{0}^{\beta}\sigma
^{1-\beta}]+\frac{1}{\beta-1}\log_{2}\operatorname{Tr}[\rho_{1}^{\alpha}%
\sigma^{1-\alpha}]\\
&  =\frac{1}{\beta-1}\log_{2}\left(  \operatorname{Tr}[\rho_{0}^{\beta}%
\sigma^{1-\beta}]\operatorname{Tr}[\rho_{1}^{\alpha}\sigma^{1-\alpha}]\right)
\\
&  =\frac{1}{\beta-1}\log_{2}\left(  \left\Vert \rho_{0}^{\beta/2}%
\sigma^{\left(  1-\beta\right)  /2}\right\Vert _{2}^{2}\left\Vert
\sigma^{\left(  1-\alpha\right)  /2}\rho_{1}^{\alpha/2}\right\Vert _{2}%
^{2}\right) \\
&  \geq\frac{1}{\beta-1}\log_{2}\left\Vert \rho_{0}^{\beta/2}\sigma^{\left(
1-\beta\right)  /2}\sigma^{\left(  1-\alpha\right)  /2}\rho_{1}^{\alpha
/2}\right\Vert _{1}^{2}\\
&  =\frac{2}{\beta-1}\log_{2}\left\Vert \rho_{0}^{\beta/2}\rho_{1}^{\alpha
/2}\right\Vert _{1}\\
&  \geq\frac{2}{\beta-1}\log_{2}\operatorname{Tr}[\rho_{0}^{\beta/2}\rho
_{1}^{\alpha/2}]\\
&  =\frac{2}{\beta-1}\log_{2}\operatorname{Tr}[\rho_{0}^{\beta/2}\rho
_{1}^{\left(  2-\beta\right)  /2}]\\
&  =\frac{2}{\beta-1}\log_{2}\operatorname{Tr}[\rho_{0}^{\beta/2}\rho
_{1}^{1-\beta/2}]\\
&  \geq\frac{2}{\beta-1}\log_{2}\left[  1-\frac{1}{2}\left\Vert \rho_{0}%
-\rho_{1}\right\Vert _{1}\right] \\
&  =\frac{2}{1-\alpha}\log_{2}\left[  1-\frac{1}{2}\left\Vert \rho_{0}%
-\rho_{1}\right\Vert _{1}\right]  .
\end{align}
The fourth equality follows from a rewriting of the Petz--R\'enyi relative
entropy in terms of the Schatten 2-norm, as given in \cite[Eq.~(3.10)]%
{DW2016}. The first inequality follows from an application of the
Cauchy--Schwarz inequality. The second inequality follows from the variational
characterization of the trace norm as $\left\Vert A\right\Vert _{1}=\sup
_{U}\left\vert \operatorname{Tr}[AU]\right\vert $, where the optimization is
over all unitaries and we pick $U=I$ to get the inequality. The last
inequality follows from \cite[Theorem~1]{ACMBMAV07} and because $\beta
/2\in(1/2,1)$.
\end{proof}

\begin{remark}
We note here that the bound from Lemma~\ref{lem:petz-renyi-continuity} can be
used to obtain pseudo-continuity bounds for information quantities derived
from the Petz--R\'enyi relative entropy, such as mutual information and
conditional entropy, much like what is done in \cite[Proposition~2.8]{LWD16}.
\end{remark}

\bigskip

The following is another auxiliary lemma that serves as a one-shot converse
for any approximate box transformation $(\rho,\sigma)\ \underrightarrow
{\mathcal{N}}\ (\tau,\omega)$ where $\omega=\mathcal{N}(\sigma)$:

\begin{lemma}
Let $\rho$, $\sigma$, $\tau$, and $\omega$ be quantum states and $\mathcal{N}$
a quantum channel such that $\mathcal{N}(\sigma)=\omega$. Then for $\alpha
\in(0,1)$ and $\beta\equiv\beta(\alpha):=2-\alpha$, we have that%
\begin{equation}
D_{\beta}(\rho\Vert\sigma)\geq D_{\alpha}(\tau\Vert\omega)+\frac{2}{1-\alpha
}\log_{2}\left[  1-\frac{1}{2}\left\Vert \mathcal{N}(\rho)-\tau\right\Vert
_{1}\right]  .
\end{equation}

\end{lemma}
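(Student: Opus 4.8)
The plan is to imitate, line for line, the proof of the sandwiched-R\'enyi auxiliary lemma stated immediately above, simply replacing the data-processing inequality for $\widetilde{D}_{\alpha}$ with that for the Petz--R\'enyi relative entropy, and replacing Lemma~\ref{lem:app-continuity-sandwiched-Renyi} with its Petz analogue Lemma~\ref{lem:petz-renyi-continuity}. Concretely, the target inequality will follow from the two-step chain
\begin{equation}
D_{\beta}(\rho\Vert\sigma)\;\geq\;D_{\beta}(\mathcal{N}(\rho)\Vert\mathcal{N}(\sigma))\;=\;D_{\beta}(\mathcal{N}(\rho)\Vert\omega)\;\geq\;D_{\alpha}(\tau\Vert\omega)+\tfrac{2}{1-\alpha}\log_{2}\!\left[1-\tfrac12\|\mathcal{N}(\rho)-\tau\|_{1}\right].
\end{equation}

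First I would handle the degenerate case. If $\operatorname{supp}(\rho)\not\subseteq\operatorname{supp}(\sigma)$, then since $\beta=2-\alpha\in(1,2)$ we have $D_{\beta}(\rho\Vert\sigma)=\infty$ and the claimed bound is trivial. So assume $\operatorname{supp}(\rho)\subseteq\operatorname{supp}(\sigma)$. By the standard fact that a quantum channel preserves support containment (the Renner--Appendix~B fact already cited earlier in this text), it follows that $\operatorname{supp}(\mathcal{N}(\rho))\subseteq\operatorname{supp}(\mathcal{N}(\sigma))=\operatorname{supp}(\omega)$, so the hypothesis $\operatorname{supp}(\rho_{0})\subseteq\operatorname{supp}(\sigma)$ of Lemma~\ref{lem:petz-renyi-continuity} is met with the assignment $\rho_{0}=\mathcal{N}(\rho)$, $\rho_{1}=\tau$, and the reference state taken to be $\omega$.

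Next, the first inequality in the chain is the data processing inequality \eqref{eq:DP-Petz-Renyi}, which is valid precisely because $\beta=2-\alpha\in(1,2]\subseteq(0,1)\cup(1,2]$; the subsequent equality is just the hypothesis $\mathcal{N}(\sigma)=\omega$. The last inequality is a direct application of Lemma~\ref{lem:petz-renyi-continuity} (whose convention $\beta(\alpha)=2-\alpha$ matches the one here), applied to $\rho_{0}=\mathcal{N}(\rho)$ and $\rho_{1}=\tau$ relative to $\omega$. Composing the three relations yields the statement.

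I do not anticipate any real obstacle: essentially all the work is already encapsulated in Lemma~\ref{lem:petz-renyi-continuity} and in the Petz--R\'enyi data-processing inequality. The only point requiring care is ensuring the support hypothesis of Lemma~\ref{lem:petz-renyi-continuity} holds for the pair $(\mathcal{N}(\rho),\tau)$ with respect to $\omega$, which is exactly why I would open the argument with the reduction to the case $\operatorname{supp}(\rho)\subseteq\operatorname{supp}(\sigma)$ and then invoke channel monotonicity of supports.
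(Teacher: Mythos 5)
Your proposal is correct and follows essentially the same route as the paper's own proof: data processing for $D_{\beta}$ with $\beta=2-\alpha\in(1,2]$, the substitution $\mathcal{N}(\sigma)=\omega$, and then Lemma~\ref{lem:petz-renyi-continuity} applied to the pair $(\mathcal{N}(\rho),\tau)$ relative to $\omega$. Your extra care with the degenerate case and the support hypothesis is a welcome detail that the paper leaves implicit.
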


\begin{proof}
Consider that%
\begin{align}
&  D_{\beta}(\rho\Vert\sigma)\nonumber\\
&  \geq D_{\beta}(\mathcal{N}(\rho)\Vert\mathcal{N}(\sigma))\\
&  =D_{\beta}(\mathcal{N}(\rho)\Vert\omega)\\
&  \geq D_{\alpha}(\tau\Vert\omega)+\frac{2}{1-\alpha}\log_{2}\left[
1-\frac{1}{2}\left\Vert \mathcal{N}(\rho)-\tau\right\Vert _{1}\right]  .
\end{align}
The first inequality follows from the quantum data processing inequality in
\eqref{eq:DP-Petz-Renyi}\ and the other from
Lemma~\ref{lem:petz-renyi-continuity}.
\end{proof}

\begin{proposition}
\label{prop:strong-converse-exp-Petz-Renyi}Let $n,m\in\mathbb{Z}^{+}$ and
$\varepsilon\in\left[  0,1\right]  $. Let $\rho$, $\sigma$, $\tau$, and
$\omega$ be quantum states and $\mathcal{N}^{(n)}$ a quantum channel
constituting an $(n,m,\varepsilon)$\ box transformation protocol (i.e., so
that $\mathcal{N}^{(n)}(\rho^{\otimes n})\approx_{\varepsilon}\tau^{\otimes
m}$ and $\mathcal{N}^{(n)}(\sigma^{\otimes n})=\omega^{\otimes m}$).\ Then for
$\alpha\in(0,1)$ and $\beta\equiv\beta(\alpha):=2-\alpha$, we have that%
\begin{equation}
\frac{D_{\beta}(\rho\Vert\sigma)}{D_{\alpha}(\tau\Vert\omega)}\geq\frac{m}%
{n}+\frac{2}{n\left(  1-\alpha\right)  D_{\alpha}(\tau\Vert\omega)}\log
_{2}(1-\varepsilon).
\end{equation}
Alternatively, if we set $R=m/n$, then the above bound can be written as%
\begin{align}
&  -\frac{1}{n}\log_{2}(1-\varepsilon)\nonumber\\
&  \geq\left(  \frac{1-\alpha}{2}\right)  \left(  R\ D_{\alpha}(\tau
\Vert\omega)-D_{\beta}(\rho\Vert\sigma)\right) \\
&  =\left(  \frac{\beta-1}{2}\right)  \left(  R\ D_{\alpha}(\tau\Vert
\omega)-D_{\beta}(\rho\Vert\sigma)\right)  .
\end{align}

\end{proposition}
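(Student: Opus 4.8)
The plan is to repeat the argument of Proposition~\ref{prop:strong-converse-exp} essentially verbatim, replacing the sandwiched R\'enyi input by its Petz--R\'enyi analogue. Concretely, I would first apply the one-shot converse lemma stated immediately before the proposition (with $\beta=2-\alpha$) to the states $\rho^{\otimes n}$, $\sigma^{\otimes n}$, $\tau^{\otimes m}$, $\omega^{\otimes m}$ and the channel $\mathcal{N}^{(n)}$, which satisfies $\mathcal{N}^{(n)}(\sigma^{\otimes n})=\omega^{\otimes m}$ by hypothesis. This yields
\begin{equation}
D_{\beta}(\rho^{\otimes n}\Vert\sigma^{\otimes n})\geq D_{\alpha}(\tau^{\otimes m}\Vert\omega^{\otimes m})+\frac{2}{1-\alpha}\log_{2}\!\left[1-\tfrac{1}{2}\left\Vert\mathcal{N}^{(n)}(\rho^{\otimes n})-\tau^{\otimes m}\right\Vert_{1}\right].
\end{equation}

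Next I would invoke additivity of the Petz--R\'enyi relative entropy on tensor powers, namely $D_{\beta}(\rho^{\otimes n}\Vert\sigma^{\otimes n})=nD_{\beta}(\rho\Vert\sigma)$ and $D_{\alpha}(\tau^{\otimes m}\Vert\omega^{\otimes m})=mD_{\alpha}(\tau\Vert\omega)$, together with the protocol accuracy bound $\tfrac{1}{2}\Vert\mathcal{N}^{(n)}(\rho^{\otimes n})-\tau^{\otimes m}\Vert_{1}\leq\varepsilon$. Since $\log_{2}$ is increasing and the prefactor $\tfrac{2}{1-\alpha}$ is positive (because $\alpha<1$), these combine to give
\begin{equation}
nD_{\beta}(\rho\Vert\sigma)\geq mD_{\alpha}(\tau\Vert\omega)+\frac{2}{1-\alpha}\log_{2}(1-\varepsilon).
\end{equation}
Dividing by $n$ and then by $D_{\alpha}(\tau\Vert\omega)$ produces the first displayed bound of the proposition, while rearranging the last inequality directly (using $\beta-1=1-\alpha$ and that $\log_{2}(1-\varepsilon)\leq 0$) produces the equivalent ``alternatively'' form in terms of $R=m/n$.

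I expect no serious obstacle here, since the heavy lifting---the pseudo-continuity inequality for $D_{\beta}$ against $D_{\alpha}$ and the data-processing inequality~\eqref{eq:DP-Petz-Renyi} for $\beta\in(1,2)$---is already contained in Lemma~\ref{lem:petz-renyi-continuity} and the auxiliary lemma above, so this proposition is a repackaging of the single-copy converse via additivity. The only points requiring care are the degenerate cases and the sign bookkeeping when moving $\log_{2}(1-\varepsilon)$ across the inequality: if $\operatorname{supp}(\rho)\not\subseteq\operatorname{supp}(\sigma)$ then $D_{\beta}(\rho\Vert\sigma)=\infty$ (as $\beta\in(1,2)$) and every inequality holds trivially, and if $\tau=\omega$ then $D_{\alpha}(\tau\Vert\omega)=0$ and the division step, like the statement itself, is to be read in the limiting sense already flagged in the discussion around~\eqref{eq:box-trans-main-result}; otherwise all quantities appearing are finite and strictly positive.
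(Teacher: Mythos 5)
Your proposal is correct and follows essentially the same route as the paper's proof: apply the one-shot Petz--R\'enyi converse lemma to the $n$-fold and $m$-fold states, invoke additivity on tensor powers, use the accuracy bound $\tfrac{1}{2}\Vert\mathcal{N}^{(n)}(\rho^{\otimes n})-\tau^{\otimes m}\Vert_{1}\leq\varepsilon$ with monotonicity of $\log_{2}$, and divide by $n$. The extra attention you give to the degenerate support and $\tau=\omega$ cases is a reasonable addition but not needed beyond what the paper already records.
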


\begin{proof}
Consider that%
\begin{align}
&  nD_{\beta}(\rho\Vert\sigma)\nonumber\\
&  =D_{\beta}(\rho^{\otimes n}\Vert\sigma^{\otimes n})\\
&  \geq D_{\alpha}(\tau^{\otimes m}\Vert\omega^{\otimes m})\nonumber\\
&  \qquad+\frac{2}{1-\alpha}\log_{2}\left[  1-\frac{1}{2}\left\Vert
\mathcal{N}^{(n)}(\rho^{\otimes n})-\tau^{\otimes m}\right\Vert _{1}\right] \\
&  \geq mD_{\alpha}(\tau\Vert\omega)+\frac{2}{1-\alpha}\log_{2}(1-\varepsilon
),
\end{align}
Dividing by $n$, we find that%
\begin{equation}
D_{\beta}(\rho\Vert\sigma)\geq\frac{m}{n}D_{\alpha}(\tau\Vert\omega)+\frac
{2}{n\left(  1-\alpha\right)  }\log_{2}(1-\varepsilon),
\end{equation}
which concludes the proof.
\end{proof}

\bigskip

We note here that one could arrive at the strong converse statement by going
through steps similar to those in
\eqref{eq:app-strong-conv-steps-1}--\eqref{eq:app-strong-conv-steps-last}, but
using Proposition~\ref{prop:strong-converse-exp-Petz-Renyi} instead.

\section{Bounding the smooth max-relative entropy with quantum relative
entropies}

\label{sec:app-dmaxeps-bnds}In this appendix, we establish lower and upper
bounds for the smooth max-relative entropy in terms of the R\'{e}nyi relative
entropies. We begin with the following lower bound:

\begin{proposition}
Let $\rho$ and $\sigma$ be quantum states. The following bound holds for all
$\alpha\in\left[  1/2,1\right)  $ and $\varepsilon\in\lbrack0,1)$:%
\begin{equation}
D_{\max}^{\varepsilon}(\rho\Vert\sigma)\geq\widetilde{D}_{\alpha}(\rho
\Vert\sigma)+\frac{2\alpha}{\alpha-1}\log_{2}\!\left(  \frac{1}{1-\varepsilon
}\right)  . \label{eq:app-lower-dmaxeps-da}%
\end{equation}

\end{proposition}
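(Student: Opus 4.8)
The plan is to derive \eqref{eq:app-lower-dmaxeps-da} from Lemma~\ref{lem:app-continuity-sandwiched-Renyi} combined with the monotonicity of the sandwiched R\'enyi relative entropy in its order parameter. First I would unpack the smoothing in \eqref{eq:smooth-max-rel-ent}: fix an arbitrary state $\widetilde{\rho}$ with $\frac{1}{2}\Vert\widetilde{\rho}-\rho\Vert_{1}\leq\varepsilon$. If $\operatorname{supp}(\widetilde{\rho})\not\subseteq\operatorname{supp}(\sigma)$, then $D_{\max}(\widetilde{\rho}\Vert\sigma)=\infty$ and the bound to be shown holds trivially for this $\widetilde{\rho}$; so it suffices to treat the case $\operatorname{supp}(\widetilde{\rho})\subseteq\operatorname{supp}(\sigma)$, for which Lemma~\ref{lem:app-continuity-sandwiched-Renyi} applies with $\rho_{0}=\widetilde{\rho}$.

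Next, for $\alpha\in(1/2,1)$ set $\beta\equiv\beta(\alpha):=\alpha/(2\alpha-1)$, which lies in $(1,\infty)$. Using that $\widetilde{D}_{\infty}=D_{\max}$ by~\eqref{eq:app-dmax-limit} and that $\widetilde{D}_{\gamma}$ is nondecreasing in $\gamma$ by~\eqref{eq:app-sandwiched-Renyi-ordered}, we get $D_{\max}(\widetilde{\rho}\Vert\sigma)\geq\widetilde{D}_{\beta}(\widetilde{\rho}\Vert\sigma)$, and then Lemma~\ref{lem:app-continuity-sandwiched-Renyi} with $\rho_{0}=\widetilde{\rho}$, $\rho_{1}=\rho$ gives
\begin{equation}
D_{\max}(\widetilde{\rho}\Vert\sigma)\geq\widetilde{D}_{\alpha}(\rho\Vert\sigma)+\frac{\alpha}{1-\alpha}\log_{2}F(\widetilde{\rho},\rho).
\end{equation}
To convert the fidelity term to the desired form, I would use the Fuchs--van de Graaf-type inequality $\frac{1}{2}\Vert\rho_{0}-\rho_{1}\Vert_{1}\geq1-\sqrt{F(\rho_{0},\rho_{1})}$ of~\cite{FG98} (already invoked in the proof of Proposition~\ref{prop:strong-converse-exp}), which yields $\sqrt{F(\widetilde{\rho},\rho)}\geq1-\varepsilon$ and hence $\log_{2}F(\widetilde{\rho},\rho)\geq2\log_{2}(1-\varepsilon)$. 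Since $\alpha/(1-\alpha)>0$ for $\alpha\in(1/2,1)$, multiplying through gives $\frac{\alpha}{1-\alpha}\log_{2}F(\widetilde{\rho},\rho)\geq\frac{2\alpha}{\alpha-1}\log_{2}(1/[1-\varepsilon])$, so that $D_{\max}(\widetilde{\rho}\Vert\sigma)\geq\widetilde{D}_{\alpha}(\rho\Vert\sigma)+\frac{2\alpha}{\alpha-1}\log_{2}(1/[1-\varepsilon])$. As the right-hand side does not depend on $\widetilde{\rho}$, taking the infimum over all admissible $\widetilde{\rho}$ establishes \eqref{eq:app-lower-dmaxeps-da} for $\alpha\in(1/2,1)$ (no attainment of the infimum is needed).

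For the endpoint $\alpha=1/2$, I would pass to the limit $\alpha\rightarrow1/2^{+}$ in the inequality just proved, using continuity of $\widetilde{D}_{\alpha}(\rho\Vert\sigma)$ in $\alpha$ (so that it converges to $-\log F(\rho,\sigma)=\widetilde{D}_{1/2}(\rho\Vert\sigma)$) together with $\lim_{\alpha\rightarrow1/2}\frac{2\alpha}{\alpha-1}=-2$; this recovers \eqref{eq:app-lower-dmaxeps-da} at $\alpha=1/2$. Alternatively, one can rerun the proof of Lemma~\ref{lem:app-continuity-sandwiched-Renyi} in the degenerate case $\alpha=1/2$, $\beta=\infty$, where H\"older's inequality is applied with exponents $(\infty,1)$.

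I do not anticipate a substantial obstacle: the argument is essentially a one-line composition of Lemma~\ref{lem:app-continuity-sandwiched-Renyi}, monotonicity in the R\'enyi order, and a standard fidelity/trace-distance comparison. The only points requiring mild care are the support/finiteness bookkeeping for the smoothing state $\widetilde{\rho}$ (handled by the trivial $D_{\max}=\infty$ case) and the justification of the $\alpha\rightarrow1/2^{+}$ limit for the boundary case.
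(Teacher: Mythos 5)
Your proof is correct and follows essentially the same route as the paper's: monotonicity of $\widetilde{D}_{\gamma}$ in $\gamma$ together with \eqref{eq:app-dmax-limit} to pass from $D_{\max}(\widetilde{\rho}\Vert\sigma)$ to $\widetilde{D}_{\beta}(\widetilde{\rho}\Vert\sigma)$, then Lemma~\ref{lem:app-continuity-sandwiched-Renyi} with $\rho_{0}=\widetilde{\rho}$ and $\rho_{1}=\rho$, the Fuchs--van de Graaf inequality, an infimum over admissible $\widetilde{\rho}$, and a limit for the endpoint $\alpha=1/2$. The explicit support-bookkeeping you add is harmless and the sign manipulations check out.
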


\begin{proof}
First fix $\alpha\in\left(  1/2,1\right)  $. Let $\widetilde{\rho}$ be a state
such that $\frac{1}{2}\left\Vert \widetilde{\rho}-\rho\right\Vert _{1}%
\leq\varepsilon$. Then for $\beta\equiv\beta(\alpha):=\alpha/\left(
2\alpha-1\right)  $, we find that%
\begin{align}
D_{\max}(\widetilde{\rho}\Vert\sigma)  &  \geq\widetilde{D}_{\beta}%
(\widetilde{\rho}\Vert\sigma)\\
&  \geq\widetilde{D}_{\alpha}(\rho\Vert\sigma)+\frac{\alpha}{1-\alpha}\log
_{2}F(\widetilde{\rho},\rho)\\
&  =\widetilde{D}_{\alpha}(\rho\Vert\sigma)+\frac{2\alpha}{1-\alpha}\log
_{2}\sqrt{F}(\widetilde{\rho},\rho)\\
&  \geq\widetilde{D}_{\alpha}(\rho\Vert\sigma)+\frac{2\alpha}{1-\alpha}%
\log_{2}(1-\varepsilon).
\end{align}
The first inequality follows from \eqref{eq:app-dmax-limit} and
\eqref{eq:app-sandwiched-Renyi-ordered}. The second inequality follows from
Lemma~\ref{lem:app-continuity-sandwiched-Renyi}. The final inequality follows
because \cite{FG98}%
\begin{equation}
1-\sqrt{F(\widetilde{\rho},\rho)}\leq\frac{1}{2}\left\Vert \widetilde{\rho
}-\rho\right\Vert _{1}.
\end{equation}
Since the bound holds for an arbitrary $\widetilde{\rho}$ satisfying $\frac
{1}{2}\left\Vert \widetilde{\rho}-\rho\right\Vert _{1}\leq\varepsilon$, we
conclude \eqref{eq:app-lower-dmaxeps-da}.

The inequality in \eqref{eq:app-lower-dmaxeps-da} for $\alpha=1/2$ follows
since \eqref{eq:app-lower-dmaxeps-da} holds for all $\alpha\in\left(
1/2,1\right)  $ and by taking the limit as $\alpha\to1/2$.
\end{proof}

\bigskip Another lower bound on the smooth max-relative entropy is as follows:

\begin{proposition}
Let $\rho$ and $\sigma$ be quantum states. The following bound holds for all
$\alpha\in\lbrack0,1)$ and $\varepsilon\in\lbrack0,1)$:%
\begin{equation}
D_{\max}^{\varepsilon}(\rho\Vert\sigma)\geq D_{\alpha}(\rho\Vert\sigma
)+\frac{2}{\alpha-1}\log_{2}\!\left(  \frac{1}{1-\varepsilon}\right)  .
\label{eq:app-lower-dmaxeps-dapetz}%
\end{equation}

\end{proposition}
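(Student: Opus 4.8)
The plan is to follow the proof of the preceding proposition almost verbatim, with the sandwiched R\'enyi relative entropy and Lemma~\ref{lem:app-continuity-sandwiched-Renyi} replaced by the Petz--R\'enyi relative entropy and the pseudo-continuity bound of Lemma~\ref{lem:petz-renyi-continuity}. First I would treat the case $\alpha\in(0,1)$ and set $\beta\equiv\beta(\alpha):=2-\alpha\in(1,2)$, which is exactly the pairing appearing in Lemma~\ref{lem:petz-renyi-continuity}. Let $\widetilde{\rho}$ be an arbitrary state with $\tfrac{1}{2}\Vert\widetilde{\rho}-\rho\Vert_{1}\leq\varepsilon$, i.e.\ an arbitrary candidate in the optimization \eqref{eq:smooth-max-rel-ent} defining $D_{\max}^{\varepsilon}(\rho\Vert\sigma)$. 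If $\operatorname{supp}(\widetilde{\rho})\not\subseteq\operatorname{supp}(\sigma)$, then $D_{\max}(\widetilde{\rho}\Vert\sigma)=\infty$ by \eqref{eq:max-rel-ent}, which exceeds the (finite) right-hand side of \eqref{eq:app-lower-dmaxeps-dapetz}, so such $\widetilde{\rho}$ never determine the infimum and may be discarded (if no admissible $\widetilde{\rho}$ satisfies the support condition, then $D_{\max}^{\varepsilon}(\rho\Vert\sigma)=\infty$ and the claim is trivial); hence assume $\operatorname{supp}(\widetilde{\rho})\subseteq\operatorname{supp}(\sigma)$.

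The heart of the argument is then the chain
\begin{align}
D_{\max}(\widetilde{\rho}\Vert\sigma)
&\geq D_{\beta}(\widetilde{\rho}\Vert\sigma)\\
&\geq D_{\alpha}(\rho\Vert\sigma)+\frac{2}{1-\alpha}\log_{2}\!\left[1-\tfrac{1}{2}\Vert\widetilde{\rho}-\rho\Vert_{1}\right]\\
&\geq D_{\alpha}(\rho\Vert\sigma)+\frac{2}{1-\alpha}\log_{2}(1-\varepsilon).
\end{align}
Here the second inequality is Lemma~\ref{lem:petz-renyi-continuity} with $\rho_{0}=\widetilde{\rho}$ and $\rho_{1}=\rho$ (its support hypothesis holding by the reduction above), and the last inequality uses $\tfrac{2}{1-\alpha}<0$ together with $\tfrac{1}{2}\Vert\widetilde{\rho}-\rho\Vert_{1}\leq\varepsilon$ and monotonicity of $\log_{2}$. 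For the first inequality, $D_{\beta}(\widetilde{\rho}\Vert\sigma)\leq D_{\max}(\widetilde{\rho}\Vert\sigma)$ for $\beta\in(1,2]$, I would use that the Petz--R\'enyi relative entropy of order $\beta\in(1,2]$ is nonincreasing in its second argument — which follows from operator anti-monotonicity of $t\mapsto t^{1-\beta}$ on $(0,\infty)$ since $1-\beta\in[-1,0)$ — applied to the operator inequality $\widetilde{\rho}/\lambda\leq\sigma$ with $\lambda:=2^{D_{\max}(\widetilde{\rho}\Vert\sigma)}$, together with the elementary evaluation $D_{\beta}(\widetilde{\rho}\Vert\widetilde{\rho}/\lambda)=\log_{2}\lambda$. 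A routine perturbation $\sigma\mapsto\sigma+\eta I$, $\eta\downarrow0$, handles the case in which $\widetilde{\rho}/\lambda$ or $\sigma$ fails to be invertible on $\operatorname{supp}(\sigma)$.

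Having established the chain for every admissible $\widetilde{\rho}$, I would take the infimum of the left-hand side over all such $\widetilde{\rho}$; since the right-hand side does not depend on $\widetilde{\rho}$, this yields
\begin{equation}
D_{\max}^{\varepsilon}(\rho\Vert\sigma)\geq D_{\alpha}(\rho\Vert\sigma)+\frac{2}{1-\alpha}\log_{2}(1-\varepsilon)=D_{\alpha}(\rho\Vert\sigma)+\frac{2}{\alpha-1}\log_{2}\!\left(\frac{1}{1-\varepsilon}\right),
\end{equation}
which is \eqref{eq:app-lower-dmaxeps-dapetz} for all $\alpha\in(0,1)$ and $\varepsilon\in[0,1)$. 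Finally, for the endpoint $\alpha=0$ I would let $\alpha\downarrow0$ in the bound just proved, using the limit $\lim_{\alpha\to0}D_{\alpha}(\rho\Vert\sigma)=D_{\min}(\rho\Vert\sigma)$ recorded in Appendix~\ref{app:rel-ents-DP} and continuity of $\alpha\mapsto\tfrac{2}{\alpha-1}$ at $0$ — exactly as the preceding proposition obtained its $\alpha=1/2$ endpoint by a limiting argument.

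The step I expect to be the main obstacle, and the only place where the argument genuinely departs from a transcription of the sandwiched proof, is the first inequality $D_{\beta}(\widetilde{\rho}\Vert\sigma)\leq D_{\max}(\widetilde{\rho}\Vert\sigma)$. In the sandwiched case the analogous step was immediate because $\widetilde{D}_{\infty}=D_{\max}$ and the sandwiched R\'enyi relative entropies are ordered; the Petz--R\'enyi relative entropy, however, does \emph{not} converge to $D_{\max}$ as the order grows (and can exceed it for large orders), so one must be careful to invoke the ordering only for orders in $(1,2]$ — which is precisely why Lemma~\ref{lem:petz-renyi-continuity} is formulated with $\beta=2-\alpha$ confined to $(1,2)$. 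Everything else is bookkeeping identical to the preceding proof.
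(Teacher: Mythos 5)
Your proof is correct and follows the paper's argument essentially step for step: reduce to an arbitrary candidate $\widetilde{\rho}$ with $\tfrac{1}{2}\Vert\widetilde{\rho}-\rho\Vert_{1}\leq\varepsilon$, chain $D_{\max}(\widetilde{\rho}\Vert\sigma)\geq D_{\beta}(\widetilde{\rho}\Vert\sigma)\geq D_{\alpha}(\rho\Vert\sigma)+\tfrac{2}{1-\alpha}\log_{2}(1-\varepsilon)$ via Lemma~\ref{lem:petz-renyi-continuity} with $\beta=2-\alpha$, take the infimum, and handle $\alpha=0$ by a limit. The only place you deviate is the auxiliary inequality $D_{\beta}(\widetilde{\rho}\Vert\sigma)\leq D_{\max}(\widetilde{\rho}\Vert\sigma)$: the paper gets it from the ordering \eqref{eq:app-Petz-Renyi-ordered} ($D_{\beta}\leq D_{2}$ for $\beta\leq 2$) together with the direct computation $D_{2}(\widetilde{\rho}\Vert\sigma)=\log_{2}\operatorname{Tr}[\widetilde{\rho}\,\widetilde{\rho}^{1/2}\sigma^{-1}\widetilde{\rho}^{1/2}]\leq\log_{2}\Vert\widetilde{\rho}^{1/2}\sigma^{-1}\widetilde{\rho}^{1/2}\Vert_{\infty}=D_{\max}(\widetilde{\rho}\Vert\sigma)$ from \cite{JRSWW16}, whereas you argue via operator anti-monotonicity of $t\mapsto t^{1-\beta}$ applied to $\widetilde{\rho}/\lambda\leq\sigma$. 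Your route is valid, but note that the perturbation you propose, $\sigma\mapsto\sigma+\eta I$, does not address the actual degeneracy that threatens the anti-monotonicity step: the problematic operator is $(\widetilde{\rho}/\lambda)^{1-\beta}$ when $\widetilde{\rho}$ is singular on $\operatorname{supp}(\sigma)$, so one should instead regularize $\widetilde{\rho}$ (e.g., mix it slightly with $\sigma$) or simply fall back on the paper's $D_{2}$ computation, which sidesteps the issue entirely. You are right that one must restrict to $\beta\in(1,2]$ here, since the Petz--R\'{e}nyi quantity is not bounded by $D_{\max}$ for large orders.
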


\begin{proof}
First fix $\alpha\in\left(  0,1\right)  $. Let $\widetilde{\rho}$ be a state
such that $\frac{1}{2}\left\Vert \widetilde{\rho}-\rho\right\Vert _{1}%
\leq\varepsilon$. Then for $\beta\equiv\beta(\alpha):=2-\alpha$, we find that%
\begin{align}
&  D_{\max}(\widetilde{\rho}\Vert\sigma)\nonumber\\
&  \geq D_{\beta}(\widetilde{\rho}\Vert\sigma)\\
&  \geq D_{\alpha}(\rho\Vert\sigma)+\frac{2}{1-\alpha}\log_{2}\!\left[
1-\frac{1}{2}\left\Vert \widetilde{\rho}-\rho\right\Vert _{1}\right]
\label{eq:app-smoothmaxeps-petz-renyi-1}\\
&  \geq D_{\alpha}(\rho\Vert\sigma)+\frac{2}{1-\alpha}\log_{2}(1-\varepsilon).
\label{eq:app-smoothmaxeps-petz-renyi-2}%
\end{align}
The first inequality follows from \eqref{eq:app-Petz-Renyi-ordered} and
\cite[Eqs.~(43)--(46)]{JRSWW16}, the latter of which we repeat below:%
\begin{align}
D_{2}(\widetilde{\rho}\Vert\sigma)  &  =\log_{2}\operatorname{Tr}%
[\widetilde{\rho}^{2}\sigma^{-1}]\\
&  =\log_{2}\operatorname{Tr}[\widetilde{\rho}\widetilde{\rho}^{1/2}%
\sigma^{-1}\widetilde{\rho}^{1/2}]\\
&  \leq\log_{2}\sup_{\tau}\operatorname{Tr}[\tau\widetilde{\rho}^{1/2}%
\sigma^{-1}\widetilde{\rho}^{1/2}]\\
&  =\log_{2}\left\Vert \widetilde{\rho}^{1/2}\sigma^{-1}\widetilde{\rho}%
^{1/2}\right\Vert _{\infty}\\
&  =D_{\max}(\widetilde{\rho}\Vert\sigma).
\end{align}
Note that the optimization above is over quantum states~$\tau$. The second
inequality in \eqref{eq:app-smoothmaxeps-petz-renyi-1}\ follows from
Lemma~\ref{lem:petz-renyi-continuity}. Since the bound holds for an arbitrary
state~$\widetilde{\rho}$ satisfying $\frac{1}{2}\left\Vert \widetilde{\rho
}-\rho\right\Vert _{1}\leq\varepsilon$, we conclude \eqref{eq:app-lower-dmaxeps-dapetz}.

The inequality in \eqref{eq:app-lower-dmaxeps-dapetz} for $\alpha=0$ follows
since \eqref{eq:app-lower-dmaxeps-dapetz} holds for all $\alpha\in\left(
0,1\right)  $ and by taking the limit as $\alpha\rightarrow0$.
\end{proof}

\bigskip

We now give some upper bounds on the smooth max-relative entropy in terms of
the quantum relative entropy and the sandwiched R\'{e}nyi relative entropy.
The method for doing so follows the proof approach of \cite[Theorem~1]{JN12}
very closely. The upper bound in
Proposition~\ref{prop:smoothdmax-up-bnd-rel-ent} is very similar to
\cite[Theorem~1]{JN12}, but it is expressed in terms of quantum relative
entropy rather than observational divergence.

\begin{proposition}
\label{prop:smoothdmax-up-bnd-rel-ent}Given states $\rho$ and $\sigma$, the
following bound holds for all $\varepsilon\in\left(  0,1\right)  $:%
\begin{multline}
D_{\max}^{\varepsilon}(\rho\Vert\sigma)\leq\frac{1}{\varepsilon^{2}}\left[
D(\rho\Vert\sigma)+\frac{1}{2\ln2}\left\Vert \rho-\sigma\right\Vert
_{1}\right] \\
+\log_{2}\!\left(  \frac{1}{1-\varepsilon^{2}}\right)  .
\end{multline}

\end{proposition}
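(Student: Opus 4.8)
\textbf{The plan} is to exhibit one state $\widetilde{\rho}$ with $\tfrac{1}{2}\Vert\widetilde{\rho}-\rho\Vert_{1}\leq\varepsilon$ and $D_{\max}(\widetilde{\rho}\Vert\sigma)$ no larger than the claimed quantity; this suffices, since $D_{\max}^{\varepsilon}(\rho\Vert\sigma)\leq D_{\max}(\widetilde{\rho}\Vert\sigma)$ by definition. First I would dispose of the case $\operatorname{supp}(\rho)\not\subseteq\operatorname{supp}(\sigma)$, where $D(\rho\Vert\sigma)=\infty$ by \eqref{eq:q-rel-ent} and the bound is vacuous, so assume support containment and (restricting to $\operatorname{supp}(\sigma)$) that $\sigma>0$. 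Following the approach of \cite[Theorem~1]{JN12}, I would build $\widetilde{\rho}$ by truncating $\rho$ at a ``likelihood-ratio'' level $2^{\mu}$, with $\mu$ fixed at the end: let $\Pi$ be the spectral projector of the relative operator $\sqrt{\rho}\,\sigma^{-1}\sqrt{\rho}$ onto its eigenvalues at most $2^{\mu}$, set $c:=\operatorname{Tr}[\Pi\rho]$, and define $\widetilde{\rho}:=c^{-1}\sqrt{\rho}\,\Pi\sqrt{\rho}$. The construction is arranged so that two things hold. First, $\sqrt{\rho}\,\Pi\sqrt{\rho}$ shares its nonzero spectrum with $\Pi\sqrt{\rho}\,\sigma^{-1}\sqrt{\rho}\,\Pi\leq2^{\mu}\Pi$, so conjugating by $\sigma^{-1/2}$ gives $\widetilde{\rho}\leq(2^{\mu}/c)\sigma$, i.e.\ $D_{\max}(\widetilde{\rho}\Vert\sigma)\leq\mu+\log_{2}(1/c)$. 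Second, $c\widetilde{\rho}=\sqrt{\rho}\,\Pi\sqrt{\rho}\leq\rho$, so operator monotonicity of the square root gives $\operatorname{Tr}[\sqrt{\rho}\sqrt{\widetilde{\rho}}]\geq\sqrt{c}$, hence $F(\rho,\widetilde{\rho})\geq c$ and $\tfrac{1}{2}\Vert\rho-\widetilde{\rho}\Vert_{1}\leq\sqrt{1-c}$ by Fuchs--van de Graaf \cite{FG98}.

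The core of the argument is to bound the discarded weight $1-c=\operatorname{Tr}[(I-\Pi)\rho]$ by $D(\rho\Vert\sigma)$, and the mechanism, following \cite{JN12}, combines a Markov-type estimate with data processing. On the range of $I-\Pi$ the likelihood ratio exceeds $2^{\mu}$, which forces the discarded $\sigma$-weight $q:=\operatorname{Tr}[(I-\Pi)\sigma]$ to obey $q\leq2^{-\mu}(1-c)$, i.e.\ $\log_{2}((1-c)/q)\geq\mu$. Applying the data-processing inequality \eqref{eq:DP-QRE} across the binary measurement $\{\Pi,I-\Pi\}$ and writing $p=1-c$ gives $p\log_{2}(p/q)+(1-p)\log_{2}\tfrac{1-p}{1-q}\leq D(\rho\Vert\sigma)$; then using $\log_{2}x\leq(x-1)/\ln 2$ and $|p-q|\leq\tfrac{1}{2}\Vert\rho-\sigma\Vert_{1}$ one obtains
\begin{equation}
p\log_{2}(p/q)\leq D(\rho\Vert\sigma)+\tfrac{1}{2\ln 2}\Vert\rho-\sigma\Vert_{1}.
\end{equation}
This is precisely the point at which the observational divergence used in \cite{JN12} is traded for $D(\rho\Vert\sigma)+\tfrac{1}{2\ln 2}\Vert\rho-\sigma\Vert_{1}$. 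Combining with $\log_{2}(p/q)\geq\mu$ yields $\mu(1-c)\leq D(\rho\Vert\sigma)+\tfrac{1}{2\ln 2}\Vert\rho-\sigma\Vert_{1}$.

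To conclude I would set $\mu:=\varepsilon^{-2}\big(D(\rho\Vert\sigma)+\tfrac{1}{2\ln 2}\Vert\rho-\sigma\Vert_{1}\big)$, so that $1-c\leq\varepsilon^{2}$, giving $\tfrac{1}{2}\Vert\rho-\widetilde{\rho}\Vert_{1}\leq\varepsilon$ and $D_{\max}(\widetilde{\rho}\Vert\sigma)\leq\mu+\log_{2}(1/c)\leq\mu+\log_{2}\tfrac{1}{1-\varepsilon^{2}}$, which is exactly the asserted bound. The hard part will be the non-commutative implementation of the core estimate: the truncation must be chosen so that \emph{simultaneously} (a)~$c\widetilde{\rho}\leq\rho$ (needed for the fidelity bound), (b)~$\widetilde{\rho}\leq(2^{\mu}/c)\sigma$ (needed for the $D_{\max}$ bound), and (c)~the discarded $\sigma$-weight is genuinely at most $2^{-\mu}$ times the discarded $\rho$-weight (so that a single threshold suffices). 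Reconciling (a)--(c) in the absence of commutativity is the heart of the Jain--Nayak technique, and may require a more delicate construction than a single spectral projector --- for instance a dyadic decomposition of the spectrum of the relative operator with geometrically decaying trace-norm errors. Once that is in place, the remaining ingredients --- the elementary binary-entropy estimate above and Fuchs--van de Graaf --- are routine, and run parallel to the argument surrounding \eqref{eq:app-dmax-to-dheps-1}.
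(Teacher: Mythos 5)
Your overall architecture (truncate $\rho$ at a likelihood-ratio threshold $2^{\mu}$, bound the discarded mass by $\mu^{-1}[D(\rho\Vert\sigma)+\tfrac{1}{2\ln 2}\Vert\rho-\sigma\Vert_{1}]$ via a binary data-processing step, then choose $\mu=\varepsilon^{-2}[\cdots]$) parallels the paper's, and several of your individual steps check out: the $AB$-versus-$BA$ spectrum argument giving $\widetilde{\rho}\leq(2^{\mu}/c)\sigma$, the operator-monotonicity argument giving $F(\rho,\widetilde{\rho})\geq c$, and the elementary estimate $p\log_{2}(p/q)\leq D(\rho\Vert\sigma)+\tfrac{1}{2\ln2}\Vert\rho-\sigma\Vert_{1}$ are all correct. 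But there is a genuine gap exactly where you flag one: your step (c), the claim $q=\operatorname{Tr}[(I-\Pi)\sigma]\leq 2^{-\mu}\operatorname{Tr}[(I-\Pi)\rho]$, does not follow from your construction. The spectral condition on $X=\sqrt{\rho}\,\sigma^{-1}\sqrt{\rho}$ gives $(I-\Pi)X(I-\Pi)\geq 2^{\mu}(I-\Pi)$, which (by the same spectrum-sharing trick) controls the operator $\sigma^{-1/2}\sqrt{\rho}(I-\Pi)\sqrt{\rho}\,\sigma^{-1/2}$ and hence $\operatorname{Tr}[\Pi^{\prime}\sigma]$ for $\Pi^{\prime}$ the projection onto the support of \emph{that} operator --- which is a different projector from $I-\Pi$ when $\rho$ and $\sigma$ do not commute. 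Nothing ties $\operatorname{Tr}[(I-\Pi)\sigma]$ itself to the threshold, so the inequality $\log_{2}(p/q)\geq\mu$ is unsupported, and with it the key bound $\mu(1-c)\leq D(\rho\Vert\sigma)+\tfrac{1}{2\ln2}\Vert\rho-\sigma\Vert_{1}$. Since you explicitly defer this point to ``a more delicate construction,'' the proposal is a plan rather than a proof.

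The paper circumvents the non-commutativity obstruction by a different device: it never constructs a single universal $\widetilde{\rho}$. Using the SDP dual $D_{\max}(\tau\Vert\omega)=\log_{2}\sup\{\operatorname{Tr}[\Lambda\tau]:\Lambda\geq0,\operatorname{Tr}[\Lambda\omega]\leq1\}$ and the minimax theorem (the objective $\operatorname{Tr}[\Lambda\widetilde{\rho}]$ is bilinear and the feasible sets are convex, with the smoothing ball compact), it suffices to exhibit, for each \emph{fixed} dual variable $\Lambda$, a smoothing state $\rho^{\prime}$ that may depend on $\Lambda$. The truncating projector is then taken in the eigenbasis of $\Lambda$, namely $\Pi=\sum_{i\in\mathcal{S}}|\phi_{i}\rangle\langle\phi_{i}|$ with $\mathcal{S}=\{i:\langle\phi_{i}|\rho|\phi_{i}\rangle>2^{\lambda}\langle\phi_{i}|\sigma|\phi_{i}\rangle\}$; because $\Pi$, $\hat{\Pi}$, and $\hat{\Pi}\Lambda\hat{\Pi}$ are simultaneously diagonal in this basis, both the Markov-type estimate $\operatorname{Tr}[\Pi\rho]>2^{\lambda}\operatorname{Tr}[\Pi\sigma]$ and the closing bound $\operatorname{Tr}[\hat{\Pi}\Lambda\hat{\Pi}\rho]\leq2^{\lambda}\operatorname{Tr}[\Lambda\sigma]\leq2^{\lambda}$ reduce to classical statements about diagonal entries and hold trivially. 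To complete your route you would either have to actually prove your step (c) for some concrete truncation --- which is where all the difficulty lives --- or adopt the minimax swap so that the required commutativity can be engineered relative to each $\Lambda$.
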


\begin{proof}
The statement is trivially true if $\rho=\sigma$ or if $\operatorname{supp}%
(\rho)\not \subseteq \operatorname{supp}(\sigma)$. So going forward, we assume
that $\rho\neq\sigma$ and $\operatorname{supp}(\rho)\subseteq
\operatorname{supp}(\sigma)$. The SDP\ dual of $D_{\max}(\tau\Vert\omega)$ is
given by%
\begin{equation}
D_{\max}(\tau\Vert\omega)=\log_{2}\sup_{\Lambda \geq 0}\left\{  \operatorname{Tr}%
[\Lambda\tau]:\operatorname{Tr}[\Lambda\omega]\leq 1\right\}  ,
\label{eq:app-SDP-dual-dmax}%
\end{equation}
implying that%
\begin{equation}
D_{\max}^{\varepsilon}(\rho\Vert\sigma)=\log_{2}\inf_{\widetilde{\rho}%
:\frac{1}{2}\left\Vert \widetilde{\rho}-\rho\right\Vert _{1}\leq\varepsilon
}\sup_{\substack{\Lambda\ :\ \Lambda\geq0,\\\operatorname{Tr}[\Lambda
\sigma]\leq1}}\operatorname{Tr}[\Lambda\widetilde{\rho}].
\end{equation}
Since the objective function $\operatorname{Tr}[\Lambda\widetilde{\rho}]$ is
linear in $\Lambda$ and $\widetilde{\rho}$, the set $\{\Lambda:\Lambda
\geq0,\operatorname{Tr}[\Lambda\sigma]\leq1\}$ is compact and concave, and the
set%
\begin{equation}
\left\{  \widetilde{\rho}:\frac{1}{2}\left\Vert \widetilde{\rho}%
-\rho\right\Vert _{1}\leq\varepsilon,\ \widetilde{\rho}\geq
0,\ \operatorname{Tr}[\widetilde{\rho}]=1\right\}
\end{equation}
is compact and convex (due to convexity of normalized trace distance), the
minimax theorem applies and we find that%
\begin{equation}
D_{\max}^{\varepsilon}(\rho\Vert\sigma)=\log_{2} \sup_{\substack{\Lambda
\ :\ \Lambda\geq0,\\\operatorname{Tr}[\Lambda\sigma]\leq1}}\inf_{\widetilde
{\rho}:\frac{1}{2}\left\Vert \widetilde{\rho}-\rho\right\Vert _{1}%
\leq\varepsilon}\operatorname{Tr}[\Lambda\widetilde{\rho}].
\end{equation}
For a fixed operator $\Lambda\geq0$ with spectral decomposition%
\begin{equation}
\Lambda=\sum_{i}\lambda_{i}|\phi_{i}\rangle\langle\phi_{i}|,
\end{equation}
let us define the following set, for a choice of $\lambda>0$ to be specified
later:%
\begin{equation}
\mathcal{S}:=\left\{  i:\langle\phi_{i}|\rho|\phi_{i}\rangle>2^{\lambda
}\langle\phi_{i}|\sigma|\phi_{i}\rangle\right\}  .
\end{equation}
Let%
\begin{equation}
\Pi=\sum_{i\in\mathcal{S}}|\phi_{i}\rangle\langle\phi_{i}|.
\label{eq:app-proj-dmaxeps-up-bnd}%
\end{equation}
Then from the definition, we find that%
\begin{equation}
\operatorname{Tr}[\Pi\rho]>2^{\lambda}\operatorname{Tr}[\Pi\sigma],
\end{equation}
which implies that%
\begin{equation}
\frac{\operatorname{Tr}[\Pi\rho]}{\operatorname{Tr}[\Pi\sigma]}>2^{\lambda}.
\label{eq:app-ratio-probs-ineq}%
\end{equation}
Now consider from the data processing inequality under the channel%
\begin{equation}
\Delta(\omega):=\operatorname{Tr}[\Pi\omega]|0\rangle\langle
0|+\operatorname{Tr}[\hat{\Pi}\omega]|1\rangle\langle1|,
\end{equation}
where%
\begin{equation}
\hat{\Pi}:=I-\Pi,
\end{equation}
that%
\begin{align}
&  D(\rho\Vert\sigma)\nonumber\\
&  \geq D(\Delta(\rho)\Vert\Delta(\sigma))\nonumber\\
&  =\operatorname{Tr}[\Pi\rho]\log_{2}\!\left(  \frac{\operatorname{Tr}%
[\Pi\rho]}{\operatorname{Tr}[\Pi\sigma]}\right)  +\operatorname{Tr}[\hat{\Pi
}\rho]\log_{2}\!\left(  \frac{\operatorname{Tr}[\hat{\Pi}\rho]}%
{\operatorname{Tr}[\hat{\Pi}\sigma]}\right) \nonumber\\
&  =\operatorname{Tr}[\Pi\rho]\log_{2}\!\left(  \frac{\operatorname{Tr}%
[\Pi\rho]}{\operatorname{Tr}[\Pi\sigma]}\right)  +\frac{1}{\ln2}\left(
\operatorname{Tr}[\Pi\sigma]-\operatorname{Tr}[\Pi\rho]\right) \nonumber\\
&  \quad+\operatorname{Tr}[\hat{\Pi}\rho]\log_{2}\!\left(  \frac
{\operatorname{Tr}[\hat{\Pi}\rho]}{\operatorname{Tr}[\hat{\Pi}\sigma]}\right)
+\frac{1}{\ln2}\left(  \operatorname{Tr}[\hat{\Pi}\sigma]-\operatorname{Tr}%
[\hat{\Pi}\rho]\right) \nonumber\\
&  \geq\operatorname{Tr}[\Pi\rho]\log_{2}\!\left(  \frac{\operatorname{Tr}%
[\Pi\rho]}{\operatorname{Tr}[\Pi\sigma]}\right)  +\frac{1}{\ln2}\left(
\operatorname{Tr}[\Pi\sigma]-\operatorname{Tr}[\Pi\rho]\right) \nonumber\\
&  \geq\lambda\operatorname{Tr}[\Pi\rho]+\frac{1}{\ln2}\left(
\operatorname{Tr}[\Pi\sigma]-\operatorname{Tr}[\Pi\rho]\right)  ,
\end{align}
where the second inequality follows because%
\begin{align}
&  x\log_{2}(x/y)+\frac{1}{\ln2}\left(  y-x\right) \notag \\
&  =\frac{1}{\ln2}\left[  x\ln(x/y)+y-x\right]  \geq0,
\end{align}
for all $x,y\geq0$, and the last inequality follows from
\eqref{eq:app-ratio-probs-ineq}. Then we find that%
\begin{align}
\operatorname{Tr}[\Pi\rho]  &  \leq\lambda^{-1}\left(  D(\rho\Vert
\sigma)+\frac{1}{\ln2}\operatorname{Tr}[\Pi\rho]-\operatorname{Tr}[\Pi
\sigma]\right) \\
&  \leq\lambda^{-1}\left(  D(\rho\Vert\sigma)+\frac{1}{2\ln2}\left\Vert
\rho-\sigma\right\Vert _{1}\right)  .
\end{align}
Pick%
\begin{equation}
\lambda=\frac{1}{\varepsilon^{2}}\left[  D(\rho\Vert\sigma)+\frac{1}{2\ln
2}\left\Vert \rho-\sigma\right\Vert _{1}\right]  ,
\end{equation}
and we conclude from the above that%
\begin{equation}
\operatorname{Tr}[\Pi\rho]\leq\varepsilon^{2}.
\end{equation}
So this means that%
\begin{equation}
\operatorname{Tr}[\hat{\Pi}\rho]\geq1-\varepsilon^{2}.
\label{eq:app-final-steps-1}%
\end{equation}
Thus, the state%
\begin{equation}
\rho^{\prime}:=\frac{\hat{\Pi}\rho\hat{\Pi}}{\operatorname{Tr}[\hat{\Pi}\rho]}
\label{eq:app-rho-prime-state}%
\end{equation}
is such that \cite[Lemma~9.4.1]{W17book}%
\begin{equation}
F(\rho,\rho^{\prime})\geq1-\varepsilon^{2},
\end{equation}
and in turn that \cite{FG98}%
\begin{equation}
\frac{1}{2}\left\Vert \rho-\rho^{\prime}\right\Vert _{1}\leq\varepsilon.
\end{equation}
We also have that%
\begin{equation}
\rho^{\prime}\leq\frac{\hat{\Pi}\rho\hat{\Pi}}{1-\varepsilon^{2}}.
\end{equation}

Now let $\Lambda$ be an arbitrary operator satisfying $\Lambda\geq0$ and
$\operatorname{Tr}[\Lambda\sigma]\leq1$, and let $\Pi$ be the projection
defined in \eqref{eq:app-proj-dmaxeps-up-bnd}\ for this choice of $\Lambda$.
Then we find that%
\begin{align}
\left(  1-\varepsilon^{2}\right)  \operatorname{Tr}[\Lambda\rho^{\prime}]  &
\leq\operatorname{Tr}[\Lambda\hat{\Pi}\rho\hat{\Pi}]\\
&  =\operatorname{Tr}[\hat{\Pi}\Lambda\hat{\Pi}\rho]\\
&  =\sum_{i\notin\mathcal{S}}\lambda_{i}\langle\phi_{i}|\rho|\phi_{i}\rangle\\
&  \leq2^{\lambda}\sum_{i\notin\mathcal{S}}\lambda_{i}\langle\phi_{i}%
|\sigma|\phi_{i}\rangle\\
&  \leq2^{\lambda}\operatorname{Tr}[\Lambda\sigma]\\
&  \leq2^{\lambda}.
\end{align}
Thus, we have found the following uniform bound for any operator $\Lambda$
satisfying $\Lambda\geq0$ and $\operatorname{Tr}[\Lambda\sigma]\leq1$, with
$\rho^{\prime}$ the state in \eqref{eq:app-rho-prime-state} depending on
$\Lambda$ and\ satisfying $\frac{1}{2}\left\Vert \rho-\rho^{\prime}\right\Vert
_{1}\leq\varepsilon$:%
\begin{equation}
\operatorname{Tr}[\Lambda\rho^{\prime}]\leq2^{\lambda+\log_{2}\left(  \frac
{1}{1-\varepsilon^{2}}\right)  }.
\end{equation}
Then it follows that%
\begin{align}
D_{\max}^{\varepsilon}(\rho\Vert\sigma)  &  =\log_{2}\sup_{\substack{\Lambda
\ :\ \Lambda\geq0,\\\operatorname{Tr}[\Lambda\sigma]\leq1}}\inf_{\widetilde
{\rho}:\frac{1}{2}\left\Vert \widetilde{\rho}-\rho\right\Vert _{1}%
\leq\varepsilon}\operatorname{Tr}[\Lambda\widetilde{\rho}]\\
&  \leq\log_{2}\sup_{\substack{\Lambda\ :\ \Lambda\geq0,\\\operatorname{Tr}%
[\Lambda\sigma]\leq1}}\operatorname{Tr}[\Lambda\rho^{\prime}]\\
&  \leq\lambda+\log_{2}\left(  \frac{1}{1-\varepsilon^{2}}\right)  .
\label{eq:app-final-steps-last}%
\end{align}
This concludes the proof.
\end{proof}

\bigskip

The proof of the following proposition follows the same proof approach of
\cite[Theorem~1]{JN12} (as recalled above), but instead employs the sandwiched
R\'{e}nyi relative entropy and its data processing inequality. The following
proposition was also reported recently in \cite{ABJT19}:

\begin{proposition}
\label{prop:smooth-dmax-sandwiched-up-bnd}Given states $\rho$ and $\sigma$,
the following bound holds for all $\alpha>1$ and $\varepsilon\in\left(
0,1\right)  $:%
\begin{multline}
D_{\max}^{\varepsilon}(\rho\Vert\sigma)\leq\widetilde{D}_{\alpha}(\rho
\Vert\sigma)\label{eq:app-dmaxeps-up-sRE}\\
+\frac{1}{\alpha-1}\log_{2}\!\left(  \frac{1}{\varepsilon^{2}}\right)
+\log_{2}\!\left(  \frac{1}{1-\varepsilon^{2}}\right)  .
\end{multline}

\end{proposition}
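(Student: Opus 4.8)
The plan is to mimic the proof of Proposition~\ref{prop:smoothdmax-up-bnd-rel-ent}, which itself adapts \cite[Theorem~1]{JN12}, but to replace the quantum relative entropy and its data processing inequality with the sandwiched R\'{e}nyi relative entropy $\widetilde{D}_{\alpha}$ and its data processing inequality \eqref{eq:app-qdp-sandwiched-Renyi}. First I would dispose of the trivial cases: if $\rho=\sigma$ the bound holds trivially, and if $\operatorname{supp}(\rho)\not\subseteq\operatorname{supp}(\sigma)$ then $\widetilde{D}_{\alpha}(\rho\Vert\sigma)=\infty$ and there is nothing to prove; so I assume $\rho\neq\sigma$ and $\operatorname{supp}(\rho)\subseteq\operatorname{supp}(\sigma)$.

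Next, invoking the SDP dual of $D_{\max}$ in \eqref{eq:app-SDP-dual-dmax} together with the definition \eqref{eq:smooth-max-rel-ent}, and then the minimax theorem (justified exactly as in Proposition~\ref{prop:smoothdmax-up-bnd-rel-ent}: the objective is bilinear, the trace-distance ball is convex and compact, and the feasible set $\{\Lambda\geq0:\operatorname{Tr}[\Lambda\sigma]\leq1\}$ is convex and compact after restriction to $\operatorname{supp}(\sigma)$), I would write
\begin{multline}
D_{\max}^{\varepsilon}(\rho\Vert\sigma)=\log_{2}\inf_{\widetilde{\rho}\,:\,\frac{1}{2}\Vert\widetilde{\rho}-\rho\Vert_{1}\leq\varepsilon}\ \sup_{\substack{\Lambda\geq0:\\\operatorname{Tr}[\Lambda\sigma]\leq1}}\operatorname{Tr}[\Lambda\widetilde{\rho}]\\
=\log_{2}\sup_{\substack{\Lambda\geq0:\\\operatorname{Tr}[\Lambda\sigma]\leq1}}\ \inf_{\widetilde{\rho}\,:\,\frac{1}{2}\Vert\widetilde{\rho}-\rho\Vert_{1}\leq\varepsilon}\operatorname{Tr}[\Lambda\widetilde{\rho}].
\end{multline}
For a fixed feasible $\Lambda=\sum_{i}\lambda_{i}|\phi_{i}\rangle\langle\phi_{i}|$ and a parameter $\lambda>0$ to be chosen, I would introduce the ``bad'' set $\mathcal{S}:=\{i:\langle\phi_{i}|\rho|\phi_{i}\rangle>2^{\lambda}\langle\phi_{i}|\sigma|\phi_{i}\rangle\}$ and the projection $\Pi:=\sum_{i\in\mathcal{S}}|\phi_{i}\rangle\langle\phi_{i}|$, so that $\operatorname{Tr}[\Pi\rho]>2^{\lambda}\operatorname{Tr}[\Pi\sigma]$.

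The essential new ingredient is to apply data processing for $\widetilde{D}_{\alpha}$ under the binary measurement channel $\Delta(\omega):=\operatorname{Tr}[\Pi\omega]|0\rangle\langle0|+\operatorname{Tr}[(I-\Pi)\omega]|1\rangle\langle1|$. Since $\Delta(\rho)$ and $\Delta(\sigma)$ are classical two-point distributions, $\widetilde{D}_{\alpha}(\Delta(\rho)\Vert\Delta(\sigma))$ equals the explicit expression $\tfrac{1}{\alpha-1}\log_{2}\!\big(\operatorname{Tr}[\Pi\rho]^{\alpha}\operatorname{Tr}[\Pi\sigma]^{1-\alpha}+\operatorname{Tr}[(I-\Pi)\rho]^{\alpha}\operatorname{Tr}[(I-\Pi)\sigma]^{1-\alpha}\big)$; because $\alpha>1$ I can drop the nonnegative second term and then combine $\operatorname{Tr}[\Pi\sigma]<2^{-\lambda}\operatorname{Tr}[\Pi\rho]$ with $1-\alpha<0$ to conclude $\widetilde{D}_{\alpha}(\rho\Vert\sigma)\geq\lambda+\tfrac{1}{\alpha-1}\log_{2}\operatorname{Tr}[\Pi\rho]$, i.e.\ $\operatorname{Tr}[\Pi\rho]\leq 2^{(\alpha-1)(\widetilde{D}_{\alpha}(\rho\Vert\sigma)-\lambda)}$. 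Choosing $\lambda=\widetilde{D}_{\alpha}(\rho\Vert\sigma)+\tfrac{1}{\alpha-1}\log_{2}(1/\varepsilon^{2})$ then forces $\operatorname{Tr}[\Pi\rho]\leq\varepsilon^{2}$, hence $\operatorname{Tr}[(I-\Pi)\rho]\geq1-\varepsilon^{2}$. From here the argument runs exactly as in Proposition~\ref{prop:smoothdmax-up-bnd-rel-ent}: the state $\rho^{\prime}:=(I-\Pi)\rho(I-\Pi)/\operatorname{Tr}[(I-\Pi)\rho]$ satisfies $F(\rho,\rho^{\prime})\geq1-\varepsilon^{2}$ by \cite[Lemma~9.4.1]{W17book}, hence $\tfrac{1}{2}\Vert\rho-\rho^{\prime}\Vert_{1}\leq\varepsilon$ by \cite{FG98}; and for every feasible $\Lambda$ one has the uniform estimate $(1-\varepsilon^{2})\operatorname{Tr}[\Lambda\rho^{\prime}]\leq\operatorname{Tr}[(I-\Pi)\Lambda(I-\Pi)\rho]=\sum_{i\notin\mathcal{S}}\lambda_{i}\langle\phi_{i}|\rho|\phi_{i}\rangle\leq2^{\lambda}\sum_{i\notin\mathcal{S}}\lambda_{i}\langle\phi_{i}|\sigma|\phi_{i}\rangle\leq2^{\lambda}\operatorname{Tr}[\Lambda\sigma]\leq2^{\lambda}$. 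Substituting this bound into the $\sup$-$\inf$ form gives $D_{\max}^{\varepsilon}(\rho\Vert\sigma)\leq\lambda+\log_{2}(1/(1-\varepsilon^{2}))$, which is \eqref{eq:app-dmaxeps-up-sRE}.

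I expect the main obstacle to be careful bookkeeping rather than anything conceptual: one must track the signs of $\alpha-1$ and $1-\alpha$ throughout the classical R\'{e}nyi estimate so that every inequality points in the favorable direction, and one must note the degenerate possibility $\operatorname{Tr}[\Pi\sigma]=0$, which by the support assumption forces $\operatorname{Tr}[\Pi\rho]=0$ and renders that step vacuous. A secondary point requiring care is the justification of the minimax swap, in particular the compactness of the feasible set for $\Lambda$, but this can be handled exactly as in the proof of Proposition~\ref{prop:smoothdmax-up-bnd-rel-ent}.
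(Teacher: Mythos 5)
Your proposal is correct and follows essentially the same route as the paper's own proof: the identical SDP-dual/minimax setup and projection $\Pi$ carried over from Proposition~\ref{prop:smoothdmax-up-bnd-rel-ent}, the same application of data processing for $\widetilde{D}_{\alpha}$ under the binary measurement channel with the second term dropped, the same choice $\lambda=\widetilde{D}_{\alpha}(\rho\Vert\sigma)+\frac{1}{\alpha-1}\log_{2}(1/\varepsilon^{2})$ forcing $\operatorname{Tr}[\Pi\rho]\leq\varepsilon^{2}$, and the same concluding steps. Your remarks on sign bookkeeping and the degenerate case $\operatorname{Tr}[\Pi\sigma]=0$ are sound and, if anything, slightly more careful than the paper's treatment.
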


\begin{proof}
The first steps are exactly the same as
\eqref{eq:app-SDP-dual-dmax}--\eqref{eq:app-ratio-probs-ineq}. Now consider
from the data processing inequality under the channel%
\begin{equation}
\Delta(\omega):=\operatorname{Tr}[\Pi\omega]|0\rangle\langle
0|+\operatorname{Tr}[\hat{\Pi}\omega]|1\rangle\langle1|
\end{equation}
that%
\begin{align}
&  \widetilde{D}_{\alpha}(\rho\Vert\sigma)\nonumber\\
&  \geq\widetilde{D}_{\alpha}(\Delta(\rho)\Vert\Delta(\sigma))\\
&  =\frac{1}{\alpha-1}\log_{2}\left(
\begin{array}
[c]{c}%
\left(  \operatorname{Tr}[\Pi\rho]\right)  ^{\alpha}\left(  \operatorname{Tr}%
[\Pi\sigma]\right)  ^{1-\alpha}\\
+\left(  \operatorname{Tr}[\hat{\Pi}\rho]\right)  ^{\alpha}\left(
\operatorname{Tr}[\hat{\Pi}\sigma]\right)  ^{1-\alpha}%
\end{array}
\right) \\
&  \geq\frac{1}{\alpha-1}\log_{2}\left(  \left(  \operatorname{Tr}[\Pi
\rho]\right)  ^{\alpha}\left(  \operatorname{Tr}[\Pi\sigma]\right)
^{1-\alpha}\right) \\
&  =\frac{1}{\alpha-1}\log_{2}\left(  \operatorname{Tr}[\Pi\rho]\left(
\frac{\operatorname{Tr}[\Pi\rho]}{\operatorname{Tr}[\Pi\sigma]}\right)
^{\alpha-1}\right) \\
&  =\frac{1}{\alpha-1}\log_{2}\left(  \operatorname{Tr}[\Pi\rho]\right)
+\log_{2}\left(  \frac{\operatorname{Tr}[\Pi\rho]}{\operatorname{Tr}[\Pi
\sigma]}\right) \\
&  \geq\frac{1}{\alpha-1}\log_{2}\left(  \operatorname{Tr}[\Pi\rho]\right)
+\lambda.
\end{align}
Now picking%
\begin{equation}
\lambda=\widetilde{D}_{\alpha}(\rho\Vert\sigma)+\frac{1}{\alpha-1}\log
_{2}\left(  \frac{1}{\varepsilon^{2}}\right)  ,
\end{equation}
we conclude that%
\begin{equation}
\operatorname{Tr}[\Pi\rho]\leq\varepsilon^{2}.
\end{equation}
The rest of the proof then proceeds as in
\eqref{eq:app-final-steps-1}--\eqref{eq:app-final-steps-last}, and we find
that $D_{\max}^{\varepsilon}(\rho\Vert\sigma)\leq\lambda+\log_{2}\left(
1/\left[  1-\varepsilon^{2}\right]  \right)  $.
\end{proof}

\section{Resource theory of asymmetric distinguishability based on infidelity}

\label{app:infidelity}In this paper, we employed the normalized trace distance
throughout as the measure for approximation in approximate box
transformations. As emphasized in the main text, the primary reason for doing
so is due to the interpretation of normalized trace distance as the error in a
single-shot experiment, as discussed around~\eqref{eq:trace-distance-error}.
Another advantage is that the optimizations corresponding to the one-shot
operational tasks of distillation and dilution are characterized by
semi-definite programs in both the theory presented here and in \cite{WW19}.

One could alternatively employ the infidelity $1-F(\rho,\widetilde{\rho})$ as
the measure for approximation. The main disadvantage in doing so is that the
interpretation in terms of error is not as strong as it is for normalized
trace distance. Furthermore, in the resource theory of asymmetric
distinguishability for quantum channels, it is not clear whether the
optimizations for the operational tasks of distillation and dilution are
characterized by semi-definite programs~\cite{WW19}.

However, there are some advantages to using the infidelity, which we highlight
briefly here while avoiding detailed proofs. For the rest of this appendix, we
employ the following shorthand:%
\begin{equation}
\rho\approx_{\varepsilon_{F}}\widetilde{\rho}\qquad\Longleftrightarrow
\qquad1-F(\widetilde{\rho},\rho)\leq\varepsilon_{F},
\end{equation}
using the notation $\varepsilon_{F}$ to emphasize that the error is with
respect to infidelity.

First, it is worthwhile to note that the one-shot distillable
distinguishability is unchanged:%
\begin{equation}
D_{d}^{\varepsilon_{F}}(\rho,\sigma)=D_{\min}^{\varepsilon_{F}}(\rho
\Vert\sigma),
\end{equation}
while the one-shot distinguishability cost becomes%
\begin{equation}
D_{c}^{\varepsilon_{F}}(\rho,\sigma)=D_{\max}^{\varepsilon_{F}}(\rho
\Vert\sigma),
\end{equation}
where%
\begin{equation}
D_{\max}^{\varepsilon_{F}}(\rho\Vert\sigma):=\inf_{\widetilde{\rho
}\,:\,1-F(\widetilde{\rho},\rho)\leq\varepsilon}D_{\max}(\widetilde{\rho}%
\Vert\sigma).
\end{equation}
In the above, the superscript $\varepsilon_{F}$ serves to distinguish
$D_{\max}^{\varepsilon_{F}}(\rho\Vert\sigma)$ from the smooth max-relative
entropy in \eqref{eq:smooth-max-rel-ent}. Then we have the following
expansions:%
\begin{multline}
D_{d}^{\varepsilon_{F}}((\rho^{\otimes n},\sigma^{\otimes n}))=\\
nD(\rho\Vert\sigma)+\sqrt{nV(\rho\Vert\sigma)}\Phi^{-1}(\varepsilon
_{F})+O(\log n).
\end{multline}%
\begin{multline}
D_{c}^{\varepsilon_{F}}((\rho^{\otimes n},\sigma^{\otimes n}%
))=\label{eq:app-2nd-order-dcost-fid}\\
nD(\rho\Vert\sigma)-\sqrt{nV(\rho\Vert\sigma)}\Phi^{-1}(\varepsilon
_{F})+O(\log n),
\end{multline}
with the key difference being that the second-order characterization of the
$\varepsilon_{F}$-approximate distinguishability cost is now tight. The
inequality in \eqref{eq:dh-dmax-relation-operational} becomes as follows:%
\begin{multline}
D_{\min}^{\varepsilon_{F}}(\rho\Vert\sigma)\leq D_{\max}^{\varepsilon
_{F}^{\prime}}(\rho\Vert\sigma)\\
-\log_{2}\!\left(  1-\left[  \sqrt{\varepsilon_{F}}+\sqrt{\varepsilon
_{F}^{\prime}}\right]  ^{2}\right)  ,
\end{multline}
for $\varepsilon_{F},\varepsilon_{F}^{\prime}\geq0$ and $\sqrt{\varepsilon
_{F}}+\sqrt{\varepsilon_{F}^{\prime}}<1$, which follows by employing the
triangle inequality for the sine distance \cite{R02,R03,GLN04,R06}. The
inequality in \eqref{eq:app-dmax-dheps-relation-specs}\ becomes as follows:%
\begin{multline}
D_{\max}^{\varepsilon_{F}}(\rho\Vert\sigma)\leq D_{\min}^{1-\varepsilon_{F}%
}(\rho\Vert\sigma)\\
+\log_{2}\left\vert \text{spec}(\sigma)\right\vert +\log_{2}\!\left(  \frac
{1}{1-\varepsilon_{F}}\right)  ,
\end{multline}
which is a key reason why we obtain \eqref{eq:app-2nd-order-dcost-fid}. The
inequality in \eqref{eq:app-lower-dmaxeps-da} becomes
\begin{equation}
D_{\max}^{\varepsilon_{F}}(\rho\Vert\sigma)\geq\widetilde{D}_{\alpha}%
(\rho\Vert\sigma)+\frac{\alpha}{\alpha-1}\log_{2}\!\left(  \frac
{1}{1-\varepsilon_{F}}\right)  ,
\end{equation}
for $\alpha\in\lbrack1/2,1)$, while \eqref{eq:app-dmaxeps-up-sRE}\ becomes%
\begin{multline}
D_{\max}^{\varepsilon_{F}}(\rho\Vert\sigma)\leq\widetilde{D}_{\alpha}%
(\rho\Vert\sigma)+\log_{2}\!\left(  \frac{1}{1-\varepsilon_{F}}\right)  \\
+\frac{1}{\alpha-1}\log_{2}\!\left(  \frac{1}{\varepsilon_{F}}\right)  ,
\end{multline}
for $\alpha\in(1,\infty)$. The converse bound in
Proposition~\ref{prop:strong-converse-exp} becomes%
\begin{align}
&  -\frac{1}{n}\log_{2}(1-\varepsilon_{F})\nonumber\\
&  \geq\left(  \frac{1-\alpha}{\alpha}\right)  \left(  R\ \widetilde
{D}_{\alpha}(\tau\Vert\omega)-\widetilde{D}_{\beta}(\rho\Vert\sigma)\right)
\label{eq:app-str-conv-exp}\\
&  =\left(  \frac{\beta-1}{\beta}\right)  \left(  R\ \widetilde{D}_{\alpha
}(\tau\Vert\omega)-\widetilde{D}_{\beta}(\rho\Vert\sigma)\right)  ,
\end{align}
holding for an arbitrary $(n,m,\varepsilon_{F})$\ box transformation protocol
(i.e., so that $\mathcal{N}^{(n)}(\rho^{\otimes n})\approx_{\varepsilon_{F}%
}\tau^{\otimes m}$ and $\mathcal{N}^{(n)}(\sigma^{\otimes n})=\omega^{\otimes
m}$), $\alpha\in(1/2,1)$, and $\beta=\alpha/\left(  2\alpha-1\right)  $. For
distinguishability distillation with $\tau=|0\rangle\langle0|$ and $\omega
=\pi$, so that $\widetilde{D}_{\alpha}(\tau\Vert\omega)=1$, this bound reduces
to%
\begin{equation}
-\frac{1}{n}\log_{2}(1-\varepsilon_{F})\geq\left(  \frac{\beta-1}{\beta
}\right)  \left(  R-\widetilde{D}_{\beta}(\rho\Vert\sigma)\right)  ,
\end{equation}
holding for all $\beta>1$, which is the optimal strong converse exponent, as
shown in \cite{MO13}. For distinguishability dilution with $\rho
=|0\rangle\langle0|$ and $\sigma=\pi$, so that $\widetilde{D}_{\beta}%
(\rho\Vert\sigma)=1$, and by multiplying \eqref{eq:app-str-conv-exp}\ by $n/m$
and setting the rate $S=n/m$, the bound becomes%
\begin{equation}
-\frac{1}{m}\log_{2}(1-\varepsilon_{F})\geq\left(  \frac{1-\alpha}{\alpha
}\right)  \left(  \widetilde{D}_{\alpha}(\tau\Vert\omega)-S\right)  ,
\end{equation}
holding for all $\alpha\in\lbrack1/2,1)$. It is an interesting open question
to determine the optimal strong converse exponent for distinguishability dilution.

\end{document}